\documentclass[10pt,reqno,oneside]{amsart}

\usepackage[T1]{fontenc} \usepackage{amsmath} \usepackage{amsfonts} \usepackage{amsthm} \usepackage{amssymb} \usepackage{graphicx} \graphicspath{{figures/}} \usepackage{epstopdf}
\usepackage{hyperref} \hypersetup{hidelinks} \usepackage{cite} \usepackage{calc} \usepackage{geometry}

\usepackage{tikz} \usepackage{tikz-cd}
\usetikzlibrary{%
  matrix,%
  calc,%
  arrows%
}

\definecolor{curvecol}{HTML}{1F3B73} \definecolor{altcol}{HTML}{A02020} \definecolor{guidecol}{HTML}{C8C8C8} \definecolor{eqcol}{HTML}{8A8A8A} \tikzset{ guide/.style      ={draw=guidecol, line width=0.4pt}, guidefar/.style   ={draw=guidecol, line width=0.35pt, opacity=0.40}, equator/.style    ={draw=eqcol, line width=0.55pt, dash pattern=on 2.4pt off 1.8pt}, equatorfar/.style ={draw=eqcol, line width=0.5pt, opacity=0.40, dash pattern=on 2.4pt off 1.8pt}, curve/.style      ={draw=curvecol, line width=0.7pt, line cap=round, line join=round}, curvefar/.style   ={draw=curvecol, line width=0.6pt, opacity=0.20, line cap=round, line join=round}, curvealt/.style   ={draw=altcol, line width=0.32pt, line cap=round, line join=round}, curvealtfar/.style={draw=altcol, line width=0.3pt, opacity=0.20, line cap=round, line join=round}, polarcap/.style   ={draw=altcol, line width=0.5pt, dash pattern=on 1.7pt off 1.3pt}, polarcapfar/.style={draw=altcol, line width=0.45pt, opacity=0.40, dash pattern=on 1.7pt off 1.3pt}, sepline/.style    ={draw=eqcol, line width=0.5pt, dash pattern=on 2pt off 2pt}, geoline/.style    ={draw=eqcol, line width=0.5pt, dash pattern=on 4pt off 2.5pt}, axis/.style       ={draw=black!55, line width=0.5pt}, } \newlength{\sphrad} 
\newlength{\panelrad} 

\newcommand{\ncd}{\newcommand} \ncd{\mrm}    {\mathrm} \ncd{\beq} {\begin{equation}} \ncd{\eeq} {\end{equation}} \ncd{\nn}{\nonumber}

\def\d{{\rm d}}  \def\iiota{\dot\iota}

     \def\Reals{ \mathbb{R}} \def\Sph{ \mathbb{S}} \def\basis[#1]{\frac{\partial}{\partial #1}} \def\dt[#1]{\frac{\d}{\d #1}} \def\GTM{\Gamma\left(TM \right)}  \def\sn{\,{\rm sn}} \def\cn{\,{\rm cn}} \def\dn{\,{\rm dn}} \def\lcan{\lambda_{\mrm{can}}}

\newtheorem{prop}{Proposition} \newtheorem{theo}{Theorem} \newtheorem{corollary}{Corollary} \newtheorem{lemma}{Lemma} \newtheorem{mydef}{Definition} \theoremstyle{remark} \newtheorem{remark}{Remark}

\begin{document}

\title[Closure of electromagnetic curves on the sphere]{A closure criterion for electromagnetic curves on the sphere in a uniform ambient field}

\author{C\'esar S. L\'opez-Monsalvo} \address{Departamento de Ciencias B\'asicas, Universidad Aut\'onoma Metropolitana -- Azcapotzalco, Avenida San Pablo 420, Colonia Nueva El Rosario, Azcapotzalco 02128, Ciudad de M\'exico, M\'exico} \email{cslm@azc.uam.mx} \urladdr{https://orcid.org/0000-0002-0378-0415}

\keywords{Electromagnetic curves, Magnetic geodesics, Rotation number, Sternberg symplectic structure, Noether symmetries, Contact structures, Closed Reeb orbits}

\subjclass[2020]{53D10, 53C22, 37J35, 78A35}

\begin{abstract} We consider the motion of a charged test particle confined to the round unit sphere in a uniform ambient magnetic field. We use an extension of Noether's theorem for systems with magnetic forces to reduce the problem to a quadrature. The rotation number of a trajectory, the azimuth it gains over one oscillation in latitude, is a complete elliptic integral of the third kind in Legendre form. A trajectory closes if and only if that advance is a rational multiple of a full turn. The criterion holds on every level set of the two first integrals. Then, we differentiate the rotation number with respect to the half-cyclotron frequency and sign that derivative on each of the two branches into which a level set divides. The rotation number sees the charge, the mass, the field and the speed through a single dimensionless ratio, that of the half-cyclotron frequency to the speed. A closed trajectory therefore fixes a value of that ratio. We say how many values carry a given rational winding. The poles are attainable on a single level set, where the motion reduces to a pendulum. The field has no flux through the sphere, so it is globally exact. We compute the Ma\~n\'e strict critical value of the resulting exact magnetic flow. Above that value the closed trajectories are closed Reeb orbits of an explicit contact form. \end{abstract}

\maketitle

%%============================================================
\section{Introduction} \label{sec.intro}
%%============================================================

\emph{When does a charged particle confined to a sphere retrace its own path?} The system that raises the question is an old one. Place the unit sphere in a uniform ambient magnetic field, restrict that field to the surface, and let a charge move on it. Indeed, it is the simplest configuration in which a curved surface meets a uniform field, and the motion it produces is neither geodesic nor planar. Most of the trajectories precess indefinitely and fill a band between two parallels of latitude. A few of them close only for particular combinations of the charge, the mass, the field and the speed, which the geometry alone picks out of a continuum. Determining that set is the problem we address here.

Saksida formulated the motion in terms of the Neumann system and the spherical pendulum~\cite{saksida2002neumann}. Dragovi\'c, Gaji\'c and Jovanovi\'c obtained this flow as a reduction of a nonholonomic problem, proved it Liouville integrable on $T^*\Sph^2$, and integrated its equations of motion explicitly in elliptic functions~\cite{dragovic2023gyroscopic}. The trajectories were therefore known in closed form. However, in higher dimensions the same field --- the restriction to $\Sph^n$ of a constant $2$-form of $\Reals^{n+1}$ --- proved harder. Its integrability was settled only recently, first by the same authors for $n \leq 5$~\cite{dragovic2025homogeneous}, then in every dimension. Two arguments of the same date settle it, one of Bolsinov, Konyaev and Matveev reducing the flow to a degenerate Neumann system~\cite{bolsinov2025integrability}, one of Dragovi\'c, Gaji\'c and Jovanovi\'c giving a Lax representation~\cite{dragovic2025lax}. Kobtsev and Kudryavtseva classified the Liouville fibration of the rotationally invariant case on the two-sphere by its Fomenko--Zieschang invariant~\cite{kobtsev2025bifurcations}.

It is worth noting how the configuration studied here differs from the one on which the literature has been centred. On an oriented surface the area form is closed, so every constant multiple of it is a magnetic field. Barros, Romero, Cabrerizo and Fern\'andez showed that its trajectories are exactly the curves of constant geodesic curvature~\cite{barros2005gauss}. Cabrerizo later placed that characterisation within a broader classification~\cite{cabrerizo2013magnetic}. Thus, on the round sphere those are the small circles, and every one of them closes at every speed. The field a sphere inherits from a uniform ambient one carries the cosine of the polar angle against the area form instead of a constant, so closure becomes the question we answer here. In a different direction, the existence of closed magnetic geodesics on surfaces is known under hypotheses far weaker than ours~\cite{ginzburg2007periodic,usher2008floer,schneider2009alexandrov,asselle2016existence}, by arguments that settle existence while leaving the orbits themselves unnamed. Cort\'es and Cort\'es-Poza treat a charge restricted to a sphere in the field of a dipole and reach a quadrature of the same shape~\cite{cortes2013spherically}.

What decides whether a trajectory returns to its starting point is its \emph{rotation number}, the azimuth it gains while its latitude completes one oscillation. That number is an elliptic \emph{integral} rather than an elliptic function, and to the best of our knowledge it has not been computed for this system. We compute the rotation number and the closure criterion that follows.

The advance over one period of the polar angle is a complete elliptic integral of the third kind, which we reduce to Legendre form in the modulus and characteristic that the turning points fix (Theorem~\ref{theo.rotation}). A trajectory returns to its starting point exactly when that advance is a rational multiple of a full turn (Theorem~\ref{theo.closure}). Both first integrals stay free in that statement, so it reaches every trajectory the field carries. Moreover, the rotation number sees the charge, the mass, the field and the speed through a single dimensionless ratio. Hence, closed trajectories fix the value of that ratio (Proposition~\ref{prop.scaling}).

Then, we ask which windings a field can produce. The equator is a double root of the polar motion at exactly one value of the half-cyclotron frequency, which we call the equatorial value. It divides the admissible frequencies into a branch on which the band crosses the equator and a branch on which it lies within one hemisphere (Proposition~\ref{prop.separatrix}). Below the equatorial value the rotation number runs continuously from the winding of a great circle either to infinity or to the winding of the equatorial circle, according as the azimuthal integral falls below or above half the speed, so we can name exactly which rational windings occur (Theorem~\ref{theo.realise}). The derivative of the rotation number with respect to that frequency decides how many values of it carry a given winding, and we compute it in closed form as a combination of the polar period and a second moment of the motion (Theorem~\ref{theo.twist}). Below the equatorial value that combination is positive, so the rotation number increases strictly (Theorem~\ref{theo.mono}). Above it the band acquires a symmetry, an involution exchanging the turning points under which the quadrature is invariant, and the rotation number decreases strictly (Theorem~\ref{theo.mono2}). Each branch therefore carries each admissible rational winding exactly once. A winding occurs at two values of that frequency, at one, or at none, and we say which (Corollary~\ref{cor.count}). We have found no counterpart to that count in the literature.

One level set stands apart. The azimuthal velocity carries a term inversely proportional to the squared sine of the polar angle, weighted by the azimuthal first integral, so a trajectory on which that integral fails to vanish turns ever faster as it nears a pole and never arrives. Indeed, the quartic governing the polar motion makes the same statement in one line, since it takes the negative of the squared azimuthal integral at either pole (Proposition~\ref{prop.poles}). Hence the poles are attainable exactly on the set where that integral vanishes, where the reduction becomes elementary, the trajectory is a Jacobi elliptic function of a single modulus, and the closure criterion collapses to a resonance between a complete elliptic integral of the first kind and a full turn (Corollary~\ref{cor.ell0}). That set bounds the family rather than belonging to it.

Let us turn to the closed trajectories as closed Reeb orbits. The total flux of the field through the sphere vanishes, so the field is globally exact and the twisted symplectic form is exact on the whole cotangent bundle (Proposition~\ref{prop.exact}). For an exact magnetic flow on a closed surface other than the torus, Contreras, Macarini and Paternain settled when an energy level is of contact type, and it is so precisely above the Ma\~n\'e strict critical value~\cite{contreras2004periodic}. We compute that value for this field and find it to be half the mass times the square of the half-cyclotron frequency (Proposition~\ref{prop.mane}), and we exhibit the Liouville field that realises the threshold. Thus, above it the closed trajectories of Theorem~\ref{theo.closure} are explicit closed Reeb orbits, one for each rational winding the level admits (Corollary~\ref{cor.reeb}).

Our method is the one established for the Dirac monopole~\cite{lopez2025noether}, whose definition of the Lorentz endomorphism we adopt unchanged, in our own notation. A reader moving between the two therefore meets the same object. Sternberg's construction lifts the field $2$-form to phase space and twists the canonical symplectic form~\cite{sternberg1977minimal,guillemin1990symplectic}, so that the free particle flow of the twisted structure projects onto the electromagnetic curves of the base. That flow lies outside the reach of Noether's theorem. Here we use the Ikawa extension~\cite{ikawa2003hamiltonian,ikawa2007motion,ikawa2010motion}, which delivers the azimuthal first integral. Indeed, that integral is already in the literature, as the linear integral of the Liouville pair of Dragovi\'c, Gaji\'c and Jovanovi\'c~\cite{dragovic2023gyroscopic}. We re-derive it here because the Noether--Ikawa route names the geometry the rest of the manuscript needs, and the potential it produces reappears in the Ma\~n\'e value of Proposition~\ref{prop.mane}.

The manuscript is structured as follows. In Section~\ref{sec.problem} we state the electromagnetic curves problem on a Riemannian manifold and fix the compatibility convention that relates the field to the Lorentz endomorphism. In Section~\ref{sec.sphere} we set up the sphere in a uniform ambient field and derive the equations of motion. In Section~\ref{sec.symplectic} we recall Sternberg's lift and Ikawa's extension of Noether's theorem, tie the twist to the compatibility convention, and obtain the first integrals and the quadrature. In Section~\ref{sec.closed} we compute the rotation number, prove the closure criterion and treat the boundary case. In Section~\ref{sec.contact} we establish exactness, compute the Ma\~n\'e critical value and identify the closed orbits above it as closed Reeb orbits. In Section~\ref{sec.closing} we take stock of what we prove, what we assume and what remains open.

%%============================================================
\section{The electromagnetic curves problem} \label{sec.problem}
%%============================================================

Throughout the manuscript, $(M,g)$ is a Riemannian manifold and $\gamma: I \subset \Reals \longrightarrow M$ is a curve parametrised by arc length, with velocity $\dot\gamma \in TM$. Here $TM$ is the tangent bundle of $M$ and an over dot denotes Lie differentiation along the flow of $\dot\gamma$. A test particle of mass $m$ and charge $q$ follows the curve, and the field it moves in is a closed $2$-form.

\begin{mydef}[Electromagnetic field] \label{def.field} An \emph{electromagnetic field} on $(M,g)$ is a $2$-form $F \in \Lambda^2(M)$ satisfying \beq \label{eq.maxwell} \d F = 0. \eeq \end{mydef}

The condition~\eqref{eq.maxwell} is the homogeneous half of Maxwell's equations. The inhomogeneous half determines which sources produce the field, and we take the field as given. In this sense the whole construction rests on~\eqref{eq.maxwell} alone.

Let us now fix the tensor that carries the Lorentz force, following the monopole study~\cite{lopez2025noether} and the geometry of electromagnetic curves set out in~\cite{islas2021geometry}.

\begin{mydef}[Lorentz endomorphism] \label{def.endo} The \emph{Lorentz endomorphism} associated with the field $F$ is the bundle endomorphism $\Phi: TM \longrightarrow TM$ determined by the compatibility condition \beq \label{eq.compat} g\left(\Phi(\zeta),\eta\right) = F(\zeta,\eta) , \eeq for every pair of vector fields $\zeta$ and $\eta$. \end{mydef}

In a local chart the compatibility~\eqref{eq.compat} reads $\Phi^a_{\ b} = g^{ac}F_{bc}$, the field with its second index raised by the metric, which is the content of the equivalent statement $\Phi(\zeta) = g^\sharp\left(\iiota_\zeta F\right)$. Here $\iiota_\zeta$ denotes the interior product with $\zeta$, while $g^\sharp$ and $g^\flat$ are the musical isomorphisms the metric induces between vector fields and $1$-forms. The reader who prefers the opposite pairing obtains the endomorphism of the opposite sign.

\begin{mydef}[Electromagnetic curve] \label{def.emcurve} An \emph{electromagnetic curve} of the field $F$ on $(M,g)$ is a curve $\gamma$ satisfying \beq \label{eq.eom} m \nabla_{\dot\gamma}\dot\gamma = q\, \Phi(\dot\gamma), \eeq where $m$ and $q$ are the mass and the charge of the test particle, and $\nabla$ is the Levi-Civita connection of $g$. \end{mydef}

The law~\eqref{eq.eom} identifies the force with the change of momentum and adopts the Levi-Civita connection of $g$. What that identification fixes of the pair $(g,\nabla)$, and what it leaves free, we have examined elsewhere~\cite{lopez2026geometry}. Here we take both as given.

The compatibility~\eqref{eq.compat} has two immediate consequences. The force is orthogonal to the velocity, since \beq \label{eq.orth} g\left(\Phi(\dot\gamma),\dot\gamma\right) = F(\dot\gamma,\dot\gamma) = 0, \eeq by the antisymmetry of $F$, which makes the ordering of the arguments in~\eqref{eq.compat} immaterial here. Consequently the motion preserves the speed, \beq \label{eq.speed} \dt[\tau] \vert \dot\gamma \vert^2 = 2\, g\!\left(\nabla_{\dot\gamma}\dot\gamma, \dot\gamma\right) = \frac{2q}{m}\, g\!\left(\Phi(\dot\gamma),\dot\gamma\right) = 0, \eeq where we have used that the Levi-Civita connection is metric compatible. The dynamics itself thus enforces the arc-length parametrisation assumed at the outset, and we write $\vert\dot\gamma\vert = v$ for the constant speed.

Electromagnetic curves are not the geodesics of any affine connection on $M$ --- this is Proposition~$2.1$ of~\cite{barros2005gauss} --- and the linear dependence of the Lorentz force on the velocity makes them irreversible. Indeed, the curve traversed backwards solves the equation of the opposite charge. That irreversibility is a feature of the problem rather than of the description.

%%============================================================
\section{The sphere in a uniform ambient field} \label{sec.sphere}
%%============================================================

Let us now specialise to the geometry of the problem. Let $\Sph^2$ be the unit sphere carrying its round metric in spherical coordinates, \beq \label{eq.metric} g = \d\theta \otimes \d\theta + \sin^2\theta\ \d\varphi \otimes \d\varphi, \eeq with $\theta \in (0,\pi)$ the polar angle measured from the north pole and $\varphi \in [0,2\pi)$ the azimuth.

\begin{mydef}[Restricted uniform field] \label{def.restricted} Let $\iota: \Sph^2 \hookrightarrow \Reals^3$ be the standard embedding and let $F_{\Reals^3} = B\ \d X \wedge \d Y$ be the field of a uniform ambient magnetic field of strength $B$ along the polar axis. The \emph{restricted uniform field} on the sphere is its pullback, \beq \label{eq.field} F = \iota^* F_{\Reals^3} = B \sin\theta \cos\theta\ \d\theta \wedge \d\varphi. \eeq \end{mydef}

We obtain the pullback by a direct computation from the embedding $X = \sin\theta\cos\varphi$, $Y = \sin\theta\sin\varphi$ and $Z = \cos\theta$, the Jacobian of the first two with respect to $(\theta,\varphi)$ being $\sin\theta\cos\theta$. Indeed, the field~\eqref{eq.field} is closed for the trivial reason that every $2$-form on a surface is closed, and it meets Definition~\ref{def.field} on that ground alone.

It is worth noting how the restricted field stands to the area form. The area form of the round metric~\eqref{eq.metric} is $\mrm{Vol}_g = \sin\theta\ \d\theta \wedge \d\varphi$, so the restricted field~\eqref{eq.field} reads \beq \label{eq.notarea} F = B\cos\theta\ \mrm{Vol}_g , \eeq a \emph{function} multiple of the area form, varying with the polar angle. The Gauss--Landau--Hall field of~\cite{barros2005gauss,cabrerizo2013magnetic} is the case in which that function is constant, where $\Phi$ becomes a constant multiple of the rotation by a quarter turn. Here the multiplier vanishes on the equator and changes sign across it.

Throughout the remainder of the manuscript we abbreviate the half-cyclotron frequency \beq \label{eq.alpha} \alpha := \frac{qB}{2m}, \eeq which carries the dimensions of an inverse time and is the only combination of the charge, the mass and the field strength the dynamics uses.

\begin{prop}[Equations of motion] \label{prop.eom} The electromagnetic curves of the restricted uniform field~\eqref{eq.field} on the round sphere~\eqref{eq.metric} are the solutions of \beq \label{eq.eomtheta} \ddot\theta = \sin\theta\cos\theta\ \dot\varphi \left( \dot\varphi - 2\alpha \right), \eeq \beq \label{eq.eomphi} \ddot\varphi = -2\cot\theta\ \dot\theta \left( \dot\varphi - \alpha \right). \eeq \end{prop}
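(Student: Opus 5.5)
The plan is to write the equation of motion~\eqref{eq.eom} in the coordinate chart $(\theta,\varphi)$ of~\eqref{eq.metric}. That means computing both sides componentwise: the acceleration $\nabla_{\dot\gamma}\dot\gamma$ from the Christoffel symbols of the round metric, and the Lorentz term $q\,\Phi(\dot\gamma)$ from the compatibility condition~\eqref{eq.compat} in its index form $\Phi^a_{\ b}=g^{ac}F_{bc}$.

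\emph{Left-hand side.} For $g=\d\theta^2+\sin^2\theta\,\d\varphi^2$ the only nonvanishing Christoffel symbols are
\beq \nn
\Gamma^\theta_{\varphi\varphi}=-\sin\theta\cos\theta, \qquad \Gamma^\varphi_{\theta\varphi}=\Gamma^\varphi_{\varphi\theta}=\cot\theta .
\eeq
Hence
\beq \nn
(\nabla_{\dot\gamma}\dot\gamma)^\theta=\ddot\theta-\sin\theta\cos\theta\,\dot\varphi^2, \qquad (\nabla_{\dot\gamma}\dot\gamma)^\varphi=\ddot\varphi+2\cot\theta\,\dot\theta\,\dot\varphi .
\eeq

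\emph{Right-hand side.} From~\eqref{eq.field} we have $F_{\theta\varphi}=-F_{\varphi\theta}=B\sin\theta\cos\theta$. Raising the second index with $g^{\theta\theta}=1$ and $g^{\varphi\varphi}=1/\sin^2\theta$ gives
\beq \nn
\Phi(\dot\gamma)^\theta=F_{\varphi\theta}\,\dot\varphi=-B\sin\theta\cos\theta\,\dot\varphi, \qquad \Phi(\dot\gamma)^\varphi=\frac{1}{\sin^2\theta}F_{\theta\varphi}\,\dot\theta=B\cot\theta\,\dot\theta .
\eeq
As a check, $g(\Phi(\dot\gamma),\dot\gamma)=-B\sin\theta\cos\theta\,\dot\varphi\dot\theta+B\sin\theta\cos\theta\,\dot\theta\dot\varphi=0$, in agreement with~\eqref{eq.orth}.

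\emph{Combining.} Divide by $m$ and use $qB/m=2\alpha$ from~\eqref{eq.alpha}. The $\theta$-component becomes $\ddot\theta-\sin\theta\cos\theta\,\dot\varphi^2=-2\alpha\sin\theta\cos\theta\,\dot\varphi$, which factors into~\eqref{eq.eomtheta}. The $\varphi$-component becomes $\ddot\varphi+2\cot\theta\,\dot\theta\,\dot\varphi=2\alpha\cot\theta\,\dot\theta$, which rearranges into~\eqref{eq.eomphi}.

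There is no genuine obstacle here. The only step needing care is the sign convention in~\eqref{eq.compat}, namely which index of $F$ is contracted with the velocity. The opposite pairing would flip $\alpha\mapsto-\alpha$, so I would keep the raised index explicitly on the second slot throughout, and then confirm the sign against the orthogonality check~\eqref{eq.orth} and against the known planar limit near a pole.
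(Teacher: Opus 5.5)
Your proposal is correct and follows the paper's proof step for step. It uses the same Christoffel symbols and acceleration components, and the same endomorphism components $\Phi(\dot\gamma)^\theta = -B\sin\theta\cos\theta\,\dot\varphi$ and $\Phi(\dot\gamma)^\varphi = B\cot\theta\,\dot\theta$ obtained from $\Phi^a_{\ b}=g^{ac}F_{bc}$. It then substitutes $qB/m = 2\alpha$ as the paper does.

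One small caveat on your closing aside: the orthogonality check~\eqref{eq.orth} holds for either sign of $\Phi$, so it cannot confirm the sign. The sign is fixed by Definition~\ref{def.endo} alone, which you implement correctly.
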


\begin{proof} The non-vanishing Christoffel symbols of the round metric~\eqref{eq.metric} are $\Gamma^\theta_{\ \varphi\varphi} = -\sin\theta\cos\theta$ and $\Gamma^\varphi_{\ \theta\varphi} = \Gamma^\varphi_{\ \varphi\theta} = \cot\theta$, so the acceleration of a curve $\gamma(\tau) = \left(\theta(\tau),\varphi(\tau)\right)$ has components \beq \label{eq.acc} \left(\nabla_{\dot\gamma}\dot\gamma\right)^\theta = \ddot\theta - \sin\theta\cos\theta\ \dot\varphi^2 \quad \text{and} \quad \left(\nabla_{\dot\gamma}\dot\gamma\right)^\varphi = \ddot\varphi + 2\cot\theta\ \dot\theta\ \dot\varphi. \eeq The Lorentz endomorphism of the field~\eqref{eq.field} determined by the compatibility~\eqref{eq.compat} is \beq \label{eq.endo} \Phi = -B\sin\theta\cos\theta\ \basis[\theta]\otimes \d\varphi + B\cot\theta\ \basis[\varphi]\otimes \d\theta , \eeq as one verifies by evaluating both sides of~\eqref{eq.compat} on the coordinate basis. Applying~\eqref{eq.endo} to the velocity gives \beq \label{eq.force} \Phi(\dot\gamma)^\theta = -B\sin\theta\cos\theta\ \dot\varphi \quad \text{and} \quad \Phi(\dot\gamma)^\varphi = B\cot\theta\ \dot\theta, \eeq Substituting the acceleration~\eqref{eq.acc} and the force~\eqref{eq.force} into the equation of motion~\eqref{eq.eom} yields the polar equation~\eqref{eq.eomtheta} and the azimuthal one~\eqref{eq.eomphi}, once we write $qB/m = 2\alpha$ with the abbreviation~\eqref{eq.alpha}. \end{proof}

Let us dwell on the coefficients of the system~\eqref{eq.eomtheta}--\eqref{eq.eomphi}, for they carry more weight than the derivation alone suggests. The sign of the field term in the polar equation and the factor of two in the azimuthal one are precisely what the conservation of the azimuthal angular momentum reduced by $\alpha\sin^2\theta$ demands. That integral carries the whole reduction.

%%============================================================
\section{The symplectic lift and its first integrals} \label{sec.symplectic}
%%============================================================

\emph{Where does the symmetry of the problem hide?} Integrating the equations of motion~\eqref{eq.eomtheta}--\eqref{eq.eomphi} directly keeps it out of sight. Let us lift them to phase space instead, where it stands plainly.

The electromagnetic field determines a symplectic structure~\cite{sternberg1977minimal,guillemin1990symplectic}. On the cotangent bundle $\pi: T^*M \longrightarrow M$ with tautological $1$-form $\lcan$ and canonical symplectic form $\omega_0 = -\d\lcan$, the field furnishes the twisted form \beq \label{eq.twisted} \omega_F = \omega_0 + q\, \pi^*(F), \eeq which is closed because $F$ is, by the closedness condition~\eqref{eq.maxwell}, and non-degenerate because the twist is a pullback from the base. The Hamiltonian flow of the \emph{free} particle Hamiltonian \beq \label{eq.ham} h = \frac{1}{2m}\, g^{-1}(p,p) , \eeq with respect to $\omega_F$ projects onto the electromagnetic curves of~\eqref{eq.eom}. Here $p$ is the momentum, the fibre coordinate of the cotangent bundle, and $g^{-1}$ the inverse metric that pairs it with itself. In this sense the dynamics we treat is free. The twist absorbs the field into the geometry of phase space.

The projected motion is not geodesic, so Noether's theorem is unavailable to us. Nevertheless, Ikawa has extended it~\cite{ikawa2003hamiltonian,ikawa2007motion,ikawa2010motion}. We recall that extension in the form the monopole study used~\cite{lopez2025noether}. Let \beq \label{eq.killing} \Xi = \left\{ \xi \in \GTM \ \vert \ \pounds_\xi g = 0 \quad \text{and} \quad \pounds_\xi F = 0 \right\} , \eeq be the set of generators of the symmetries of both the metric and the field. Here $\GTM$ is the space of smooth vector fields on $M$, the sections of its tangent bundle, and $\pounds_\xi$ is the Lie derivative along $\xi$. The field is closed, so $\pounds_\xi F = \d\, \iiota_\xi F$ for every $\xi \in \Xi$. Indeed, the contraction $\iiota_\xi F$ is then closed, admitting, at least locally, a potential \beq \label{eq.psi} \iiota_\xi F = \d \psi_\xi . \eeq The first integral Ikawa's theorem associates with $\xi$ combines the momentum the generator lowers with that potential, \beq \label{eq.noether} \lambda_\xi = \iiota_{\dot\gamma}\, g^\flat(\xi) + \frac{q}{m}\, \psi_\xi . \eeq

\begin{prop}[The azimuthal first integral] \label{prop.first} The azimuthal vector field $\partial/\partial\varphi$ belongs to the symmetry set $\Xi$ of~\eqref{eq.killing} for the round metric~\eqref{eq.metric} and the restricted uniform field~\eqref{eq.field}, and the first integral~\eqref{eq.noether} it generates is \beq \label{eq.ell} \ell = \sin^2\theta \left( \dot\varphi - \alpha \right). \eeq \end{prop}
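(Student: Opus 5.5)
The proof has three steps: show that $\partial/\partial\varphi$ lies in $\Xi$, find the potential $\psi_\xi$, and assemble~\eqref{eq.noether}. Afterwards we check directly against Proposition~\ref{prop.eom} that the result is constant. Write $\xi = \partial/\partial\varphi$.

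For the symmetry, we use that neither the metric~\eqref{eq.metric} nor the field~\eqref{eq.field} has components depending on $\varphi$. The Lie derivative along a coordinate vector field acts on components as $\partial_\varphi$, so $\pounds_\xi g = 0$ and $\pounds_\xi F = 0$. Thus $\xi \in \Xi$.

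Next we compute the contraction from~\eqref{eq.field}:
\[
\iiota_\xi F = B\sin\theta\cos\theta\,\iiota_\xi(\d\theta\wedge\d\varphi) = -B\sin\theta\cos\theta\ \d\theta .
\]
This is exact with potential $\psi_\xi = -\tfrac{B}{2}\sin^2\theta$, up to an additive constant, which we fix to zero. Note that this potential is global on the sphere, not merely local as~\eqref{eq.psi} allows; this will matter later for Proposition~\ref{prop.exact}. The lowered generator is $g^\flat(\xi) = \sin^2\theta\,\d\varphi$, so $\iiota_{\dot\gamma} g^\flat(\xi) = \sin^2\theta\,\dot\varphi$. Substituting into~\eqref{eq.noether} and using $qB/(2m) = \alpha$ from~\eqref{eq.alpha} gives
\[
\lambda_\xi = \sin^2\theta\,\dot\varphi - \frac{qB}{2m}\sin^2\theta = \sin^2\theta\,(\dot\varphi-\alpha) = \ell .
\]

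The one delicate step is the sign and normalisation of $\psi_\xi$. It depends on the convention $\iiota_\xi F = \d\psi_\xi$ together with the pairing in~\eqref{eq.compat}, and on whether the $q/m$ in~\eqref{eq.noether} is consistent with the twist $q\,\pi^*F$ in~\eqref{eq.twisted}. A sign slip here would yield $\dot\varphi+\alpha$ instead of $\dot\varphi-\alpha$. To settle it independently of Ikawa's normalisation, we differentiate $\ell$ along solutions:
\[
\dot\ell = 2\sin\theta\cos\theta\,\dot\theta\,(\dot\varphi-\alpha) + \sin^2\theta\,\ddot\varphi .
\]
Substituting~\eqref{eq.eomphi}, the second term is $\sin^2\theta\,\ddot\varphi = -2\sin\theta\cos\theta\,\dot\theta\,(\dot\varphi-\alpha)$, which cancels the first exactly, so $\dot\ell = 0$. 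This confirms the signs and proves that $\ell$ is a first integral regardless of the conventions chosen in the Noether--Ikawa derivation.
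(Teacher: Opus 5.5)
Your proof is correct and follows the paper's own argument step for step. Both use the $\varphi$-independence of the metric and field components, the global potential $\psi_{\partial_\varphi} = -B\sin^2\theta/2$ from the contraction, and the assembly of \eqref{eq.noether} with $qB/2m = \alpha$, and both finish with the direct check of $\dot\ell = 0$ against \eqref{eq.eomphi}.
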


\begin{proof} Neither the metric~\eqref{eq.metric} nor the field~\eqref{eq.field} has any component depending on $\varphi$, so both Lie derivatives in~\eqref{eq.killing} vanish and $\partial/\partial\varphi \in \Xi$. Contracting the field~\eqref{eq.field} with the generator gives \beq \label{eq.contract} \iiota_{\partial_\varphi} F = -B\sin\theta\cos\theta\ \d\theta = \d\left( -\frac{B}{2}\sin^2\theta \right), \eeq so that $\psi_{\partial_\varphi} = -B \sin^2\theta/2$ is a potential in the sense of~\eqref{eq.psi}, defined globally on the sphere. The metric dual of the generator is $g^\flat(\partial_\varphi) = \sin^2\theta\ \d\varphi$, whose contraction with the velocity is $\sin^2\theta\ \dot\varphi$. Substituting both into the Noether--Ikawa integral~\eqref{eq.noether} gives $\lambda_{\partial_\varphi} = \sin^2\theta\ \dot\varphi - (qB/2m)\sin^2\theta$, which is the asserted expression~\eqref{eq.ell} written with the abbreviation~\eqref{eq.alpha}. We may check its conservation directly. Differentiating the first integral~\eqref{eq.ell} along the curve gives \beq \label{eq.ellcheck} \dt[\tau] \ell = 2\sin\theta\cos\theta\ \dot\theta \left( \dot\varphi - \alpha \right) + \sin^2\theta\ \ddot\varphi , \eeq The azimuthal equation of motion~\eqref{eq.eomphi} states that $\sin^2\theta\ \ddot\varphi = -2\sin\theta\cos\theta\ \dot\theta \left(\dot\varphi-\alpha\right)$, which cancels the first term of~\eqref{eq.ellcheck} identically. \end{proof}

The integral~\eqref{eq.ell} is the azimuthal angular momentum reduced by $\alpha \sin^2\theta$, the term the field contributes. It is the linear integral of the Liouville pair that Section~\ref{sec.intro} attributes~\cite{dragovic2023gyroscopic}. Its counterpart in the monopole problem is the Poincar\'e integral~\cite{poincare1896remarques}, where the shift is proportional to $\cos\vartheta$ instead of $\sin^2\vartheta$, since each shift is the potential the Noether--Ikawa construction assigns to its own field~\cite{lopez2025noether}. Together with the constant speed of~\eqref{eq.speed} it exhausts the problem.

\begin{remark}[Three signs that are one convention] \label{rem.sign} Definition~\ref{def.endo} fixes the sign of the twist in~\eqref{eq.twisted}. Solving Hamilton's equations $\iiota_{X_h}\omega_F = \d h$ for the twisted form $\omega_0 + s\,q\,\pi^*(F)$ with $s = \pm 1$, we find that the projected motion obeys \beq \label{eq.signcheck} \ddot\theta = \sin\theta\cos\theta\ \dot\varphi \left( \dot\varphi - 2 s\,\alpha \right), \eeq so that the two values of $s$ describe the two charges, of which only $s = +1$ returns the polar equation~\eqref{eq.eomtheta} that Definition~\ref{def.endo} produced. Reversing the ordering of the arguments in the compatibility~\eqref{eq.compat} conjugates the charge in the same way, and so does reversing the sign of the potential term in the Noether--Ikawa integral~\eqref{eq.noether}, since the quantity~\eqref{eq.ell} is conserved for one of those signs and not for the other. The three reversals are therefore a single convention. Performing all of them describes the same particle, and performing one of them describes the opposite charge while appearing to describe the stated one. We fix that convention at Definition~\ref{def.endo}, which is the definition of the monopole study~\cite{lopez2025noether}, and let the twist and the first integral follow from it. \end{remark}

\begin{prop}[Quadrature] \label{prop.quadrature} The system~\eqref{eq.eomtheta}--\eqref{eq.eomphi} is Liouville integrable. On the level set $\left\{ \ell,\ v \right\}$ the polar angle obeys the quadrature \beq \label{eq.quad} \dot\theta^2 = v^2 - \frac{\left(\ell + \alpha \sin^2\theta\right)^2} {\sin^2\theta} . \eeq In the variable $x = \cos\theta$ this is the even quartic \beq \label{eq.quartic} \dot x^2 = -\alpha^2 x^4 + \left[ 2\alpha(\ell + \alpha) - v^2 \right] x^2 + \left[ v^2 - (\ell+\alpha)^2 \right] . \eeq \end{prop}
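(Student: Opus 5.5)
The plan is to use the two conserved quantities already in hand, the constant speed $v$ from~\eqref{eq.speed} and the azimuthal integral $\ell$ of Proposition~\ref{prop.first}, and then eliminate $\dot\varphi$. Liouville integrability comes first. The configuration space is two-dimensional, so the lifted phase space $T^*\Sph^2$ with the twisted form~\eqref{eq.twisted} is four-dimensional, and two independent first integrals in involution are enough. These are the Hamiltonian $h$ of~\eqref{eq.ham} and the momentum map of the rotation $\partial_\varphi$, which in canonical coordinates reads $p_\varphi - \tfrac{qB}{2}\sin^2\theta$ up to the factor $m$, that is $m\ell$.

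Involution follows because $\partial_\varphi$ preserves both $g$ and $F$. Its cotangent lift therefore preserves $h$ and $\omega_F$, and the function $m\ell$ is its Hamiltonian with respect to $\omega_F$, so $\{h, m\ell\}_{\omega_F}=0$. One can check this directly: $\iiota_{X}\omega_F = \d(\cdot)$ holds with the potential $\psi_{\partial_\varphi}$ of~\eqref{eq.contract}, and the sign is fixed by Remark~\ref{rem.sign}. Independence holds on the open dense set where $\d h$ and $\d\ell$ are not parallel. Away from equilibria it is enough that $\ell$ depends on $p_\varphi$ while $h$ also depends on $p_\theta$. I would state this, leave the singular set aside, and not try to list it.

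For the quadrature, I solve~\eqref{eq.ell} for the azimuthal velocity, $\dot\varphi = \alpha + \ell/\sin^2\theta$, so that $\sin\theta\,\dot\varphi = (\ell+\alpha\sin^2\theta)/\sin\theta$. Substituting into the speed identity $v^2 = \dot\theta^2 + \sin^2\theta\,\dot\varphi^2$ from the metric~\eqref{eq.metric} gives~\eqref{eq.quad} immediately. For the quartic, put $x=\cos\theta$, so $\dot x^2 = \sin^2\theta\,\dot\theta^2$ and $\sin^2\theta = 1-x^2$. Multiplying~\eqref{eq.quad} by $\sin^2\theta$ gives $\dot x^2 = v^2(1-x^2) - (\ell+\alpha - \alpha x^2)^2$. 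Expanding the square as $(\ell+\alpha)^2 - 2\alpha(\ell+\alpha)x^2 + \alpha^2x^4$ and collecting powers of $x$ yields~\eqref{eq.quartic}.

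The only real obstacle is the bookkeeping in the integrability claim. The sign of the twist has to match the sign in $\ell$, otherwise the momentum map is not conserved. Remark~\ref{rem.sign} settles this, and as a fallback the conservation of $\ell$ has already been verified directly in Proposition~\ref{prop.first}. The algebra leading to~\eqref{eq.quartic} is routine.
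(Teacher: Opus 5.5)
Your proposal is correct and follows the paper's own proof. Both use the integrals $\ell$ and $v$ for Liouville integrability, substitute $\dot\varphi = \ell/\sin^2\theta + \alpha$ into the speed identity, and multiply by $\sin^2\theta = 1-x^2$ to reach the quartic. The only difference is that you justify the involution by exhibiting $m\ell = p_\varphi - \tfrac{qB}{2}\sin^2\theta$ as the $\omega_F$-Hamiltonian of the lifted rotation, whereas the paper simply asserts involution and cites the literature.
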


\begin{proof} The configuration space has two degrees of freedom, while the two integrals~\eqref{eq.ell} and~\eqref{eq.speed} are in involution and functionally independent away from the poles, so the system is Liouville integrable, a conclusion Section~\ref{sec.intro} attributes and sets in its wider setting~\cite{dragovic2023gyroscopic,bolsinov2025integrability}. We record the reduction here for the quadrature it produces. Solving the azimuthal integral~\eqref{eq.ell} for the azimuthal velocity gives \beq \label{eq.phidot} \dot\varphi = \frac{\ell}{\sin^2\theta} + \alpha , \eeq Substituting~\eqref{eq.phidot} into the constant speed $v^2 = \dot\theta^2 + \sin^2\theta\ \dot\varphi^2$ yields the quadrature~\eqref{eq.quad}. Differentiating $x = \cos\theta$ gives $\dot x^2 = \sin^2\theta\ \dot\theta^2$; multiplying~\eqref{eq.quad} by $\sin^2\theta = 1 - x^2$ and expanding produces the quartic~\eqref{eq.quartic}. \end{proof}

The quartic~\eqref{eq.quartic} is even, of the same type as the one governing the spherical pendulum. We write $P$ for the polynomial on its right-hand side, so that $\dot x^2 = P(x)$. Thus the general solution is a Jacobi elliptic function of the polar angle. The root pattern of the quartic carries over the qualitative structure of the polar motion. The azimuth follows from~\eqref{eq.phidot} by an elliptic integral of the third kind. Over one polar period the motion runs between two turning points, so that integral is a \emph{complete} one.

%%============================================================
\section{Closed orbits at vanishing azimuthal integral} \label{sec.closed}
%%============================================================

\emph{Which of these trajectories close?} Let us settle that question here. The answer turns on a single number attached to each level set, the azimuth a trajectory gains between successive returns to a turning point in latitude, and the criterion is that this number be commensurable with a full turn.

Throughout this section we take $\alpha > 0$. This costs us no generality. Indeed, the reflection $\varphi \mapsto -\varphi$ sends the pair $(\alpha,\ell)$ to $(-\alpha,-\ell)$, carrying each trajectory to the mirror trajectory of the opposite charge. It reverses the sense of the azimuthal drift, and leaves the polar motion, the periods and the rotation number up to sign unaltered.

\begin{prop} \label{prop.poles} The polar quartic~\eqref{eq.quartic} takes the value \beq \label{eq.Ppole} P(\pm 1) = -\ell^2 \eeq at either pole. A trajectory therefore reaches a pole if and only if $\ell = 0$, and every trajectory with $\ell \neq 0$ is confined to a band between two parallels of latitude. \end{prop}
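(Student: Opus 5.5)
Substituting $x^2 = 1$ into the quartic~\eqref{eq.quartic} directly gives the value at the poles:
\beq
P(\pm1) = -\alpha^2 + 2\alpha(\ell+\alpha) - v^2 + v^2 - (\ell+\alpha)^2 .
\eeq
Expanding, the $v^2$ terms cancel and the remainder is $-\alpha^2 + 2\alpha\ell + 2\alpha^2 - \ell^2 - 2\alpha\ell - \alpha^2 = -\ell^2$. That is~\eqref{eq.Ppole}. As a sanity check I will derive the same value from the undivided form of the quadrature. Multiplying~\eqref{eq.quad} by $\sin^2\theta$ gives $\dot x^2 = v^2\sin^2\theta - (\ell + \alpha\sin^2\theta)^2$, and setting $\sin\theta = 0$ returns $-\ell^2$ immediately.

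For the dichotomy, I will use the fact that along any trajectory $P(x(\tau)) = \dot x^2 \geq 0$. If $\ell \neq 0$, then $P(\pm1) < 0$, so neither pole is accessible: a trajectory reaching $x = \pm1$ would force $\dot x^2 < 0$ there. To obtain a band rather than mere exclusion of the poles, I will use that $P$ is even and quadratic in $u = x^2$. Write $P = Q(x^2)$ with $Q(u) = -\alpha^2u^2 + \left[2\alpha(\ell+\alpha) - v^2\right]u + v^2 - (\ell+\alpha)^2$. The motion lives in the set $\{u \in [0,1] : Q(u) \geq 0\}$, which contains the initial value $u_0 = \cos^2\theta_0$. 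Since $Q$ is a concave quadratic (for $\alpha > 0$), this set is an interval. Because $Q(1) = -\ell^2 < 0$, the interval is $[u_-, u_+]$ with $u_+ < 1$ and $u_- = \max(0, \text{smaller root})$. The latitude $x$ is continuous and never reaches the boundary where $Q < 0$, so $x$ stays in $[-\sqrt{u_+},\sqrt{u_+}]$ or in $\sqrt{u_-} \leq |x| \leq \sqrt{u_+}$. In either case $\theta$ lies between two parallels strictly away from the poles. In the degenerate case $\alpha = 0$, $Q$ is linear, and the same conclusion holds.

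Conversely, if $\ell = 0$, then $P(\pm1) = 0$, so the poles are admissible points of the quadrature. I will exhibit that a pole is actually reached by taking initial data at a pole: in Cartesian terms the equation~\eqref{eq.eom} is smooth on the whole sphere, so the trajectory through the north pole with any velocity exists. By the expression~\eqref{eq.ell} extended continuously, $\ell = \sin^2\theta(\dot\varphi - \alpha)$ vanishes there, since $\sin^2\theta\,\dot\varphi$ is the bounded quantity $X\dot Y - Y\dot X$ and tends to zero at the pole.

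The main subtlety is precisely this last point. The coordinates $(\theta,\varphi)$ are singular at the poles, so the ``if'' direction must be argued with $\ell$ written globally as $\ell = X\dot Y - Y\dot X - \alpha(X^2+Y^2)$. This expression is smooth on the sphere and clearly vanishes at $X = Y = 0$, which also shows that the ``only if'' direction is coordinate-independent.
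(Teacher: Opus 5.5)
Your proof is correct and follows the paper's route: you evaluate the quartic at $x=\pm1$ to get $-\ell^2$ and use $\dot x^2 = P(x)\geq 0$. The paper leaves the band implicit in the compactness of $\{P\geq 0\}$, whereas you make it explicit through the concave quadratic in $u=x^2$. Your chart-free form $\ell = X\dot Y - Y\dot X - \alpha(X^2+Y^2)$ improves on the paper: it justifies the quadrature through the poles and shows that a pole is actually reached on $\ell=0$, not merely admissible; your existential reading of the converse is also the right one, because the equator traversed with $\dot\varphi=\alpha$ (so $v=\alpha$, $\ell=0$) is a trajectory that never meets a pole.
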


\begin{proof} Setting $x = \pm 1$ in the quartic~\eqref{eq.quartic} and writing $L = \ell + \alpha$ gives $-\alpha^2 + (2\alpha L - v^2) + v^2 - L^2 = -(L-\alpha)^2 = -\ell^2$, which is the assertion~\eqref{eq.Ppole}. Indeed, since $\dot x^2 = P(x)$ the motion occupies $\left\{ P \geq 0 \right\}$, so $x = \pm 1$ lies in its closure exactly when $\ell = 0$. Moreover, the azimuthal velocity~\eqref{eq.phidot} says the same, since it carries the term $\ell/\sin^2\theta$ and thus diverges at a pole for $\ell \neq 0$. \end{proof}

The two descriptions agree. A trajectory with $\ell \neq 0$ turns ever faster in azimuth as it approaches a pole, so that the polar motion turns back before it arrives. In this sense the level set $\ell = 0$ marks the edge of the family. The elementary reduction it supports is a degenerate case of the general one.

\begin{theo} \label{theo.ell0} On the level set $\ell = 0$ the azimuthal velocity is constant, \beq \label{eq.phifrozen} \dot\varphi = \alpha , \qquad \text{so that} \qquad \varphi(\tau) = \alpha \tau + \varphi_0 , \eeq and the polar angle satisfies the pendulum equation \beq \label{eq.pendulum} \ddot\theta = -\alpha^2 \sin\theta\cos\theta , \qquad \dot\theta^2 = v^2 - \alpha^2 \sin^2\theta . \eeq Writing $k = \min(\alpha,v)/\max(\alpha,v)$, the solutions are, in the rotation regime $v > \alpha$, \beq \label{eq.rotation} \sin\theta(\tau) = \sn(v\tau,k) , \qquad \cos\theta(\tau) = \cn(v\tau,k) , \qquad k = \frac{\alpha}{v} , \eeq and, in the libration regime $v < \alpha$, \beq \label{eq.libration} \sin\theta(\tau) = k \sn(\alpha\tau,k) , \qquad \cos\theta(\tau) = \dn(\alpha\tau,k) , \qquad k = \frac{v}{\alpha} , \eeq where $\sn$, $\cn$ and $\dn$ are the Jacobi elliptic functions of modulus $k$. \end{theo}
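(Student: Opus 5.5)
The plan is to specialise the reduction of Proposition~\ref{prop.quadrature} to $\ell = 0$ and then verify the Jacobi solutions by differentiation, fixing the initial data by uniqueness. First, setting $\ell = 0$ in the azimuthal velocity~\eqref{eq.phidot} gives $\dot\varphi = \alpha$ at once, and integrating yields~\eqref{eq.phifrozen}. Substituting $\dot\varphi = \alpha$ into the polar equation~\eqref{eq.eomtheta} gives $\ddot\theta = \sin\theta\cos\theta\,\alpha(\alpha - 2\alpha) = -\alpha^2\sin\theta\cos\theta$. The quadrature~\eqref{eq.quad} with $\ell = 0$ reads $\dot\theta^2 = v^2 - \alpha^2\sin^2\theta$. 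Both halves of~\eqref{eq.pendulum} therefore follow directly. The second is also the first integral of the first, which serves as a consistency check.

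There is one geometric point to settle before writing solutions. By Proposition~\ref{prop.poles}, trajectories with $\ell = 0$ reach the poles, where the chart $(\theta,\varphi)$ is singular. I would handle this by reading $\theta$ as a signed angle along the moving meridian plane, so that passing through a pole corresponds to $\theta$ changing sign rather than $\varphi$ jumping by $\pi$. The pendulum equation is smooth and odd in $\theta$, so it extends across $\theta = 0$. In the rotation regime $\theta$ runs through all values. In the libration regime, where $\dot\theta^2 \ge 0$ forces $\sin^2\theta \le v^2/\alpha^2 < 1$, the motion oscillates about the north pole without reaching the equator. The time origin is chosen at a passage through the pole, $\theta(0) = 0$, and the equator-symmetric case with $\theta$ near $\pi$ follows from the reflection $\theta \mapsto \pi - \theta$.

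Next I would verify each displayed solution, using $\sn^2 + \cn^2 = 1$, $\dn^2 + k^2\sn^2 = 1$ and the derivative formulas for $\sn$, $\cn$, $\dn$.

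\emph{Rotation, $v > \alpha$, $k = \alpha/v$.} The pair $(\sn,\cn)$ is consistent as a sine and cosine. Differentiating $\sin\theta = \sn(v\tau,k)$ gives $\cos\theta\,\dot\theta = v\cn\dn$, so $\dot\theta = v\dn$. Hence
\[ \dot\theta^2 = v^2(1 - k^2\sn^2) = v^2 - \alpha^2\sin^2\theta . \]

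\emph{Libration, $v < \alpha$, $k = v/\alpha$.} The identity $\dn^2 + k^2\sn^2 = 1$ makes $(k\sn,\dn)$ a sine and cosine pair. Differentiation gives $\dn\,\dot\theta = k\alpha\cn\dn$, so $\dot\theta = v\cn$. Hence
\[ \dot\theta^2 = v^2(1 - \sn^2) = v^2 - \alpha^2 k^2\sn^2 = v^2 - \alpha^2\sin^2\theta . \]

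In both regimes $\theta(0) = 0$ and $\dot\theta(0) = v$. Since the second-order pendulum equation has unique solutions for given initial data, these are the solutions (any other $\ell = 0$ trajectory that reaches a pole is a time translate).

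The main obstacle is the treatment at the pole, not the elliptic-function algebra. I must justify that the coordinate singularity is harmless and that the sign of $\cos\theta$ and of $\dot\theta$ is taken correctly when dividing by $\cos\theta$ or $\dn$. I would address this by checking the second-order equation directly via $\frac{\d}{\d\tau}(v\dn) = -vk^2 v\sn\cn = -\alpha^2\sin\theta\cos\theta$, and similarly in the libration case. This avoids dividing by quantities that vanish.
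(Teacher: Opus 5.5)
Your proposal is correct and follows the paper's own proof: specialise \eqref{eq.phidot}, \eqref{eq.eomtheta} and \eqref{eq.quad} to $\ell = 0$, then verify the Jacobi ansatz by differentiating and using $\sn^2+\cn^2=1$ and $\dn^2+k^2\sn^2=1$. Your extra steps go slightly beyond the paper, which verifies only the first integral (dividing by $\cn$ at the equator crossing) and leaves the poles to Remark~\ref{rem.chart}. These extras are the signed-angle reading through the poles, the uniqueness argument from $\theta(0)=0$, $\dot\theta(0)=v$, and the direct check of the second-order equation.
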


\begin{proof} Setting $\ell = 0$ in the expression~\eqref{eq.phidot} for the azimuthal velocity gives $\dot\varphi = \alpha$ at once. Substituting that constant into the azimuthal equation of motion~\eqref{eq.eomphi} gives $\ddot\varphi = -2\cot\theta\ \dot\theta\ (\alpha - \alpha) = 0$, so the value is consistent and the azimuth advances uniformly as in~\eqref{eq.phifrozen}. Substituting $\dot\varphi = \alpha$ into the polar equation~\eqref{eq.eomtheta} gives $\ddot\theta = \sin\theta\cos\theta\ \alpha\,(\alpha - 2\alpha) = -\alpha^2 \sin\theta\cos\theta$, which is the pendulum equation of~\eqref{eq.pendulum}. Setting $\ell = 0$ in the quadrature~\eqref{eq.quad} gives its first integral. Indeed, the two are consistent, for we recover the pendulum equation by differentiating that first integral.

For the rotation regime, let $k = \alpha/v < 1$ and put $\sin\theta = \sn(v\tau,k)$. Then $\cos\theta = \sqrt{1 - \sn^2(v\tau,k)} = \cn(v\tau,k)$ by the first Jacobi identity. Differentiating gives $\cos\theta\ \dot\theta = v \cn(v\tau,k)\dn(v\tau,k)$, whence $\dot\theta = v \dn(v\tau,k)$. Squaring and using the second Jacobi identity $\dn^2 = 1 - k^2\sn^2$ gives \beq \label{eq.rotcheck} \dot\theta^2 = v^2 \left[ 1 - k^2 \sn^2(v\tau,k) \right] = v^2 - \alpha^2 \sin^2\theta , \eeq which is the first integral of~\eqref{eq.pendulum}. For the libration regime, let $k = v/\alpha < 1$ and put $\sin\theta = k\sn(\alpha\tau,k)$. Then $\cos\theta = \sqrt{1 - k^2\sn^2(\alpha\tau,k)} = \dn(\alpha\tau,k)$ and $\dot\theta = \alpha k \cn(\alpha\tau,k)$, so that \beq \label{eq.libcheck} \dot\theta^2 = \alpha^2 k^2 \left[ 1 - \sn^2(\alpha\tau,k)\right] = v^2 - \alpha^2 \sin^2\theta , \eeq again the first integral of~\eqref{eq.pendulum}. \end{proof}

\begin{remark}[The formulas at the poles] \label{rem.chart} The pair $(\theta,\varphi)$ is a chart on the sphere with the poles removed, and the azimuth carries no geometric meaning where $\sin\theta$ vanishes. We interpret the representations~\eqref{eq.rotation} and~\eqref{eq.libration} accordingly, as the continuation through the poles of a curve in the embedded sphere, which the Cartesian components $(\sin\theta\cos\varphi, \sin\theta\sin\varphi, \cos\theta)$ supply without ambiguity; passing a pole sends $\varphi$ to $\varphi + \pi$ and $\sin\theta$ through zero to the opposite sign, and the point on the sphere crosses the pole once. Proposition~\ref{prop.poles} confines the difficulty to the single level set $\ell = 0$, since every other trajectory stays in a band on which the chart is honest. Closure, here and throughout, is the return of the point of the sphere rather than of the coordinate pair, and the verification reported with Figure~\ref{fig.resonance} is carried out in the ambient Cartesian coordinates for that reason. \end{remark}

\begin{figure}[!tbp]
\centering
\includegraphics[width=\linewidth]{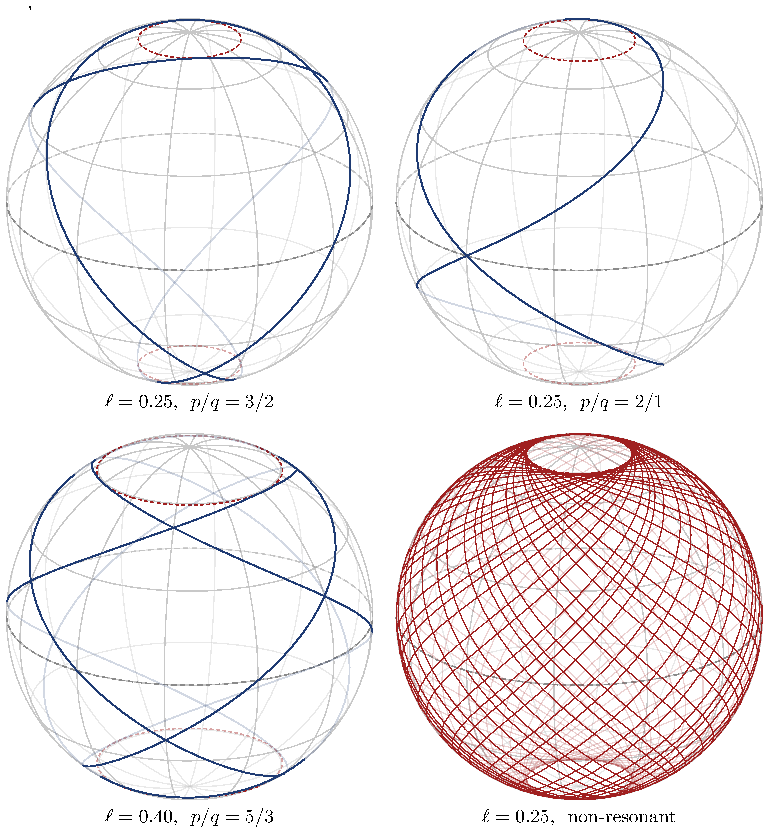}
\caption{Closure on level sets with $\ell \neq 0$, with the speed fixed at $v = 1$ and $\alpha$ tuned until the rotation number of Theorem~\ref{theo.rotation} reaches the value shown. Proposition~\ref{prop.poles} keeps each of these trajectories away from the poles, so every one of them is confined to a band between two parallels of latitude, drawn dashed in red; this is the generic picture, and the pole-crossing motion of Figure~\ref{fig.regimes} belongs to the boundary set $\ell = 0$ alone. The first three panels solve the closure criterion~\eqref{eq.resonance} of Theorem~\ref{theo.closure} at the rotation numbers $p/q = 3/2$, $2/1$ and $5/3$, at $\alpha = 0.394235$, $0.597743$ and $0.416996$ respectively, so that each trajectory returns to its starting point after exactly $q$ polar periods and $p$ turns about the axis. We verified that return independently by integrating the equations of motion, and the two endpoints of each curve agree to about one part in $10^{11}$. The fourth panel sets the rotation number to the golden ratio, an irrational value, so that the trajectory precesses indefinitely and fills its band; forty polar periods are drawn, and the white cap it leaves at either pole is exactly the region Proposition~\ref{prop.poles} excludes. The hemisphere facing away from the reader is drawn faint throughout. } \label{fig.resonance} \end{figure}

The two regimes are dynamically distinct. The ratio $v/\alpha$ separates them. In the rotation regime the polar velocity keeps its sign, for the pendulum first integral~\eqref{eq.pendulum} bounds it below by $\dot\theta^2 \geq v^2 - \alpha^2 > 0$. In the libration regime it vanishes where $\sin\theta = v/\alpha$, which confines the trajectory to a polar cap of that half-angle. The word ``libration'' invites a picture of an oscillation about the equator. The solution~\eqref{eq.libration} supplies a different one, for it carries $\cos\theta = \dn(\alpha\tau,k) \geq \sqrt{1 - k^2} > 0$ for every $\tau$, while $\sin\theta$ vanishes whenever $\sn$ does. Thus the curve stays on one side of the equator, passing through the pole twice in each period. In this sense the two caps carry separate families. At $v = \alpha$ the regimes coincide, the modulus reaches unity, and the elliptic functions degenerate into hyperbolic ones as the polar period diverges. That is the separatrix. We draw all three in Figure~\ref{fig.regimes}.

\begin{figure}[!tbp]
\centering
\includegraphics[width=\linewidth]{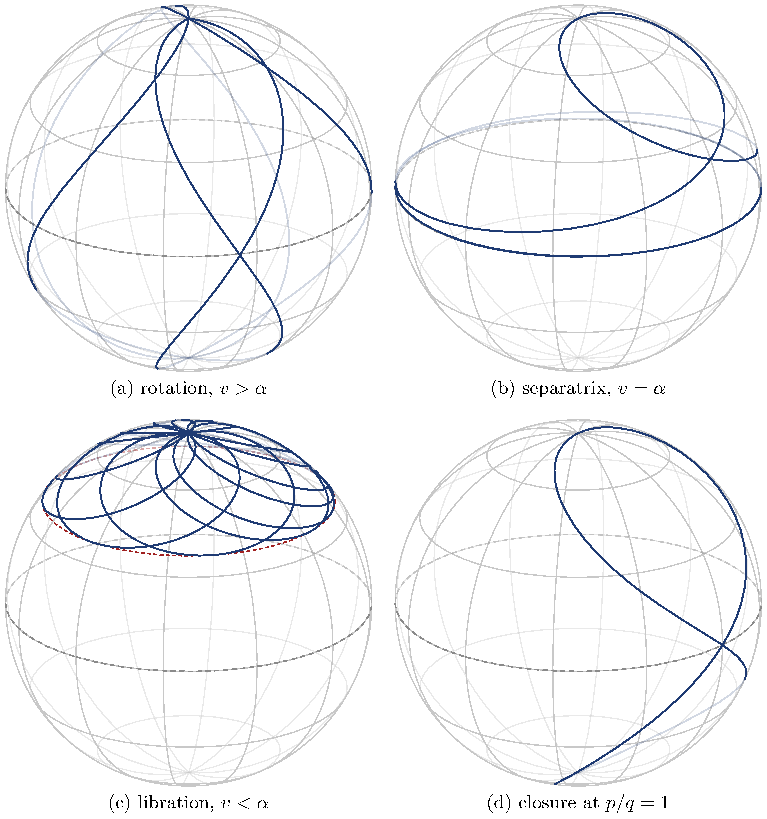}
\caption{The boundary level set $\ell = 0$ of Theorem~\ref{theo.ell0}, with the speed fixed at $v = 1$ and the half-cyclotron frequency $\alpha$ varied between panels. The ratio $v/\alpha$ decides the shape of the motion. Panel (a) takes $\alpha = 0.55$ and lies in the rotation regime, where the curve sweeps through both poles in succession while precessing in azimuth by $4kK(k)$ per polar period. Panel (b) sets $v = \alpha$, where the polar period diverges as the curve winds towards the equator. That value is also the contact threshold of Theorem~\ref{theo.contact}, so panel (b) draws the coincidence Remark~\ref{rem.threshold} records --- the trajectory at which the polar period diverges is the one at which the Liouville field turns tangent to the energy level. Panel (c) takes $\alpha = 1.25$ and lies in the libration regime, confined to the polar cap of half-angle $\arcsin(v/\alpha) = 53.13^\circ$ drawn dashed in red, which it touches on either side of the pole before turning back. Panel (d) solves the resonance $4kK(k) = 2\pi$ of Corollary~\ref{cor.ell0} at $k = 0.792726$, so that $v/\alpha = 1.261470$ and the trajectory closes after one polar period and one turn about the axis (Remark~\ref{rem.resonant}). The equator is dashed in grey throughout, and we draw the hemisphere facing away from the reader faint in every panel.} \label{fig.regimes} \end{figure}

Let us now leave that boundary and treat every trajectory the field carries. Throughout the remainder of this section $\ell \neq 0$, so that Proposition~\ref{prop.poles} confines the motion to a band, and we write $w_- \leq w_+$ for the two roots in $w = x^2$ of the quartic~\eqref{eq.quartic}.

\begin{theo}[The rotation number] \label{theo.rotation} Over one period of the polar angle the azimuth advances by \beq \label{eq.rot} \Delta\varphi = 2\int_{x_-}^{x_+} \left[ \frac{\ell}{1-x^2} + \alpha \right] \frac{\d x}{\sqrt{P(x)}} , \eeq a complete elliptic integral of the third kind. In the variable $w$ the even quartic $P$ becomes the cubic $R(w) = w\,(w_+ - w)(w - w_-)$, and writing $(e_1,e_2,e_3)$ for its roots in decreasing order, so that the motion runs over $[e_2,e_1]$ with $e_1 = w_+$, the polar period and the azimuthal advance are \beq \label{eq.rotlegendre} T_\theta = \frac{2\mu\, K(k)}{\alpha\sqrt{e_1-e_3}} , \qquad \Delta\varphi = \alpha\, T_\theta + \frac{2\mu\, \ell\, \Pi(n,k)}{\alpha\,(1-e_1)\sqrt{e_1-e_3}} , \eeq where $K$ and $\Pi$ are the complete elliptic integrals of the first and third kinds, \beq \label{eq.rotmodulus} k^2 = \frac{e_1-e_2}{e_1-e_3} , \qquad n = -\,\frac{e_1-e_2}{1-e_1} , \eeq and $\mu = 2$ when the band crosses the equator, $\mu = 1$ when it lies within one hemisphere. \end{theo}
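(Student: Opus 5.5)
The plan is to integrate the azimuthal velocity~\eqref{eq.phidot} against the polar motion $\dot x^2 = P(x)$ of Proposition~\ref{prop.quadrature}, and then reduce to Legendre form through $w = x^2$. Along an arc on which $\dot x$ keeps its sign, $\d\varphi = \dot\varphi\,\d x/\dot x$. By~\eqref{eq.phidot} with $\sin^2\theta = 1-x^2$, this gives $\d\varphi = \left[\ell/(1-x^2)+\alpha\right]\d x/\sqrt{P(x)}$ up to the sign of $\dot x$. One polar period consists of an ascending and a descending sweep between consecutive turning points $x_-<x_+$ of $P$, and the two sweeps contribute equally. This gives~\eqref{eq.rot}, and $T_\theta = 2\int_{x_-}^{x_+}\d x/\sqrt{P}$ in the same way. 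Both integrals converge because the turning points are simple roots; I assume we are off the separatrix of Proposition~\ref{prop.separatrix}. Since $\ell\neq 0$, Proposition~\ref{prop.poles} gives $P(\pm1) = -\ell^2<0$, so $[x_-,x_+]\subset(-1,1)$. In particular $w_+<1$, hence $1-e_1>0$, and the integrand is regular.

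Next I factor $P$. The quartic~\eqref{eq.quartic} is even with leading coefficient $-\alpha^2$, so $P = \alpha^2 (w_+-w)(w-w_-)$. The substitution $w = x^2$, $\d w = 2x\,\d x$, turns $\d x/\sqrt{P}$ into $\d w/(2\alpha\sqrt{R(w)})$ with $R = w(w_+-w)(w-w_-)$, since $|x|\sqrt{P} = \alpha\sqrt{R}$. The motion requires $P\ge 0$ somewhere in $(-1,1)$, so $w_+>0$. The case split, which produces $\mu$, is on the sign of $w_-$.

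\emph{Case $w_-<0$.} The band $\{x^2\le w_+\}$ contains the equator, and $x$ runs over $[-\sqrt{w_+},\sqrt{w_+}]$. The roots in decreasing order are $(e_1,e_2,e_3) = (w_+,0,w_-)$. Each half sweep in $x$ covers $w\in[0,w_+]$ twice, once on each side of the equator, so $\mu=2$.

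\emph{Case $w_->0$.} The band is $\sqrt{w_-}\le |x|\le\sqrt{w_+}$ in one hemisphere. The roots are $(e_1,e_2,e_3) = (w_+,w_-,0)$, and $w$ covers $[e_2,e_1]$ once per half sweep, so $\mu = 1$.

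In both cases the motion runs over $[e_2,e_1]$, and the quantities to evaluate are
\beq
\int_{e_2}^{e_1}\frac{\d w}{\sqrt{R}}
\quad\text{and}\quad
\int_{e_2}^{e_1}\frac{\d w}{(1-w)\sqrt{R}} .
\eeq

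For these I use the standard substitution $w = e_1-(e_1-e_2)\sin^2\psi$, $\psi\in[0,\pi/2]$. It gives $w-e_3 = (e_1-e_3)(1-k^2\sin^2\psi)$ with $k$ as in~\eqref{eq.rotmodulus}, and $\d w/\sqrt{R} = 2\,\d\psi/\left(\sqrt{e_1-e_3}\sqrt{1-k^2\sin^2\psi}\right)$. The first integral is then $2K(k)/\sqrt{e_1-e_3}$. For the third-kind term, $1-w = (1-e_1)+(e_1-e_2)\sin^2\psi = (1-e_1)(1-n\sin^2\psi)$ with $n$ as in~\eqref{eq.rotmodulus}. The second integral is therefore $2\Pi(n,k)/\left((1-e_1)\sqrt{e_1-e_3}\right)$, in the convention $\Pi(n,k) = \int_0^{\pi/2}\d\psi/\left((1-n\sin^2\psi)\sqrt{1-k^2\sin^2\psi}\right)$. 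Assembling these with the factor $1/(2\alpha)$ from the change of variable, the factor $2$ for the two sweeps, and $\mu$ gives $T_\theta$ and $\Delta\varphi$ in~\eqref{eq.rotlegendre}. Splitting $\ell/(1-x^2)+\alpha$ produces the term $\alpha T_\theta$ plus the $\Pi$ term. Since $n<0$ and $k<1$, the integral is a genuine complete integral of the third kind, with no singularity on $[0,\pi/2]$.

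The step I expect to be delicate is the bookkeeping of the multiplicity $\mu$ in the crossing case. There the map $x\mapsto x^2$ is two-to-one and $\d w$ vanishes at $x = 0$. I will handle this by splitting each $x$-sweep at the equator, applying the substitution on each half separately, and checking that the two halves give equal contributions by the evenness of $P$ and of $\ell/(1-x^2)+\alpha$. A lesser check is the sign convention of $\Pi$ and of $n$, which I will fix by comparing with the $\ell\to$ small limit, where the $\Pi$ term must be finite.
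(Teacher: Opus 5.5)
Your proposal is correct and follows the paper's proof essentially step for step: you pass to $w=x^2$, split on the sign of $w_-$ to obtain $\mu$, and apply the substitution $w=e_1-(e_1-e_2)\sin^2\psi$, which produces $K$ and, through $1-w=(1-e_1)(1-n\sin^2\psi)$, the term in $\Pi$. Two of your choices are slightly more careful than the paper's proof: you state the convention for $\Pi$ explicitly, and you exclude the separatrix $w_-=0$, which the paper folds into the crossing case. Once that convention is fixed the sign of $n$ is determined, so the proposed $\ell\to 0$ check is unnecessary.
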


\begin{proof} The azimuthal velocity is $\dot\varphi = \ell/\sin^2\theta + \alpha$ by the expression~\eqref{eq.phidot}, with $\sin^2\theta = 1-x^2$, so the advance over one polar period is the integral~\eqref{eq.rot}, the factor of two counting the outward and the return transit between the turning points where $P$ vanishes. The integrand carries simple poles at $x = \pm 1$, which lie outside the band by Proposition~\ref{prop.poles}, so the integral converges and is of the third kind. Moreover, we may reduce it to Legendre form.

For the reduction, we pass to $w$, which gives $\d x = \d w/2\sqrt w$ and $P = \alpha^2 (w_+-w)(w-w_-)$, whence $\d x/\sqrt P = \d w/2\alpha\sqrt{R(w)}$ with $R(w) = w(w_+-w)(w-w_-)$. Thus, a band lying within one hemisphere has $0 < w_-$ and covers $w \in [w_-,w_+]$ once per polar period, so that $(e_1,e_2,e_3) = (w_+,w_-,0)$ and $\mu = 1$. A band crossing the equator has $w_- \leq 0$ and covers $w \in [0,w_+]$ twice, so that $(e_1,e_2,e_3) = (w_+,0,w_-)$ and $\mu = 2$. In either case we substitute $w = e_1 - (e_1-e_2)\sin^2\psi$, which carries $w$ from $e_1$ to $e_2$ as $\psi$ runs over $[0,\pi/2]$ and gives \beq \label{eq.rootsub} \sqrt{R(w)} = (e_1-e_2)\sin\psi\cos\psi \sqrt{e_1-e_3} \sqrt{1-k^2\sin^2\psi} , \eeq with $k$ as in~\eqref{eq.rotmodulus}, while $\d w = -2(e_1-e_2)\sin\psi\cos\psi\ \d\psi$. The factors cancel, and the term of~\eqref{eq.rot} carrying $\alpha$ becomes $2\mu\,\alpha\,K(k)/\alpha\sqrt{e_1-e_3}$, which is $\alpha T_\theta$ with the period of~\eqref{eq.rotlegendre}. For the remaining term, $1-w = (1-e_1)\left[1 + \tfrac{e_1-e_2}{1-e_1}\sin^2\psi\right]$, so the same substitution produces $\Pi(n,k)/(1-e_1)$ with the characteristic $n$ of~\eqref{eq.rotmodulus}, and we assemble the constants into~\eqref{eq.rotlegendre}. \end{proof}

We write $\rho = \Delta\varphi/2\pi$ for the rotation number of a trajectory. By Theorem~\ref{theo.rotation} it depends on $\ell$, $v$ and $\alpha$ alone, so every trajectory of a level set carries the same one. Closure is then a statement about a single number.

\begin{theo}[Closure] \label{theo.closure} A trajectory with $\ell \neq 0$ closes if and only if its rotation number is rational, that is if and only if \beq \label{eq.resonance} \Delta\varphi = 2\pi\, \frac{p}{q} \eeq for coprime integers $p$ and $q$ with $q \geq 1$, in which case it closes after exactly $q$ polar periods and $p$ turns about the polar axis. \end{theo}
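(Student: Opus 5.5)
The plan is to reduce closure to a statement about the return map of the polar motion, using the rotational symmetry of the problem. By Proposition~\ref{prop.poles}, a trajectory with $\ell \neq 0$ lives in a band $\{x_- \leq x \leq x_+\}$ away from the poles, so $(\theta,\varphi)$ is a genuine chart there. By Proposition~\ref{prop.quadrature}, $x = \cos\theta$ is a periodic solution of $\dot x^2 = P(x)$ with period $T_\theta$, where $x_\pm$ are simple roots of $P$.

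I first treat the generic case where the band is nondegenerate, $x_- < x_+$. By \eqref{eq.phidot}, $\dot\varphi = \ell/(1-x^2) + \alpha$ is a function of $x$ alone. Hence $\dot\varphi$ is $T_\theta$-periodic, and
\[
\varphi(\tau + T_\theta) - \varphi(\tau) = \Delta\varphi
\]
for every $\tau$. This is exactly the integral \eqref{eq.rot} of Theorem~\ref{theo.rotation}, independent of the starting point. The time-$T_\theta$ map of the flow therefore restores the polar state $(\theta,\dot\theta)$ and rotates the sphere about the polar axis by $\Delta\varphi$. Equivalently, by $\varphi$-invariance (Proposition~\ref{prop.first}), the solution satisfies
\[
\gamma(\tau + T_\theta) = R_{\Delta\varphi}\,\gamma(\tau),
\]
where $R_\beta$ is the rotation about the axis by angle $\beta$.

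Sufficiency follows directly. If $\Delta\varphi = 2\pi p/q$, iterating $q$ times gives $\gamma(\tau + qT_\theta) = R_{2\pi p}\,\gamma(\tau) = \gamma(\tau)$, with the same velocity. The trajectory is periodic with $q$ polar periods and a total azimuth gain of $2\pi p$, that is, $p$ turns.

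For necessity, suppose $\gamma(\tau + T) = \gamma(\tau)$ for some $T > 0$, as a closed curve of the flow. The main step is to show that $T$ must be an integer multiple of $T_\theta$. Since $x(\tau)$ is nonconstant and $T$-periodic, $T$ is a period of $x$, and the minimal period of $x$ is $T_\theta$. This is the point needing care: the turning points are simple roots of $P$, so $x$ oscillates monotonically between them and its minimal period is exactly $T_\theta$. Hence $T = NT_\theta$ for some integer $N$, and $\varphi$ must advance by a multiple of $2\pi$, giving $N\Delta\varphi = 2\pi M$. Thus $\rho$ is rational. Writing $\rho = p/q$ in lowest terms, the minimal closing time is $qT_\theta$, which gives ``exactly $q$ periods and $p$ turns''. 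Minimality uses coprimality: $N\,p/q \in \mathbb{Z}$ forces $q \mid N$.

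I expect the main obstacle to be the degenerate bands. If $P$ has a double root inside $(-1,1)$, the orbit is a parallel with constant $\theta$, where $T_\theta$ is meaningless and the solution is a circle that always closes. Here I would either exclude these as relative equilibria or interpret $\Delta\varphi$ through the linearised period, and I would state this limitation explicitly rather than force it into the criterion.
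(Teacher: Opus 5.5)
Your proposal is correct and follows the paper's proof. The polar motion is $T_\theta$-periodic and $\dot\varphi$ depends on $x$ alone, so every polar period advances the azimuth by the same $\Delta\varphi$, and closure reduces to $q\,\Delta\varphi \in 2\pi\mathbb{Z}$. Your step that any period of the trajectory must be a multiple of the minimal period $T_\theta$ of $x$ makes explicit what the paper takes for granted.

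Your caveat about degenerate bands is fair, and the paper does not state it, but your description of that case is not quite right. A double root of $P$ inside $(-1,1)$ also occurs at the separatrix of Proposition~\ref{prop.separatrix}. There the non-equatorial orbits are asymptotic to the equator, not parallels, and they never close. So the criterion does not apply as stated either at the separatrix or at the circular orbits.
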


\begin{proof} By Proposition~\ref{prop.poles} the motion is confined to a band on which $\theta$ oscillates between the two turning points with period $T_\theta$, so that $\theta(\tau + T_\theta) = \theta(\tau)$ and $\dot\theta(\tau + T_\theta) = \dot\theta(\tau)$ for every $\tau$. The azimuthal velocity depends on the polar angle alone by~\eqref{eq.phidot}. Thus $\varphi(\tau+T_\theta) = \varphi(\tau) + \Delta\varphi$ with $\Delta\varphi$ the constant of Theorem~\ref{theo.rotation}. The trajectory therefore returns to its starting point on the sphere after $q$ polar periods precisely when $q\,\Delta\varphi \in 2\pi\mathbb{Z}$. The least such $q$ is the denominator of $\Delta\varphi/2\pi$ in lowest terms, with $p$ the numerator counting the azimuthal turns. A trajectory whose rotation number is irrational meets each of its points once, and the closure of its image is the whole band. \end{proof}

The criterion~\eqref{eq.resonance} is the principal result of the manuscript. It covers every trajectory the field carries. Indeed, the rotation number depends on the speed only through a pair of dimensionless ratios, and that dependence fixes what the criterion selects.

\begin{prop} \label{prop.scaling} The azimuthal advance depends on the half-cyclotron frequency, the azimuthal integral and the speed through two ratios alone, \beq \label{eq.ratios} \Delta\varphi = \Delta\varphi\left( \frac{\alpha}{v},\ \frac{\ell}{v} \right) , \eeq so that the closure criterion~\eqref{eq.resonance} is a condition on those two numbers. The first of them is the dimensionless combination \beq \label{eq.control} \frac{\alpha}{v} = \frac{qB}{2mv} \eeq of the charge, the mass, the field and the speed, and every quadruple of those four sharing its value carries the same closed trajectory. \end{prop}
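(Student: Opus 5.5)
The plan is a scaling argument run through the quadrature of Proposition~\ref{prop.quadrature}. First I establish homogeneity of the integral~\eqref{eq.rot} directly. Then I reparametrise time to show that the closed curves themselves, and not just the number $\Delta\varphi$, depend only on the two ratios.

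\emph{Homogeneity of the advance.} Set $a = \alpha/v$ and $b = \ell/v$. Note that $\ell = \sin^2\theta\,(\dot\varphi - \alpha)$ carries the dimension of $\alpha$, an inverse time, so $b$ is dimensionless, as is $a$. Dividing the quartic~\eqref{eq.quartic} by $v^2$ gives
\beq
\nn P(x) = v^2\, \tilde P(x), \qquad \tilde P(x) = -a^2 x^4 + \left[ 2a(b+a) - 1 \right] x^2 + \left[ 1 - (b+a)^2 \right],
\eeq
so the roots $x_\pm$, and hence the band, depend on $(a,b)$ alone. The integrand of~\eqref{eq.rot} is $v\,[\,b/(1-x^2) + a\,]$. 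Combining this with $\sqrt{P} = v\sqrt{\tilde P}$, the factors of $v$ cancel and
\beq
\nn \Delta\varphi = 2\int_{x_-}^{x_+} \left[ \frac{b}{1-x^2} + a \right] \frac{\d x}{\sqrt{\tilde P(x)}},
\eeq
which is~\eqref{eq.ratios}. One can also read this off the Legendre form~\eqref{eq.rotlegendre}. The roots $w_\pm$ of $P/\alpha^2$ are functions of $(a,b)$, and so are $e_i$, $k$ and $n$. The prefactors $\mu/(\alpha\sqrt{e_1-e_3})$ multiply $\alpha$ or $\ell$, so they produce the ratios $1$ and $b/a$. Since $\mu$ is determined by the sign of $w_-$, it too is a function of $(a,b)$.

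\emph{Invariance of the trajectory.} Rescale time by $s = v\tau$. Equations~\eqref{eq.eomtheta}--\eqref{eq.eomphi} are homogeneous of degree two in time derivatives once $\alpha$ is counted as a rate. In $s$ they therefore become the same system with $\alpha$ replaced by $a$ and with unit speed, and $\ell$ becomes $b$. The image curve on the sphere, meaning the band, the turning latitudes and the advance per oscillation, thus depends only on $(a,b)$. Given the initial point, the curve is fixed by $(a,b)$ and the initial direction.

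Any quadruple $(q,m,B,v)$ with the same value of $qB/2mv$ has the same $a$. Choosing $\ell$ so that $b$ is preserved, equivalently fixing the initial direction, it traces the same curve at a different rate. Theorem~\ref{theo.closure} then shows that the criterion~\eqref{eq.resonance} is a condition on $(a,b)$ only, and that closed trajectories coincide.

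The main obstacle is one of interpretation rather than computation. The phrase ``same closed trajectory'' requires matching the second ratio, since $\ell/v$ is determined by the initial data. I would state the claim as: at fixed initial point and direction, $b$ is fixed by the geometry of the initial velocity, because $b = \sin^2\theta_0\,(\dot\varphi_0/v - a)$ and $\dot\varphi_0/v$ depends only on the direction. With that reading, equal $a$ gives an identical curve.
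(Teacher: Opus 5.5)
Your proposal is correct, and its first step is the paper's proof. The paper writes $\alpha = va$ and $\ell = vL$, gets $P = v^2 Q$ with $Q$ free of $v$, and notes that the single factor of $v$ in the azimuthal velocity cancels in~\eqref{eq.rot}. That is exactly your $P = v^2\tilde P$ cancellation.

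Your time rescaling $s = v\tau$ is an addition the paper does not make. It turns the system into the unit-speed one with $\alpha \to a$ and $\ell \to b$, and it is what actually justifies the final clause that parameter quadruples sharing $\alpha/v$ carry the same closed trajectory. The paper's proof only shows that the number $\Delta\varphi$ is invariant.

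Your caveat is also correct: that clause holds only once $\ell/v$ is matched as well, equivalently once the initial point and direction are fixed. The paper's statement leaves this implicit.
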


\begin{proof} Writing $\alpha = va$ and $\ell = vL$ in the quartic~\eqref{eq.quartic} gives $P = v^2Q$ with \beq \label{eq.scaled} Q(x) = -a^2x^4 + \left[ 2a(L+a) - 1 \right]x^2 + 1 - \left( L+a \right)^2 , \eeq whose coefficients carry no $v$, while the azimuthal velocity~\eqref{eq.phidot} carries one factor of $v$. The two cancel in the advance~\eqref{eq.rot} of Theorem~\ref{theo.rotation}, which leaves~\eqref{eq.ratios}. \end{proof}

In this sense the criterion constrains a ratio. Fixing the speed at unity therefore costs no generality, and every figure below takes $v = 1$.

Which rotation numbers a field can produce is now a question about the range of $\Delta\varphi$. That range is governed by a single value of $\alpha$ that generalises the separatrix of the boundary case. Let us write $y = \sin^2\theta$, in which the polar motion reads \beq \label{eq.ymotion} \dot y^2 = Q(y) = 4(1-y)\left( -\alpha^2 y^2 + E y - \ell^2 \right), \qquad E = v^2 - 2\alpha\ell . \eeq

\begin{prop}[The equatorial value] \label{prop.separatrix} The equator $y = 1$ is a root of $Q$ for every $\alpha$, and it is a double root exactly when \beq \label{eq.sep} \left( \ell + \alpha \right)^2 = v^2 , \eeq whose single admissible root for $0 < \vert \ell \vert < v$ and $\alpha > 0$ is the equatorial value $\alpha = v - \ell$. There the second root of the inner quadratic of~\eqref{eq.ymotion} sits at $y_\ast = \ell^2/(v-\ell)^2$, and its position relative to the equator decides what that value carries. For $\ell < v/2$ the root $y_\ast$ lies below the equator, the trajectory tends to the equator in infinite time and the polar period diverges, which is the separatrix. For $v/2 \leq \ell < v$ the root $y_\ast$ lies at the equator or above it, and the equatorial great circle is then the entire motion of that level set, traversed with finite period. Setting $\ell = 0$ in~\eqref{eq.sep} returns the condition $v = \alpha$ of Theorem~\ref{theo.ell0}. \end{prop}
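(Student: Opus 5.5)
The plan is to derive the polar equation~\eqref{eq.ymotion} from the quadrature of Proposition~\ref{prop.quadrature}, then read each assertion off the factorisation of $Q$. With $y=\sin^2\theta$ we have $\dot y = 2\sin\theta\cos\theta\,\dot\theta$, so $\dot y^2 = 4y(1-y)\dot\theta^2$. Substituting~\eqref{eq.quad} gives $\dot y^2 = 4(1-y)\left[v^2y - (\ell+\alpha y)^2\right]$. Expanding the bracket yields $-\alpha^2y^2 + (v^2-2\alpha\ell)y - \ell^2$, which is~\eqref{eq.ymotion}.

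\emph{The double root and the equatorial value.} The factor $1-y$ makes $y=1$ a root of $Q$ for every $\alpha$. It is a double root exactly when the inner quadratic also vanishes at $y=1$. That quadratic equals $v^2 - 2\alpha\ell - \alpha^2 - \ell^2 = v^2-(\ell+\alpha)^2$ there, which gives~\eqref{eq.sep}. Its two solutions are $\alpha = v-\ell$ and $\alpha = -v-\ell$. Under $|\ell|<v$ the second is negative, while the first is positive, so the equatorial value is $\alpha = v-\ell$. By Vieta the product of the roots of the inner quadratic is $\ell^2/\alpha^2$. One root is $1$, so the other is $y_\ast = \ell^2/(v-\ell)^2$.

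\emph{Position of $y_\ast$.} At $\alpha=v-\ell$ the inner quadratic is $-\alpha^2(y-1)(y-y_\ast)$, hence
\[
Q(y) = 4\alpha^2(1-y)^2(y-y_\ast).
\]
Now $y_\ast<1$ iff $|\ell|<v-\ell$. For $\ell<0$ this always holds. For $\ell>0$ it is equivalent to $\ell<v/2$.

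\emph{Case $y_\ast<1$.} The admissible region $\{Q\ge 0,\ y\le 1\}$ is $[y_\ast,1]$, and the equator is a double root. The time to reach it is $\int \d y/\sqrt{Q} \sim \int \d y/\bigl(2\alpha(1-y)\sqrt{1-y_\ast}\bigr)$, which diverges logarithmically. So the trajectory tends to the equator in infinite time and $T_\theta=\infty$. This is the separatrix.

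\emph{Case $y_\ast\ge 1$.} For $y<1$ the factor $y-y_\ast$ is negative, so $Q<0$. The only admissible value of the physical constraint $y\le 1$ is therefore $y=1$, and the level set is the equator. I will check this is a genuine motion: $\theta=\pi/2$ satisfies~\eqref{eq.eomtheta} because $\cos\theta=0$. By~\eqref{eq.phidot}, $\dot\varphi = \ell+\alpha = v$ is constant, consistent with the speed. The period is $2\pi/v$, which is finite.

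\emph{The case $\ell=0$.} Setting $\ell=0$ in~\eqref{eq.sep} gives $\alpha^2=v^2$, which is the separatrix $v=\alpha$ of Theorem~\ref{theo.ell0}.

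\emph{Main obstacle.} The step needing most care is the case analysis. The root $y=1$ is produced by the change of variable, so one must restrict to the physical range $y\in[0,1]$. One must also fix the sign of $Q$ near $y=1$ from the factorisation, rather than from the quartic, to separate the infinite-time approach from the degenerate great circle. Finally, the boundary case $\ell=v/2$ (where $y_\ast=1$ is a triple root) must be placed in the second case.
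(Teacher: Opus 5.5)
Your proposal is correct and follows essentially the same route as the paper: you factor out $1-y$, evaluate the inner quadratic at the equator to get $v^2-(\ell+\alpha)^2$, locate $y_\ast$ from the product of roots $\ell^2/\alpha^2$, and read both cases off the factorisation $Q = 4\alpha^2(1-y)^2(y-y_\ast)$. Your derivation of~\eqref{eq.ymotion} from the quadrature, and your explicit check that the equatorial circle solves the equations with $\dot\varphi = v$ and period $2\pi/v$, are small additions that the paper leaves implicit.
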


\begin{proof} The factor $1-y$ in~\eqref{eq.ymotion} vanishes at the equator, which gives the first assertion, and the remaining factor takes the value $-\alpha^2 + E - \ell^2 = v^2 - (\ell+\alpha)^2$ there, which vanishes precisely under the condition~\eqref{eq.sep}. Solving that condition for positive $\alpha$, we obtain $\alpha = v - \ell$, since $\alpha = -v-\ell$ is positive only for $\ell < -v$. The two roots of the inner quadratic multiply to $\ell^2/\alpha^2$ by~\eqref{eq.ymotion}, so the second of them sits at $y_\ast = \ell^2/(v-\ell)^2$, which exceeds unity exactly when $\vert \ell \vert > \vert v - \ell \vert$, that is when $\ell > v/2$. Displaying both roots, the polar motion at the equatorial value reads \beq \label{eq.sepmotion} \dot y^2 = 4\alpha^2 \left( 1 - y \right)^2 \left( y - y_\ast \right) . \eeq For $y_\ast < 1$ the right-hand side of~\eqref{eq.sepmotion} is positive on the interval $(y_\ast,1)$ and carries a double zero at the equator, so the motion runs over that interval and reaches the equator only asymptotically, the period growing logarithmically. For $y_\ast \geq 1$ it is negative at every $y < 1$, so the equator carries the entire motion. \end{proof}

The equatorial value therefore divides the admissible $\alpha$ into a branch below it, on which the band crosses the equator, and a branch above it, on which the band lies within one hemisphere. Let us take the first of them here.

\begin{theo} \label{theo.realise} Fix $v > 0$ and $\ell$ with $0 < \vert\ell\vert < v$, and let $\alpha$ run over the interval $(0, v-\ell)$ on which the band crosses the equator. The rotation number $\rho = \Delta\varphi/2\pi$ is continuous there and tends to the sign of $\ell$ as $\alpha \to 0^+$. As $\alpha \to (v-\ell)^-$ it tends to $+\infty$ when $\ell \leq v/2$, and to the winding $\sqrt{v/(2\ell-v)}$ of the equatorial circle when $v/2 < \ell < v$. Consequently every rational number in \beq \label{eq.range} \mathcal{I}_\ell = \begin{cases} \ \left( \operatorname{sign}(\ell),\ \infty \right) , & \ell \leq v/2 , \\[6pt] \ \left( 1,\ \sqrt{\dfrac{v}{2\ell-v}}\ \right) , & v/2 < \ell < v , \end{cases} \eeq occurs as the rotation number of some $\alpha$ in $(0,v-\ell)$, and by Theorem~\ref{theo.closure} the corresponding trajectory closes after $q$ polar periods and $p$ turns about the polar axis. \end{theo}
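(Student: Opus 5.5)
The plan is to read the rotation number off the Legendre form of Theorem~\ref{theo.rotation}. I would establish continuity on the open interval and compute the two endpoint limits. The intermediate value theorem then places every number of $\mathcal{I}_\ell$, in particular every rational one, in the range, and Theorem~\ref{theo.closure} converts each rational value into a closed trajectory.

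\emph{Continuity.} For $\alpha\in(0,v-\ell)$ the band crosses the equator, so $(e_1,e_2,e_3)=(w_+,0,w_-)$ and $\mu=2$. The roots $w_\pm$ are roots of a quadratic whose coefficients are polynomial in $\alpha$. They stay simple and separated from each other and from $0$ on the open interval, because a collision is precisely the equatorial double root of Proposition~\ref{prop.separatrix}. They also stay away from $1$, because $P(\pm1)=-\ell^2<0$ by Proposition~\ref{prop.poles}. Hence $k^2<1$ and $n$ stay in compact subsets of the regions where $K$ and $\Pi$ are analytic, and~\eqref{eq.rotlegendre} is continuous in $\alpha$.

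\emph{The limit $\alpha\to0^+$.} Here I would use the integral~\eqref{eq.rot} directly rather than the Legendre form, since the factor $1/\alpha$ there is only apparent. As $\alpha\to0$ the quartic tends to the quadratic $(\ell^2-v^2)x^2+v^2-\ell^2$, with turning points $x_\pm\to\pm\sqrt{1-\ell^2/v^2}$. After rescaling $x$ to the fixed interval $[-1,1]$ the integrand converges uniformly up to the square-root endpoints, which are integrable. The limiting integral is the azimuthal advance of a great circle: $\theta$ completes one oscillation per circuit, so $\varphi$ advances by $2\pi\,\operatorname{sign}(\ell)$. I would confirm this by evaluating $2\int \ell\,\d x/\big((1-x^2)\sqrt{(v^2-\ell^2)(1-x^2)-\ell^2 x^2}\big)$, which is elementary and equals $2\pi\operatorname{sign}\ell$.

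\emph{The limit $\alpha\to(v-\ell)^-$, case $\ell\le v/2$.} Here $w_-\to0$ from below, so $e_2-e_3\to0$ and $k\to1$. Then $K(k)\to\infty$ logarithmically for $\ell<v/2$, and more strongly when $\ell=v/2$, where $y_\ast=1$ gives a triple root and the divergence is algebraic. The integrand of~\eqref{eq.rot} equals $\dot\varphi/\sqrt P$. Near the equator $\dot\varphi\to\ell+\alpha\to v>0$, so the divergent part of the integral carries a positive weight. The rest of the integral stays bounded, because the poles are excluded. Hence $\Delta\varphi\to+\infty$.

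\emph{The limit $\alpha\to(v-\ell)^-$, case $v/2<\ell<v$.} Now $y_\ast>1$, so $w_+\to0$ and the band collapses onto the equator. I would linearise: with $x$ small, $\dot x^2\approx c_0+c_2x^2$, which is harmonic oscillation of frequency $\omega=\sqrt{-c_2}$, where $-c_2=v^2-2\alpha(\ell+\alpha)\to v(2\ell-v)>0$. The polar period tends to $2\pi/\omega$, while $\dot\varphi\to v$ uniformly on the shrinking band. So $\Delta\varphi\to 2\pi v/\omega$, that is $\rho\to\sqrt{v/(2\ell-v)}$. To make this rigorous I would rescale $x=\sqrt{w_+}\,s$ and pass to the limit under the integral. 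Finally, the lower endpoint of $\mathcal{I}_\ell$ in this case is $1$ because $\ell>0$ there.

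The main obstacle I expect is rigour at the two degenerate endpoints, where the quartic changes type: it drops degree at $\alpha=0$ and acquires a double root at $\alpha=v-\ell$. The plan there is to rescale the integration variable to a fixed interval and apply dominated convergence, with the square-root endpoint singularities supplying the integrable majorant.
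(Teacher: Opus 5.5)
Your plan follows the paper's proof step for step. Continuity comes from the simple, separated roots on the crossing branch together with the regularity of $K$ and $\Pi$. Then come the two endpoint limits, the intermediate value theorem, and finally Theorem~\ref{theo.closure}. Your explicit rescaling to a fixed interval with dominated convergence is, if anything, more careful than the paper's appeal to continuity at $\alpha=0$ and its truncation of $P$ at order $x^4$. There is a slip at $\alpha=0$: the quartic there is $v^2-\ell^2-v^2x^2$, not $(\ell^2-v^2)x^2+v^2-\ell^2$, whose roots would be the poles. The turning points and the integrand you go on to evaluate are the correct ones.

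One step fails as stated: the subcase $\ell=v/2$ of the divergence. You infer $k\to1$ from $e_2-e_3=-w_-\to0$, but that also needs $e_1=w_+$ to stay away from $0$, and at $\ell=v/2$ it does not. As $\alpha\to(v/2)^-$, the quantities $\beta=2\alpha(\ell+\alpha)-v^2$, $\Sigma$ and $D=v^2-4\alpha\ell=v(v-2\alpha)$ all vanish; this is the corner of Remark~\ref{rem.twist}. Since $\beta=O(v-2\alpha)$ while $v\sqrt D$ is of order $\sqrt{v-2\alpha}$, the roots $w_\pm=(\beta\pm v\sqrt D)/(2\alpha^2)$ satisfy $w_-/w_+\to-1$. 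Hence $k^2=w_+/(w_+-w_-)\to1/2$, and $K(k)$ stays bounded.

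The divergence sits in the prefactor instead. On this branch $\alpha\sqrt{e_1-e_3}=\alpha\sqrt{w_+-w_-}=\sqrt v\,D^{1/4}$, so $T_\theta=4K(k)/(\sqrt v\,D^{1/4})\geq 2\pi/(\sqrt v\,D^{1/4})\to\infty$. Because the whole band now collapses onto the equator, $\dot\varphi\to v$ uniformly on it, and $\Delta\varphi\to+\infty$. With that repair your conclusion stands.

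For $\ell<v/2$ your ``positive weight near the equator, bounded elsewhere'' argument is right, but the exclusion of the poles alone does not justify it. You also need $w_+\to1-y_\ast\in(0,1)$, so that the outer turning point stays a uniformly simple root away from both the equator and the poles. The paper packages exactly this in the splitting~\eqref{eq.dphisplit}, $\Delta\varphi=(\ell+\alpha)T_\theta+\ell\oint(1-y)/y\,\d\tau$. There the factor $(1-y)/y=x^2/(1-x^2)$ keeps the remainder bounded near the equator.
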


\begin{proof} Continuity is that of the roots of $Q$, which are simple on the open interval by Proposition~\ref{prop.separatrix}, together with the continuity of the complete integrals $K$ and $\Pi$ in the modulus and characteristic of~\eqref{eq.rotmodulus}.

For the limit as the field switches off, set $\alpha = 0$ in the quartic~\eqref{eq.quartic}, which degenerates to $P(x) = v^2 - \ell^2 - v^2x^2$, and in the rotation number~\eqref{eq.rot}. Substituting $x = c\sin\psi$ with $c = \sqrt{1-\ell^2/v^2}$ carries $\sqrt{P}$ to $vc\cos\psi$ and gives \beq \label{eq.geodesic} \Delta\varphi\big\vert_{\alpha = 0} = 2\int_{-\pi/2}^{\pi/2} \frac{\ell}{v}\,\frac{\d\psi}{1 - c^2\sin^2\psi} = \frac{2\pi\ell}{v\sqrt{1-c^2}} = 2\pi \operatorname{sign}(\ell) , \eeq since $\sqrt{1-c^2} = \vert\ell\vert/v$. The trajectory is then a great circle, which closes after one turn, as it must. The integrand of~\eqref{eq.rot} and its limits of integration depend continuously on $\alpha$ at $\alpha = 0$, so $\rho \to \operatorname{sign}(\ell)$.

For the limit at the equatorial value, let $\ell \leq v/2$ and split the azimuthal velocity as \beq \label{eq.dphisplit} \frac{\ell}{y} + \alpha = \left( \ell + \alpha \right) + \ell\,\frac{1-y}{y} , \qquad \text{so that} \qquad \Delta\varphi = \left( \ell + \alpha \right) T_\theta + \ell \oint \frac{1-y}{y}\ \d\tau . \eeq The product of the roots of the inner quadratic of~\eqref{eq.ymotion} is $y_-y_+ = \ell^2/\alpha^2$, so $y_- \to \ell^2/(v-\ell)^2 > 0$ as $y_+ \to 1$, and the band stays away from the poles uniformly. Near the upper endpoint $\d\tau$ behaves as $\d y /2\alpha(1-y)\sqrt{y-y_-}$ once $y_+$ has reached $1$, while the factor $(1-y)/y$ vanishes there to first order, so the second integral in~\eqref{eq.dphisplit} remains bounded as that value is approached. Since $\ell + \alpha \to v > 0$, it therefore suffices that $T_\theta$ diverge. Writing that quadratic as $\alpha^2(y_+-y)(y-y_-)$, the upper turning point $y_+$ tends to $1$ there by Proposition~\ref{prop.separatrix}, which places $y_\ast$ at the equator or below it, so the integrand of \beq \label{eq.Tdiverge} T_\theta = 2\mu \int_{y_-}^{\min(y_+,1)} \frac{\d y}{2\alpha\sqrt{(1-y)(y_+-y)(y-y_-)}} \eeq behaves as $\left[(1-y)(y_+-y)\right]^{-1/2}$ near the upper endpoint, which is integrable while $y_+ < 1$ and tends to $(1-y)^{-1}$ as $y_+ \to 1$. The integral therefore grows without bound, logarithmically in the distance to that value when $\ell < v/2$, and faster when $\ell = v/2$, where the lower turning point tends to the equator along with the upper one. With the splitting~\eqref{eq.dphisplit} this gives $\Delta\varphi \to +\infty$.

For $v/2 < \ell < v$ the band collapses onto the equator instead. Near $x = 0$ the quartic~\eqref{eq.quartic} reads $P = \Sigma + \beta x^2 + O(x^4)$, where $\Sigma = v^2 - (\ell+\alpha)^2$ is its constant term and $\beta = 2\alpha(\ell+\alpha) - v^2$ its quadratic coefficient. Letting $\alpha \to (v-\ell)^-$ sends $\Sigma$ to zero from above while $\beta$ tends to $v(v-2\ell) < 0$. The upper turning point $w_+ = \Sigma/\vert \beta \vert + O(\Sigma^2)$ tends to zero with it, so on the band the polar motion is the harmonic oscillation $\dot x^2 = \Sigma - \vert \beta \vert x^2$ up to terms of order $x^4$, of period $2\pi/\sqrt{\vert \beta \vert}$, while the azimuthal velocity $\ell/(1-x^2) + \alpha$ tends uniformly on the band to $\ell + \alpha = v$. Multiplying the two limits gives \beq \label{eq.eqcircle} \Delta\varphi \longrightarrow \frac{2\pi v}{\sqrt{v\left( 2\ell - v \right)}} = 2\pi \sqrt{\frac{v}{2\ell-v}} , \eeq the winding of the equatorial circle that the band collapses onto.

Thus the interval~\eqref{eq.range} is the image of a connected set under a continuous map with those two limits, so it contains every value between them, and in particular every rational one. \end{proof}

Theorem~\ref{theo.realise} produces a value of $\alpha$ for each rational winding, and says nothing about how many. \emph{How many values of $\alpha$ carry a given winding?} That is a question about the derivative of the rotation number, which we now compute in closed form. We write \beq \label{eq.DENS} D = v^2 - 4\alpha\ell , \qquad E = v^2 - 2\alpha\ell , \qquad N = v^2 - 2\ell(\ell+\alpha) , \qquad \Sigma = v^2 - (\ell+\alpha)^2 , \eeq of which $\Sigma$ is positive exactly on the branch below the equatorial value, by Proposition~\ref{prop.separatrix}, and $D$ is positive wherever the turning points are distinct.

\begin{theo} \label{theo.twist} Let $\ell \neq 0$ and let $\alpha$ be an admissible value at which the turning points are distinct. Then \beq \label{eq.twist} \frac{\partial \Delta\varphi}{\partial \alpha} = \frac{1}{D}\left[ E\, T_\theta + \frac{\alpha^2 N}{\Sigma}\, M_2 \right] , \qquad M_2 = \oint \frac{x^2\ \d x}{\sqrt{P(x)}} , \eeq the integrals being taken over one polar period as in~\eqref{eq.rot}. \end{theo}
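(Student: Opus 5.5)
The plan is to treat $\Delta\varphi$ as a period of a meromorphic differential on the elliptic curve $y^2 = P(x)$ rather than as a real integral between moving turning points. Written as a cycle integral $\oint \omega$, with $\omega = \left(\ell/(1-x^2) + \alpha\right)\d x/\sqrt{P}$ and the cycle encircling the cut $[x_-,x_+]$ (or the two cuts on the equatorial branch, where $\mu=2$), the integral can be differentiated in $\alpha$ under the integral sign. The cycle can be held fixed while $\alpha$ varies in a neighbourhood, because the hypothesis that the turning points are distinct keeps the branch points simple and away from $x=\pm1$ by Proposition~\ref{prop.poles}. This disposes of the endpoint terms at once. Differentiation gives
\[
\partial_\alpha \omega = \frac{\d x}{\sqrt{P}} - \frac12\left(\frac{\ell}{1-x^2}+\alpha\right)\frac{\partial_\alpha P}{P^{3/2}}\,\d x , \qquad \partial_\alpha P = -2\alpha x^4 + 2(\ell+2\alpha)x^2 - 2(\ell+\alpha),
\]
read off from the quartic~\eqref{eq.quartic}.

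The second step is to reduce every term to a basis of three periods: $T_\theta=\oint \d x/\sqrt P$, $M_2=\oint x^2\d x/\sqrt P$ and the third-kind period $J=\oint \d x/\bigl((1-x^2)\sqrt P\bigr)$. I would do this modulo exact differentials $\d\!\left(g(x)/\sqrt{P}\right) = \left(g'P - \tfrac12 gP'\right)P^{-3/2}\d x$, whose periods vanish. The natural choices are $g = x$ and $g = x^3$ for the polynomial part, and $g = x/(1-x^2)$ for the part carrying the pole at $x=\pm1$. Evenness of $P$ keeps everything in even powers, so each reduction is a small linear system in the coefficients $-\alpha^2$, $\beta = 2\alpha(\ell+\alpha)-v^2$ and $\Sigma = P(0)$. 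The determinants of these systems should be $\Sigma$ and the discriminant $\beta^2+4\alpha^2\Sigma$ of $P$ in $w=x^2$. A short computation should identify this discriminant with $v^2 D$ up to a sign, and that is where the prefactors $1/D$ and $1/\Sigma$ in~\eqref{eq.twist} come from. In parallel I would use $P(\pm1)=-\ell^2$ from~\eqref{eq.Ppole} to simplify the residue-type terms at $x=\pm1$. These should also produce the factor $\ell$ needed to cancel against the $\ell$ in the integrand.

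The main obstacle is the third step: showing that the coefficient of $J$ cancels identically, so that $\partial_\alpha\Delta\varphi$ involves only $T_\theta$ and $M_2$ with coefficients $E/D$ and $\alpha^2N/(D\Sigma)$. I would first compute the $J$-coefficient contributed by $-\tfrac{\ell}{2}\partial_\alpha P/\bigl((1-x^2)P^{3/2}\bigr)$ after reducing with $g=x/(1-x^2)$, and check that it vanishes using $P(1)=-\ell^2$ together with $\partial_\alpha P(1) = 2\ell$. If the cancellation does not appear directly, the fallback is a structural argument. The function $\ell/(1-x^2)+\alpha$ is essentially $\dot\varphi$, so $\partial_\alpha\Delta\varphi$ can also be computed from the $\tau$-parametrisation. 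There $\partial_\alpha$ of a closed-orbit integral at fixed $(\ell,v)$ yields only the period and a moment in $x^2$, by the usual action--angle identity that differentiates the action integral $\oint\sqrt{P}\,\d x/(1-x^2)$ twice. Once the $J$-terms are gone, I would collect the $T_\theta$ and $M_2$ coefficients and simplify them with $E=v^2-2\alpha\ell$ and $N=v^2-2\ell(\ell+\alpha)$. As a final check, letting $\ell\to0$ should give consistency with the pendulum case of Theorem~\ref{theo.ell0}.
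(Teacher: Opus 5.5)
Your skeleton matches the paper's. You differentiate on a contour around the cut that stays fixed as $\alpha$ varies, then lower $P^{-3/2}$ to $P^{-1/2}$ by discarding exact differentials $\d\bigl(g/\sqrt{P}\bigr)$ with $g = x$ and $g = x^3$. That is the paper's identity $1-x^2 = \mathcal{A}P + \tfrac12\mathcal{B}P'$ with $\mathcal{A} = a_0 + a_2x^2$ and $\mathcal{B} = b_1x + b_3x^3$. The denominators you predict are right, and $\beta^2 + 4\alpha^2\Sigma = v^2D$ holds exactly, not up to sign.

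The gap is in the step you call the main obstacle, because it rests on a false value. Your own formula for $\partial_\alpha P$ gives $\partial_\alpha P(1) = -2\alpha + 2(\ell+2\alpha) - 2(\ell+\alpha) = 0$, not $2\ell$. It must vanish, since $P(\pm1) = -\ell^2$ does not depend on $\alpha$. With the value you wrote, $\partial_\alpha\omega$ would carry a nonzero residue at the points over $x = \pm1$, where $P(\pm1)\neq 0$. No exact differential can remove a residue, so the coefficient of $J$ could not vanish, and the check you describe would come out against the theorem. The correct value dissolves the obstacle instead: the residue of $\omega$ there, $\mp\ell/\bigl(2\sqrt{-\ell^2}\bigr)$, is independent of $\alpha$, so $\partial_\alpha\omega$ has no residue and no third-kind period ever enters.

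The missing idea is the factorisation the paper uses as its first step. With $y = 1-x^2$ the quartic reads $P = yv^2 - (\ell+\alpha y)^2$, so
\[
\partial_\alpha P = -2y\,(\ell+\alpha y) , \qquad \Bigl(\frac{\ell}{y}+\alpha\Bigr)\partial_\alpha P = -2(\ell+\alpha y)^2 = -2\bigl(yv^2 - P\bigr) .
\]
Your expression for $\partial_\alpha\omega$ then collapses to $v^2(1-x^2)P^{-3/2}\,\d x$ before any reduction. A single reduction of the numerator $1-x^2$ with your $g = x, x^3$ shows this is congruent, modulo exact differentials, to $D^{-1}\bigl[E + \alpha^2Nx^2/\Sigma\bigr]P^{-1/2}\,\d x$, which is the theorem.

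Your action-integral fallback would also succeed. Mixed partials of $\oint\sqrt{P}\,\d x/(1-x^2)$ give $\partial_\alpha\Delta\varphi = \partial_\ell\bigl[(\ell+\alpha)T_\theta - \alpha M_2\bigr]$, but it works only through the same divisibility of $\partial_\alpha P$ by $1-x^2$.

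One small correction: in the $x$-plane the real band is always a single cut $[x_-,x_+]$, which is $[-\sqrt{w_+},\sqrt{w_+}]$ when the band crosses the equator. The factor $\mu = 2$ arises only after passing to $w = x^2$, not from a second cut.
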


\begin{proof} Write $y = 1-x^2$, so that the quartic~\eqref{eq.quartic} reads $P = yv^2 - (\ell+\alpha y)^2$ and the integrand of~\eqref{eq.rot} is $f = (\ell+\alpha y)/y$. Differentiating gives $\partial_\alpha P = -2y(\ell+\alpha y)$, whence $f\,\partial_\alpha P = -2(\ell+\alpha y)^2$, and therefore \beq \label{eq.predw} \frac{\partial}{\partial\alpha} \left[ f\, P^{-1/2} \right] = P^{-1/2} + \left(\ell+\alpha y\right)^2 P^{-3/2} = v^2\, y\, P^{-3/2} , \eeq the last step using $(\ell+\alpha y)^2 = yv^2 - P$ once more. We may take the derivative under the integral sign on a contour encircling the cut between the turning points, on which the endpoints do not move, so \beq \label{eq.dwcontour} \frac{\partial \Delta\varphi}{\partial \alpha} = v^2 \oint \left( 1-x^2 \right) P^{-3/2}\ \d x . \eeq

Let us now reduce the exponent. Matching coefficients in $x$ produces polynomials $\mathcal{A} = a_0 + a_2x^2$ and $\mathcal{B} = b_1x + b_3x^3$ with $1 - x^2 = \mathcal{A} P + \tfrac{1}{2}\mathcal{B} P'$, namely \beq \label{eq.pfcoeffs} a_0 = \frac{1}{\Sigma} , \qquad b_3 = \frac{\alpha^2 N}{v^2 D \Sigma} , \qquad a_2 = -2b_3 , \qquad b_1 = \frac{2\alpha\ell(\alpha+\ell)^2 - v^2(\alpha^2+\ell^2)} {v^2 D \Sigma} . \eeq Since $\oint \tfrac{\d}{\d x}\left[ \mathcal{B} P^{-1/2} \right] \d x = 0$ around a closed contour, we have $\oint \tfrac{1}{2}\mathcal{B} P' P^{-3/2}\d x = \oint \mathcal{B}' P^{-1/2}\d x$, and hence \beq \label{eq.pfreduce} \oint \left( 1-x^2 \right) P^{-3/2}\ \d x = \oint \left( \mathcal{A} + \mathcal{B}' \right) P^{-1/2}\ \d x , \qquad \mathcal{A} + \mathcal{B}' = \frac{1}{v^2 D}\left[ E + \frac{\alpha^2 N}{\Sigma}x^2 \right] , \eeq where the expression for $\mathcal{A}+\mathcal{B}'$ follows from~\eqref{eq.pfcoeffs} and $a_2 = -2b_3$. Substituting~\eqref{eq.pfreduce} into~\eqref{eq.dwcontour} and recognising $\oint P^{-1/2}\d x = T_\theta$ gives the identity~\eqref{eq.twist}. \end{proof}

Every ingredient of~\eqref{eq.twist} is positive on this branch except possibly $N$, so the sign of the derivative turns on that one quantity. When $N$ is negative the two terms compete, and the contest is decided by how the weight $P^{-1/2}\d x$ distributes $x^2$ across the band.

\begin{lemma} \label{lem.moment} On the branch below the equatorial value, \beq \label{eq.chebyshev} M_2 \ \leq\ \frac{w_+}{2}\, T_\theta , \eeq where $w_+$ is the larger root in $w = x^2$ of the quartic~\eqref{eq.quartic}. \end{lemma}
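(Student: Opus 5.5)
The plan is to read the inequality as a statement about a weighted average. On the branch below the equatorial value we have $\Sigma > 0$ by Proposition~\ref{prop.separatrix}, so the constant term of the quartic~\eqref{eq.quartic} is positive. Hence $w_- < 0 < w_+$, and $P = \alpha^2 (w_+ - x^2)(x^2 - w_-)$ is positive on the open band $x \in (-\sqrt{w_+}, \sqrt{w_+})$, which crosses the equator. Then $M_2/T_\theta$ is the mean of $x^2$ against the probability weight $P^{-1/2}\,\d x$ on that band. The claim~\eqref{eq.chebyshev} says this mean is at most half its largest possible value $w_+$.

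First, I will substitute $x = \sqrt{w_+}\sin\psi$ with $\psi \in [-\pi/2, \pi/2]$. This is the same substitution as in the proof of Theorem~\ref{theo.realise}. The factor $\sqrt{w_+ - x^2}$ cancels against $\d x$, which gives
\beq
\frac{\d x}{\sqrt{P}} = \frac{\d\psi}{\alpha\sqrt{w_+ \sin^2\psi + \vert w_-\vert}} .
\eeq
Over one polar period the contour integrals are twice the integrals over this interval. The factor of two cancels in the ratio, so
\beq
\frac{M_2}{T_\theta} = w_+\, \frac{\int s(\psi)\, \omega(s(\psi))\, \d\psi}{\int \omega(s(\psi))\, \d\psi} , \qquad s = \sin^2\psi , \qquad \omega(s) = \left( w_+ s + \vert w_-\vert \right)^{-1/2} .
\eeq

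Second, I will apply Chebyshev's integral inequality, which is what the equation label anticipates, with respect to the uniform measure $\d\psi$. The functions $s(\psi)$ and $\omega(s(\psi))$ are oppositely ordered: $\omega$ is strictly decreasing in $s$ because $w_+ > 0$, and both depend on $\psi$ only through $s$. Chebyshev's inequality therefore gives
\beq
\int s\,\omega\, \d\psi \cdot \int \d\psi \ \leq\ \int s\, \d\psi \cdot \int \omega\, \d\psi .
\eeq
The uniform mean of $\sin^2\psi$ over $[-\pi/2, \pi/2]$ is $1/2$. Dividing through then yields $M_2/T_\theta \leq w_+/2$, which is~\eqref{eq.chebyshev}.

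To make the Chebyshev step self-contained, I would use the symmetrised form. One writes the difference as the double integral
\beq
\tfrac12 \iint \left( s(\psi) - s(\psi') \right) \left( \omega(s(\psi)) - \omega(s(\psi')) \right) \d\psi\, \d\psi' ,
\eeq
whose integrand is pointwise nonpositive.

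I expect the only delicate point to be bookkeeping rather than analysis. It consists of two checks. The first is that on this branch the band really is the full symmetric interval $[-\sqrt{w_+}, \sqrt{w_+}]$, so that the substitution covers exactly one half-period and the uniform mean of $\sin^2\psi$ is exactly $1/2$. This uses $w_- < 0$, that is, $\Sigma > 0$. The second is that the factors $2$ and $\mu = 2$ from the contour convention cancel between $M_2$ and $T_\theta$. Equality fails strictly whenever $w_- \neq 0$ is finite, but only the non-strict bound is needed.
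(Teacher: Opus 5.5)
Your proposal is correct and follows the paper's proof essentially step for step. It uses the same substitution $x = \sqrt{w_+}\sin\psi$, the same observation that the weight $(w_+\sin^2\psi - w_-)^{-1/2}$ decreases in $\sin^2\psi$, and the same appeal to Chebyshev's integral inequality with mean $1/2$ for $\sin^2\psi$; your symmetrised double-integral form usefully justifies the ``oppositely ordered'' step even though $\sin^2\psi$ is not monotone on $[-\pi/2,\pi/2]$.
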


\begin{proof} There $w_- \leq 0 < w_+$ and the band is $x^2 \leq w_+$, so we may write $P = \alpha^2(w_+-x^2)(x^2-w_-)$ and substitute $x = \sqrt{w_+}\,\sin\psi$ with $\psi \in [-\pi/2,\pi/2]$. This carries $\d x/\sqrt P$ to $g(\psi)\,\d\psi$ with $g = \alpha^{-1}\left( w_+\sin^2\psi - w_- \right)^{-1/2}$ and $x^2$ to $w_+\sin^2\psi$, so that $M_2/T_\theta = w_+\,\langle g \sin^2\psi\rangle / \langle g \rangle$, the brackets denoting means over $\psi$. Now $g$ is a decreasing function of $\sin^2\psi$, so the two are oppositely ordered. Chebyshev's integral inequality then gives $\langle g\sin^2\psi\rangle \leq \langle g \rangle \langle \sin^2\psi\rangle$. Since $\langle \sin^2\psi\rangle = 1/2$, the bound~\eqref{eq.chebyshev} follows. \end{proof}

\begin{theo} \label{theo.mono} Let $0 < \vert\ell\vert < v$. Then $\Delta\varphi$ is a strictly increasing function of $\alpha$ on the whole branch $0 < \alpha < v-\ell$. \end{theo}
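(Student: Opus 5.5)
The plan is to read the sign of $\partial\Delta\varphi/\partial\alpha$ off the closed form~\eqref{eq.twist} of Theorem~\ref{theo.twist}, with Lemma~\ref{lem.moment} controlling the only term that can be negative. By Proposition~\ref{prop.separatrix}, on the open branch $0<\alpha<v-\ell$ the turning points are distinct, so Theorem~\ref{theo.twist} applies at every point of the branch.

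\emph{Step one: sign the coefficients.} By Proposition~\ref{prop.separatrix}, $\Sigma>0$ on the branch below the equatorial value. $T_\theta>0$ and $M_2>0$ trivially. For $E$: if $\ell<0$ then $E>v^2>0$. If $\ell>0$ then $2\alpha\ell<2\ell(v-\ell)\le v^2/2$, so $E>v^2/2$. Likewise $D=v^2-4\alpha\ell>v^2-4\ell(v-\ell)=(v-2\ell)^2\ge 0$ when $\ell>0$, and $D>v^2$ when $\ell<0$, so $D>0$ throughout. Hence if $N\ge0$ both terms in the bracket of~\eqref{eq.twist} are nonnegative and the first is strictly positive, and we are done.

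\emph{Step two: the case $N<0$.} This forces $2\ell(\ell+\alpha)>v^2$, so $\ell>0$. Here I apply Lemma~\ref{lem.moment}, $M_2\le \tfrac{w_+}{2}T_\theta$, to bound the bracket below:
\[
E\,T_\theta+\frac{\alpha^2N}{\Sigma}M_2\ \ge\ \frac{T_\theta}{2\Sigma}\left[\,2\Sigma E+N\,\alpha^2 w_+\right].
\]
It therefore suffices to prove the algebraic inequality $2\Sigma E+N\alpha^2w_+>0$.

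\emph{Step three: make $\alpha^2 w_+$ explicit.} From~\eqref{eq.quartic}, $w_\pm$ are the roots of $-\alpha^2w^2+\beta w+\Sigma$ with $\beta=2\alpha(\ell+\alpha)-v^2$. The discriminant simplifies:
\[
\beta^2+4\alpha^2\Sigma=v^2(v^2-4\alpha\ell)=v^2D,
\]
so $2\alpha^2w_+=\beta+v\sqrt D$. The target becomes
\[
4\Sigma E+N\left(\beta+v\sqrt D\right)>0 .
\]
By Proposition~\ref{prop.scaling} I set $v=1$. I then have to check this for $\ell\in(1/2,1)$ with $\alpha$ in the subinterval of $(0,1-\ell)$ where $N<0$, that is $\alpha>(1-2\ell^2)/(2\ell)$.

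\emph{Main obstacle.} The obstacle is this two-variable inequality containing $\sqrt D$. I would first try the substitution $\sqrt D=s$, that is $\alpha=(1-s^2)/(4\ell)$, which makes everything polynomial in $(s,\ell)$. Isolating the $N s$ term, squaring where signs permit, and factoring should then reduce the claim to positivity of a polynomial on the region. I would pay particular attention to the endpoint $\alpha\to 1-\ell$, where $\Sigma\to0$ and the inequality becomes tight: there $w_+\to0$ or $w_+\to 1$ according as $\ell\gtrless 1/2$. If the Chebyshev bound turns out to be too lossy near that corner, I would fall back on the sharper representation $M_2/T_\theta=w_+\langle g\sin^2\psi\rangle/\langle g\rangle$ used in the proof of Lemma~\ref{lem.moment}, which becomes exact in the limit.

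Strict positivity of $\partial\Delta\varphi/\partial\alpha$ on the whole open branch then gives strict monotonicity.
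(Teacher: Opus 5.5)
Your Steps one to three follow the paper's proof exactly. Your target $4\Sigma E+N(\beta+v\sqrt D)>0$ is the paper's $\mathcal{K}+Nv\sqrt D>0$ of~\eqref{eq.KND}, since $\beta=2\alpha^2-E$. There are two gaps after that point.

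The first gap is the claim that $N<0$ forces $\ell>0$, which is false. For $\ell<0$ and $0<\alpha<\vert\ell\vert$, both $\ell$ and $\ell+\alpha$ are negative. Then $2\ell(\ell+\alpha)=2\vert\ell\vert(\vert\ell\vert-\alpha)$ exceeds $v^2$ as soon as $\vert\ell\vert>v/\sqrt2$ and $\alpha<\vert\ell\vert-v^2/(2\vert\ell\vert)$. For instance, $v=1$, $\ell=-0.9$, $\alpha=0.1$ gives $N=-0.44$ with $\Sigma=0.36>0$. This is the negative corner named in Remark~\ref{rem.twist}. The paper's region $\mathcal{R}$ of~\eqref{eq.corner} includes it, and its face $\ell=-1$ is checked explicitly. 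By confining the hard case to $\ell\in(1/2,1)$, you leave part of the branch unproved.

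The second and more serious gap is that the decisive inequality is only announced. You also leave open whether Lemma~\ref{lem.moment} is strong enough. That inequality is the whole substance of the proof, and it is tight.

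\begin{itemize}
\item For $\ell>v/2$, both $\Sigma$ and $w_+$ vanish as $\alpha\to v-\ell$, so $\mathcal{K}+Nv\sqrt D\to0$.
\item The squared difference $\mathcal{K}^2-N^2v^2D$ vanishes identically on the face $\Sigma=0$, because there $\mathcal{K}=N\beta$ and $v^2D=\beta^2$.
\item So no positivity argument on the squared expression can close until that factor is divided out.
\end{itemize}

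The paper does exactly this.

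\begin{itemize}
\item At $v=1$ it establishes the factorisation $\mathcal{K}^2-N^2D=-4\Sigma\mathcal{W}$ of~\eqref{eq.KW}, with an explicit polynomial $\mathcal{W}$.
\item It then proves $\mathcal{K}>0$, which licenses the squaring, and $\mathcal{W}<0$ on $\mathcal{R}$.
\item It does so by showing that no critical point of either polynomial lies in $\mathcal{R}$, and by evaluating both on the faces $\Sigma=0$, $N=0$, $\alpha=0$, $\ell=\pm1$.
\item Equality occurs only at the excluded corner $(\alpha,\ell)=(1/2,1/2)$. This also settles your worry: the Chebyshev bound suffices, if only just.
\end{itemize}

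Your substitution $s=\sqrt D$ could stand in for the squaring. But until the factor $\Sigma$ is extracted and the sign of its cofactor is established on the full region, negative $\ell$ included, the proposal reduces the theorem to an inequality without proving it.
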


\begin{proof} Throughout, $\Sigma > 0$ below the equatorial value, by Proposition~\ref{prop.separatrix}. The turning points are distinct, since $(\ell-\alpha)^2 \geq 0$ gives $4\alpha\ell \leq (\ell+\alpha)^2 < v^2$ when $\alpha\ell > 0$, while $D \geq v^2$ otherwise; hence $D > 0$, and $E > D > 0$ likewise. Moreover, the period is positive and $M_2 \geq 0$. If $N \geq 0$ every term of~\eqref{eq.twist} is therefore positive and there is nothing to prove.

Suppose $N < 0$. Then $\alpha^2 N/\Sigma < 0$, so Lemma~\ref{lem.moment} bounds the second term from below and \beq \label{eq.reduced} E\,T_\theta + \frac{\alpha^2 N}{\Sigma}M_2 \ \geq\ \frac{T_\theta}{2\Sigma} \left[ 2E\Sigma + \alpha^2 N w_+ \right] . \eeq Indeed, the larger root satisfies $2\alpha^2 w_+ = 2\alpha^2 - E + v\sqrt{D}$, because the discriminant of the quadratic in $w$ is $v^2D$, so that \beq \label{eq.KND} 2\left[ 2E\Sigma + \alpha^2 N w_+ \right] = \mathcal{K} + N v \sqrt{D} , \qquad \mathcal{K} = 4E\Sigma + N\left( 2\alpha^2 - E \right) . \eeq Since $N < 0$, the right-hand side of~\eqref{eq.KND} is positive precisely when $\mathcal{K} > 0$ and $\mathcal{K}^2 > N^2v^2D$. We prove both on the region \beq \label{eq.corner} \mathcal{R} = \left\{ \alpha > 0, \quad \vert\ell\vert < v, \quad \Sigma > 0, \quad N < 0 \right\} , \eeq taking $v = 1$, which costs no generality by Proposition~\ref{prop.scaling}. A direct expansion gives \beq \label{eq.KW} \mathcal{K}^2 - N^2 D = -4\,\Sigma\, \mathcal{W} , \qquad \mathcal{W} = 4\alpha^2\ell^2(\ell+\alpha)^2 - 20\alpha^2\ell^2 - 4\alpha^3\ell + 12\alpha\ell + \alpha^2 - 2 , \eeq so that, $\Sigma$ being positive, the claim is that $\mathcal{K} > 0$ and $\mathcal{W} < 0$ on $\mathcal{R}$.

Indeed, the region $\mathcal{R}$ is bounded, since $\vert\ell\vert < 1$ and $\vert\ell+\alpha\vert < 1$ force $\alpha < 2$. Both $\mathcal{K}$ and $\mathcal{W}$ are polynomials, so each attains its extrema on $\overline{\mathcal{R}}$ either at an interior critical point or on the boundary. Eliminating $\alpha$ from $\nabla \mathcal{W} = 0$ leaves $\ell\,(8\ell^2-1)$, so the critical points of $\mathcal{W}$ are $(\alpha,\ell) = (0,0)$ and $\pm\left(3\sqrt2/4,\ \sqrt2/4\right)$; eliminating $\alpha$ from $\nabla \mathcal{K} = 0$ leaves $\ell\,(2\ell^2-1)^3$, so the critical points of $\mathcal{K}$ are $(0,0)$ and $\pm\left(1/\sqrt2,\ 1/\sqrt2\right)$. At the origin $N = 1 > 0$; at each of the other four $(\ell+\alpha)^2 = 2$, so $\Sigma < 0$. None of them lies in $\mathcal{R}$.

We evaluate both polynomials on each face of the boundary. On $\Sigma = 0$, where $\ell = 1-\alpha$, \beq \label{eq.faceS} \mathcal{W} = -\left( 3\alpha^2-4\alpha+2 \right) \left( 2\alpha-1 \right)^2 , \qquad \mathcal{K} = \left( 2\alpha-1 \right)^2 , \eeq The quadratic factor has discriminant $-8$, hence is positive, so $\mathcal{W} \leq 0$ and $\mathcal{K} \geq 0$ with equality in both only at $\alpha = 1/2$. On $N = 0$, where $2\alpha\ell = 1-2\ell^2$, \beq \label{eq.faceN} \mathcal{W} = -4\ell^2\left( 4\ell^2-1 \right) , \qquad \mathcal{K} = 2\left( 4\ell^2-1 \right) , \eeq Here $N < 0$ together with $\vert\ell+\alpha\vert<1$ forces $\vert\ell\vert > 1/2$, so $\mathcal{W} < 0$ and $\mathcal{K} > 0$ there. On $\alpha = 0$ we have $\mathcal{W} = -2$ and $\mathcal{K} = 3-2\ell^2 > 1$, since $\ell^2 < 1$. The face $\ell = 1$ meets $\mathcal{R}$ nowhere, because $\Sigma = -\alpha(\alpha+2) < 0$ there. On the face $\ell = -1$ the condition $N < 0$ restricts $\alpha$ to $(0,1/2)$, where $\mathcal{W} = 4\alpha^4-4\alpha^3-15\alpha^2-12\alpha-2$ has maximum $-2$ and $\mathcal{K} = -4\alpha^3+6\alpha^2+8\alpha+1$ has minimum $1$.

Hence $\mathcal{W} \leq 0$ and $\mathcal{K} \geq 0$ on $\overline{\mathcal{R}}$, with equality in either only at $(\alpha,\ell) = (1/2,1/2)$, a corner at which $\Sigma$, $N$ and $D$ vanish together, and therefore outside $\mathcal{R}$. On $\mathcal{R}$ itself the inequalities are strict, so $\mathcal{K}^2 > N^2D$ by~\eqref{eq.KW}, the right-hand side of~\eqref{eq.KND} is positive, and~\eqref{eq.reduced} makes $\partial_\alpha \Delta\varphi > 0$. \end{proof}

\begin{corollary}[Uniqueness of $\alpha$] \label{cor.twist} Let $0 < \vert\ell\vert < v$. Each rational number in the set $\mathcal{I}_\ell$ of~\eqref{eq.range} is the rotation number of exactly one value of $\alpha$ on the branch $0 < \alpha < v-\ell$, and carries exactly one closed trajectory there up to the choice of starting point. \end{corollary}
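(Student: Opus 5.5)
The corollary combines the existence result of Theorem~\ref{theo.realise} with the strict monotonicity of Theorem~\ref{theo.mono}. The argument has three steps: existence, uniqueness, and then the passage from a unique value of $\alpha$ to a unique closed trajectory.

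\emph{Existence.} Fix $0<\vert\ell\vert<v$ and a rational $p/q\in\mathcal{I}_\ell$ in lowest terms. Theorem~\ref{theo.realise} already produces some $\alpha\in(0,v-\ell)$ with $\rho(\alpha)=p/q$. It does so because $\rho$ is continuous on the branch and has the stated limits at the two ends.

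\emph{Uniqueness of $\alpha$.} Theorem~\ref{theo.mono} says that $\Delta\varphi$, and hence $\rho=\Delta\varphi/2\pi$, is strictly increasing on the whole open interval $(0,v-\ell)$. A strictly increasing function takes each value at most once, so the $\alpha$ just found is the only one on the branch. The only point to check is that the hypotheses of Theorem~\ref{theo.twist} hold at every point of the branch, so that the monotonicity covers the entire interval. The proof of Theorem~\ref{theo.mono} supplies this: it shows that $D>0$ and $\Sigma>0$ throughout, so the turning points are distinct and simple. Monotonicity also shows that $\rho$ maps $(0,v-\ell)$ bijectively onto $\mathcal{I}_\ell$. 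At the lower end one should note that the limit $\operatorname{sign}(\ell)$ is the infimum of $\rho$, which requires $\ell>0$. When $\ell<0$ the lower limit is $-1$, and the interval in~\eqref{eq.range} should be read as the open image of $\rho$; the argument does not depend on this.

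\emph{From $\alpha$ to a single closed trajectory.} With $\alpha$, $\ell$ and $v$ all fixed, the level set $\{\ell,v\}$ is fixed. By Theorem~\ref{theo.closure} every trajectory on it closes after $q$ polar periods and $p$ turns. It remains to see that these trajectories are one curve, up to the choice of starting point and up to rotation about the polar axis. There are two reasons.
\begin{itemize}
\item The polar motion $\dot x^2=P(x)$ on the band $[x_-,x_+]$ is a single periodic orbit, unique up to a time shift. The band is connected because $w_-\le 0<w_+$ on this branch, so $x$ oscillates between $\pm\sqrt{w_+}$ and the level set in the $(x,\dot x)$ plane is one oval.
\item The azimuth is then fixed by~\eqref{eq.phidot} up to the constant $\varphi_0$. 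That constant is the rotational symmetry generated by $\partial/\partial\varphi$.
\end{itemize}
So "exactly one closed trajectory up to the choice of starting point" should be read modulo the azimuthal rotation, and I would make this reading explicit in the proof.

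The main obstacle is this identification step, not the analysis, since the analytic content is already in Theorems~\ref{theo.realise} and~\ref{theo.mono}. Two things need care. One is the precise quotient, by time shift and by rotation in $\varphi$. The other is ensuring that the band below the equatorial value is a single connected interval, not two hemispheric components. Here I would invoke Proposition~\ref{prop.separatrix} together with the sign $w_-\le 0$ established in Lemma~\ref{lem.moment}.
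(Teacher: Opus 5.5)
Your proposal is correct and follows the paper's proof. Strict monotonicity from Theorem~\ref{theo.mono}, combined with the continuity and endpoint limits of Theorem~\ref{theo.realise}, makes $\rho$ a bijection of $(0,v-\ell)$ onto $\mathcal{I}_\ell$, and Theorem~\ref{theo.closure} supplies the closed trajectory.

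Your extra care is a legitimate sharpening of what the paper's one-line proof leaves implicit. On this branch $w_-<0$, so the band is a single interval, and ``exactly one trajectory'' must be read modulo the azimuthal rotation $\varphi_0$ as well as the time shift. Your aside about the lower endpoint is unnecessary and slightly muddled. Since $\rho$ increases, $\operatorname{sign}(\ell)$ is its infimum for either sign of $\ell$, and for $\ell<0$ this is $-1$, exactly as~\eqref{eq.range} states.
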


\begin{proof} By Theorem~\ref{theo.mono} the rotation number is a strictly increasing continuous function of $\alpha$ on that branch, so it attains each value of its range exactly once, and Theorem~\ref{theo.realise} identifies that range as $\mathcal{I}_\ell$. Theorem~\ref{theo.closure} then supplies the closed trajectory. \end{proof}

\begin{remark}[Where the argument is delicate] \label{rem.twist} The competition in~\eqref{eq.twist} is real, and no artefact of the estimate. The quantity $N$ does turn negative on the branch, and since $\vert\ell+\alpha\vert < v$ there this asks $\ell > v/2$ for positive $\ell$ and, because a positive $\alpha$ raises $N$ when $\ell$ is negative, $\ell < -v/\sqrt2$ for negative $\ell$. On that corner the bound~\eqref{eq.chebyshev} carries the proof, and it is sharp enough only just. The margin in $\mathcal{W} < 0$ closes as $(\ell,\alpha) \to (v/2,v/2)$, the corner at which the separatrix meets the circular orbit $v^2 = 4\alpha\ell$ and the band collapses to a point. \end{remark}

Let us mark two features of the range below the equatorial value. The lower endpoint is the great circle, which closes after a single turn, so the field only ever increases the winding there. On a level set with $\ell \leq v/2$ the upper end is unbounded, so every rational winding, however large, is produced by an $\alpha$ below the equatorial value. A level set with $\ell > v/2$ instead carries a largest winding, approached as the band collapses onto the equatorial circle.

Let us now cross the equatorial value. \emph{Does the winding keep growing once the band leaves the equator?} We find that it reverses. A symmetry of the hemisphere band turns it.

\begin{prop} \label{prop.hemisphere} Let $0 < \vert\ell\vert < v$. The $\alpha$ above $v-\ell$ carrying a trajectory of that level set form the interval $(v-\ell,\ v^2/4\ell)$ when $0 < \ell < v/2$, the half line $(v-\ell,\ \infty)$ when $-v < \ell < 0$, and the empty set when $v/2 \leq \ell < v$. On that branch the turning points obey $0 < w_- < w_+ < 1$, and writing $s_1 = w_-+w_+$ and $s_2 = w_-w_+$ for their elementary symmetric functions, \beq \label{eq.wDENS} E = \alpha^2\left( 2 - s_1 \right) , \qquad \Sigma = -\,\alpha^2 s_2 , \qquad N = \alpha^2\left( s_1 - 2s_2 \right) , \eeq so that $E$ and $N$ are positive there while $\Sigma$ is negative. At the right endpoint of the first case the turning points merge and the trajectory is the circle of colatitude $\sin\theta_c = 2\ell/v$. \end{prop}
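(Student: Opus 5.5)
I would work throughout in the variable $w = x^2 = 1-y$, in which the quartic~\eqref{eq.quartic} reads $P = -\alpha^2 w^2 + \beta w + \Sigma$ with $\beta = 2\alpha(\ell+\alpha) - v^2$. Its two roots are $w_\pm$, so $s_1 = \beta/\alpha^2$ and $s_2 = -\Sigma/\alpha^2$. The identity $\Sigma = -\alpha^2 s_2$ is then immediate. For the other two identities I would express $E$ and $N$ through $\beta$ and $\Sigma$. For $E$,
\[
2\alpha^2 - \beta = v^2 - 2\alpha\ell = E ,
\]
which gives $E = \alpha^2(2 - s_1)$. For $N$,
\[
\beta + 2\Sigma = v^2 - 2\ell^2 - 2\alpha\ell = N ,
\]
which gives $N = \alpha^2(s_1 - 2s_2)$. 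These are short expansions and settle~\eqref{eq.wDENS} on any level set, whichever branch we are on.

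Next I would characterise the admissible $\alpha > v-\ell$. A trajectory exists exactly when $P$ is positive somewhere on $0 \le w \le 1$. Above the equatorial value we have $\Sigma < 0$: expanding $\Sigma = (v-\ell-\alpha)(v+\ell+\alpha)$ shows this, since $v+\ell+\alpha > 0$. So $P(0) < 0$, and Proposition~\ref{prop.poles} gives $P(1) = -\ell^2 < 0$. Hence $P$, a downward parabola in $w$, is positive on a subinterval of $(0,1)$ exactly when both roots lie in $(0,1)$ and are distinct. This requires three conditions: the discriminant $\beta^2 + 4\alpha^2\Sigma = v^2 D$ is positive, the vertex satisfies $0 < \beta/2\alpha^2 < 1$, and $s_2 > 0$, which is automatic from $\Sigma < 0$. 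This gives $0 < w_- < w_+ < 1$ directly, and then $E > 0$ because $s_1 < 2$, and $N > 0$ because $s_1 - 2s_2 = w_-(1-w_+) + w_+(1-w_-) > 0$.

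The case analysis reduces to the conditions $D > 0$ and $0 < \beta < 2\alpha^2$. The second of these is $0 < \beta$ together with $E > 0$, and $E > 0$ follows from $D > 0$ whenever $\alpha\ell > 0$.
\begin{itemize}
\item For $\ell < 0$, $D > 0$ and $E > 0$ hold trivially. The condition $\beta > 0$ is $2\alpha(\ell+\alpha) > v^2$, which holds for $\alpha > v-\ell$ because then $\ell+\alpha > v$ and $2\alpha > 2v$. This gives the half line.
\item For $0 < \ell$, $D > 0$ means $\alpha < v^2/4\ell$. The interval $(v-\ell,\, v^2/4\ell)$ is nonempty iff $4\ell(v-\ell) < v^2$, that is iff $(v-2\ell)^2 > 0$. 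I still need to see why $\ell \ge v/2$ is excluded. Here I would use that at $\alpha = v-\ell$ the value $y_\ast = \ell^2/(v-\ell)^2 \ge 1$ from Proposition~\ref{prop.separatrix} places the second root at $w \le 0$. I would then check that for $\alpha > v-\ell$ the vertex condition $\beta < 2\alpha^2$ (equivalently $E > 0$) or $D > 0$ fails. Concretely, $\ell \ge v/2$ gives $v^2/4\ell \le v-\ell$, so $D < 0$ on the whole range and there are no real roots. For $0 < \ell < v/2$ I must verify $\beta > 0$ on the interval. At $\alpha = v-\ell$ we get $\beta = v(v-2\ell) > 0$, and $\beta$ is increasing in $\alpha$, so this holds. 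The vertex condition follows from $E \ge D > 0$.
\end{itemize}

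At $\alpha = v^2/4\ell$ we have $D = 0$, the roots merge at $w = \beta/2\alpha^2$, and the circle has $\sin^2\theta_c = 1 - w$. Substituting gives $1 - w = E/2\alpha^2 = (v^2/2)/(2\alpha^2) = 4\ell^2/v^2$, so $\sin\theta_c = 2\ell/v$.

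I expect the main obstacle to be the careful sign bookkeeping in the admissibility step: confirming that $\beta > 0$ and $E > 0$ hold on the whole claimed range, and that nothing survives for $\ell \ge v/2$. The identities themselves are routine.
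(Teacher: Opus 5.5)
Your identities for $E$, $\Sigma$ and $N$ are correct. Your admissibility criterion is also sound and slightly cleaner than the paper's separate check $E^2-v^2D=4\alpha^2\ell^2>0$: with $P(0)=\Sigma<0$ and $P(1)=-\ell^2<0$, a band with distinct turning points exists if and only if $D>0$ and $0<\beta<2\alpha^2$. The cases $\ell<0$ and $0<\ell<v/2$, and the circular orbit, are handled correctly.

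\textbf{The gap is the exclusion of $v/2<\ell<v$.} Your concrete step claims that $\ell\ge v/2$ gives $v^2/4\ell\le v-\ell$, hence $D<0$ on the whole branch. But for every $\ell>0$,
\[
\frac{v^2}{4\ell}-(v-\ell)=\frac{(v-2\ell)^2}{4\ell}\ \ge\ 0 ,
\]
which you computed yourself one line earlier. Equality holds only at $\ell=v/2$, the one case your argument does cover. For $v/2<\ell<v$ there is a nonempty window $v-\ell<\alpha<v^2/4\ell$; at $v=1$, $\ell=0.7$ it is $(0.3,\,0.357)$. On that window $D>0$, so the roots are real and distinct, and $E>D>0$, so $\beta<2\alpha^2$ holds too. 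Neither condition you propose to test fails there, so as written the argument does not show the branch is empty.

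What fails on that window is the other half of the vertex condition, $\beta>0$. Your observation $y_\ast\ge 1$, i.e.\ $\beta(v-\ell)=v(v-2\ell)\le 0$, is not enough by itself, because $\beta$ increases with $\alpha$. You also need the value at the other end,
\[
\beta(v^2/4\ell)=v^2(v^2-4\ell^2)/8\ell^2<0 .
\]
Monotonicity then gives $\beta<0$ on the whole window, so $s_1<0<s_2$: both roots are negative and no band exists. Beyond $v^2/4\ell$ the roots are complex. This is exactly how the paper closes the case.

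An alternative repair: for $\alpha>v-\ell$ we have $P(0)=\Sigma\ne 0$, so the real roots cannot cross $w=0$. Also $D$ decreases in $\alpha$, so once complex the roots stay complex. Roots that are both negative just above $v-\ell$ therefore stay negative until they turn complex. Either fix is short, but one of them is needed.
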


\begin{proof} The roots in $w = x^2$ of the quartic~\eqref{eq.quartic} add to $s_1 = \beta/\alpha^2$ with $\beta = 2\alpha(\ell+\alpha)-v^2$ and multiply to $s_2 = -\Sigma/\alpha^2$. They are real and distinct exactly when $D > 0$, the discriminant of that quadratic being $v^2D$. Since $\beta = 2\alpha^2 - E$, the first two identities of~\eqref{eq.wDENS} follow at once. The third follows from them because $N = 2\alpha^2 + 2\Sigma - E$, which is the definition~\eqref{eq.DENS} rewritten.

Now, a band lies within one hemisphere exactly when $w_- > 0$, which asks $s_2 > 0$ and $s_1 > 0$, that is $\Sigma < 0$ and $\beta > 0$. The first holds exactly above the equatorial value, by Proposition~\ref{prop.separatrix}. For the second, $\beta$ increases with $\alpha$ on the branch, since $\partial_\alpha \beta = 4\alpha + 2\ell$ while $\alpha > v - \ell$ there. It takes the value $v(v-2\ell)$ at $\alpha = v-\ell$, and the value $v^2(v^2-4\ell^2)/8\ell^2$ at $\alpha = v^2/4\ell$. Both are positive exactly when $\ell < v/2$. Hence for $\ell < v/2$ we have $\beta > 0$ on the whole branch. For negative $\ell$ that branch carries no constraint from $D = v^2 + 4\alpha\vert\ell\vert > 0$; for positive $\ell$ it stops where $D$ vanishes at $\alpha = v^2/4\ell$. For $\ell \geq v/2$ instead $\beta \leq 0$ up to $\alpha = v^2/4\ell$ while $s_2 > 0$, so both turning points are non-positive and no band exists. Beyond that value $D < 0$, and they are complex.

For the upper turning point, $2\alpha^2 w_+ = 2\alpha^2 - E + v\sqrt{D}$ and $E = (v^2+D)/2 > 0$, so $w_+ < 1$ amounts to $v^2 D < E^2$. Indeed, $E^2 - v^2D = 4\alpha^2\ell^2 > 0$ for $\ell \neq 0$. Finally $D$ vanishes at $\alpha = v^2/4\ell$, where $E = v^2/2$ and the merged turning point is $w = 1 - E/2\alpha^2 = 1 - 4\ell^2/v^2$, which is $\cos^2\theta_c$ for the stated colatitude. \end{proof}

Above the equatorial value Proposition~\ref{prop.hemisphere} settles the sign of $N$ at once. The sign of $\Sigma$ turns instead, so the two terms of~\eqref{eq.twist} again compete.

\begin{lemma} \label{lem.involution} Write $\langle f \rangle$ for the mean of $f$ over one polar period against the quadrature~\eqref{eq.rot}, taken in the variable $w = \cos^2\theta$. Above the equatorial value that mean is invariant under the involution $w \mapsto s_2/w$ of the band, so that \beq \label{eq.involution} \left\langle w \right\rangle = s_2\left\langle w^{-1} \right\rangle \qquad\text{and}\qquad \frac{M_2}{T_\theta} = \left\langle w \right\rangle \geq \sqrt{s_2} , \eeq the geometric mean of the two turning points, with equality exactly at the circular orbit. \end{lemma}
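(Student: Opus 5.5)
Above the equatorial value, Proposition~\ref{prop.hemisphere} gives $0 < w_- < w_+ < 1$, and the band covers $w \in [w_-,w_+]$ once per polar period. We therefore rewrite the weight of the quadrature in that variable, as in the proof of Theorem~\ref{theo.rotation}: $\d x/\sqrt{P} = \d w/2\alpha\sqrt{w(w_+-w)(w-w_-)}$. The mean $\langle f\rangle$ is then $\int_{w_-}^{w_+} f\,\d\nu / \int_{w_-}^{w_+}\d\nu$ with $\d\nu = \d w/\sqrt{w(w_+-w)(w-w_-)}$. The hemisphere factor $\mu = 1$ and the factor of two from the outward and return transits both cancel in the ratio.

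\textbf{Invariance of the measure.} Substitute $w = s_2/u$ with $s_2 = w_-w_+$. This map swaps $w_-$ and $w_+$ and reverses orientation, and $\d w = -s_2\,\d u/u^2$. We then compute
\[ w = \frac{s_2}{u}, \qquad w_+ - w = \frac{w_+(u-w_-)}{u}, \qquad w - w_- = \frac{w_-(w_+-u)}{u}, \]
so that $w(w_+-w)(w-w_-) = s_2^2\,(u-w_-)(w_+-u)/u^3$. Taking the square root, we find $\d\nu(w) = \d u/\sqrt{u(w_+-u)(u-w_-)} = \d\nu(u)$ once orientation is restored. Hence $\langle f(w)\rangle = \langle f(s_2/w)\rangle$ for every integrable $f$. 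In particular $\langle w\rangle = s_2\langle w^{-1}\rangle$. The identity $M_2/T_\theta = \langle w\rangle$ is immediate from the definitions, since $x^2 = w$ and $T_\theta = \oint \d x/\sqrt{P}$ with the same weight.

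\textbf{The inequality.} We use the invariance to symmetrise: $\langle w\rangle = \tfrac12\langle w + s_2/w\rangle$. Pointwise AM--GM gives $w + s_2/w \geq 2\sqrt{s_2}$, with equality only at $w = \sqrt{s_2}$. Averaging, we get $\langle w\rangle \geq \sqrt{s_2}$. Equality forces the measure to be concentrated at $\sqrt{s_2}$, which is impossible for a nondegenerate band, because $\d\nu$ has a positive density on $(w_-,w_+)$. Hence the inequality is strict whenever $w_- < w_+$. In the limit $w_- = w_+$ of the circular orbit at $\alpha = v^2/4\ell$ from Proposition~\ref{prop.hemisphere}, the motion sits at $w = \sqrt{s_2}$ and equality holds.

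The only step that needs care is the algebra verifying that the cubic transforms into itself under the involution. This works precisely because the third root of $R$ is $e_3 = 0$, the fixed structure of the variable $w$. The AM--GM step is then routine. No obstacle is expected beyond tracking orientation and the cancelling constants.
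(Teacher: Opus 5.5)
Your proposal is correct and follows the paper's proof. Both arguments pass to $w$, check that $\d w/\sqrt{R(w)}$ is invariant under $w \mapsto s_2/w$, and finish with an elementary inequality. The only difference is that you symmetrise and apply pointwise AM--GM to $\tfrac12(w + s_2/w)$, whereas the paper applies Cauchy--Schwarz to $\sqrt w$ and $1/\sqrt w$ and then substitutes $\langle w^{-1}\rangle = \langle w\rangle/s_2$; your version makes the strict inequality away from the circular orbit slightly more transparent.
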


\begin{proof} In the variable $w$, $\d x/\sqrt{P}$ becomes $\d w/2\alpha\sqrt{R(w)}$ with $R(w) = w(w_+-w)(w-w_-)$, as in the proof of Theorem~\ref{theo.rotation}, so that $M_2/T_\theta$ is the mean $\langle w \rangle$ against $\d w/\sqrt{R}$ over the band $[w_-,w_+]$. Moreover, the map $\sigma(w) = s_2/w$ carries that band onto itself and exchanges its endpoints. A direct computation gives \beq \label{eq.Rinvolution} R\left( \sigma(w) \right) = \frac{s_2^2}{w^4}\,R(w) , \qquad \text{while} \qquad \vert \d\sigma \vert = \frac{s_2}{w^2}\,\d w , \eeq so that $\vert\d\sigma\vert/\sqrt{R(\sigma)} = \d w/\sqrt{R(w)}$ and the measure is invariant. Hence $\langle w \rangle = \langle \sigma(w) \rangle = s_2\langle w^{-1}\rangle$. The Cauchy--Schwarz inequality applied to $\sqrt w$ and $1/\sqrt w$ gives $\langle w \rangle\langle w^{-1}\rangle \geq 1$. Combining it with that identity leaves $\langle w \rangle^2 \geq s_2$. Equality in Cauchy--Schwarz asks that $w$ be constant along the trajectory, which happens exactly when the turning points merge. \end{proof}

In this sense the mean of $\cos^2\theta$ is at least the geometric mean of its two extreme values.

\begin{theo} \label{theo.mono2} On the branch above the equatorial value $\Delta\varphi$ is a strictly decreasing function of $\alpha$. \end{theo}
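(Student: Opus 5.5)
The plan is to start from the closed-form derivative of Theorem~\ref{theo.twist} and show that its bracket is strictly negative on the branch above the equatorial value. On the open branch described in Proposition~\ref{prop.hemisphere} the turning points satisfy $0<w_-<w_+<1$ and are distinct, so $D>0$ and $T_\theta>0$. It therefore suffices to prove
\[
E\,T_\theta + \frac{\alpha^2 N}{\Sigma}\,M_2 < 0 .
\]

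First, I rewrite every coefficient through the symmetric functions of the turning points, using~\eqref{eq.wDENS}. Since $E=\alpha^2(2-s_1)$, $\Sigma=-\alpha^2 s_2$ and $N=\alpha^2(s_1-2s_2)$, the bracket becomes
\[
\alpha^2 T_\theta\left[ (2-s_1) - \frac{s_1-2s_2}{s_2}\,\frac{M_2}{T_\theta} \right] .
\]
By Proposition~\ref{prop.hemisphere}, $N>0$, so $s_1-2s_2>0$. The coefficient of $M_2/T_\theta$ is therefore negative, and a lower bound on the moment gives an upper bound on the bracket.

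That lower bound is Lemma~\ref{lem.involution}: $M_2/T_\theta=\langle w\rangle\ge g$ with $g:=\sqrt{s_2}$. Substituting it, the bracket is at most
\[
\alpha^2T_\theta\left[(2-s_1)-\frac{s_1-2g^2}{g}\right] = \frac{\alpha^2T_\theta}{g}\left[2g+2g^2-s_1(1+g)\right] = \frac{\alpha^2T_\theta(1+g)}{g}\,(2g-s_1).
\]
The arithmetic--geometric mean inequality gives $s_1=w_-+w_+\ge 2\sqrt{w_-w_+}=2g$. Equality holds only when $w_-=w_+$, which is the circular orbit at the endpoint $\alpha=v^2/4\ell$ and lies outside the open branch. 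Hence $2g-s_1<0$, the bracket is strictly negative, and dividing by $D>0$ gives $\partial_\alpha\Delta\varphi<0$ throughout the branch. This yields strict decrease, covering both the bounded interval for $0<\ell<v/2$ and the half line for $\ell<0$.

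The main obstacle is the competition between the two terms once $\Sigma$ changes sign. The whole argument depends on the moment bound of Lemma~\ref{lem.involution} pointing in the right direction and being sharp enough. I expect the algebra after substitution to be the delicate step: one must check that the factor $(1+g)$ cancels cleanly, so that the sign reduces exactly to AM--GM, rather than to a polynomial inequality that would need a boundary analysis like the one in Theorem~\ref{theo.mono}. If that factorisation failed, I would fall back on such an analysis on the region $\Sigma<0$, $D>0$, with $v=1$ normalised by Proposition~\ref{prop.scaling}.
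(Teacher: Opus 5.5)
Your proposal is correct and follows the paper's proof essentially line by line. You substitute \eqref{eq.wDENS} into \eqref{eq.twist}, use $s_1-2s_2>0$ together with the lower bound $\langle w\rangle\ge\sqrt{s_2}$ of Lemma~\ref{lem.involution}, and reduce the sign to $(1+g)(2g-s_1)<0$ by the inequality between the arithmetic and geometric means, which is exactly the factorisation \eqref{eq.amgm}. The only difference is cosmetic: you factor the bracket directly rather than rearranging it into $(s_1-2s_2)\langle w\rangle > s_2(2-s_1)$, so the fallback boundary analysis you anticipated is not needed.
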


\begin{proof} By Proposition~\ref{prop.hemisphere} the turning points obey $0 < w_- < w_+ < 1$, so that $D > 0$, $E > 0$, $N > 0$ and $\Sigma < 0$ there. Substituting the identities~\eqref{eq.wDENS} into the derivative~\eqref{eq.twist} of Theorem~\ref{theo.twist} gives \beq \label{eq.twist2} \frac{\partial \Delta\varphi}{\partial \alpha} = \frac{\alpha^2 T_\theta}{D} \left[ \left( 2 - s_1 \right) - \frac{s_1-2s_2}{s_2}\left\langle w \right\rangle \right] , \eeq so the derivative is negative exactly when $(s_1-2s_2)\langle w \rangle > s_2(2-s_1)$. The coefficient $s_1-2s_2 = w_-(1-w_+) + w_+(1-w_-)$ is positive, so Lemma~\ref{lem.involution} bounds the left-hand side from below by $(s_1-2s_2)\,g$ with $g = \sqrt{s_2}$, and \beq \label{eq.amgm} \left( s_1 - 2s_2 \right) g - s_2\left( 2 - s_1 \right) = g\left( 1+g \right)\left( s_1 - 2g \right) = g\left( 1+g \right)\left( \sqrt{w_+}-\sqrt{w_-} \right)^2 , \eeq which is positive away from the circular orbit by the inequality between the arithmetic and the geometric mean. The bracket of~\eqref{eq.twist2} is therefore negative. \end{proof}

The involution shortens that proof, and it is worth saying why it is available here. A band within one hemisphere keeps $\cos^2\theta$ away from zero, so the reciprocal weight introduced by the passage to $w$ pairs each point of the band with another, and the exact identity~\eqref{eq.involution} follows. A band crossing the equator reaches $w = 0$, which the involution sends to infinity.

\begin{theo} \label{theo.realise2} Let $0 < \vert\ell\vert < v$ and let $\alpha$ run over the branch above the equatorial value. The rotation number decreases continuously from $+\infty$ to $\sec\theta_c = v/\sqrt{v^2-4\ell^2}$ at the circular orbit when $0 < \ell < v/2$, and from $+\infty$ to zero as $\alpha \to \infty$ when $-v < \ell < 0$. Every rational number in \beq \label{eq.range2} \mathcal{J}_\ell = \begin{cases} \ \left( \sec\theta_c ,\ \infty \right) , & 0 < \ell < v/2 , \\[4pt] \ \left( 0 ,\ \infty \right) , & -v < \ell < 0 , \end{cases} \eeq is therefore the rotation number of exactly one value of $\alpha$ on that branch, and carries exactly one closed trajectory there up to the choice of starting point. \end{theo}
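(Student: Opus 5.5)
The plan mirrors the proof of Theorem~\ref{theo.realise}. We combine continuity and the two endpoint limits with the monotonicity of Theorem~\ref{theo.mono2}. Proposition~\ref{prop.hemisphere} supplies the branch, namely $(v-\ell, v^2/4\ell)$ for $0<\ell<v/2$ and $(v-\ell,\infty)$ for $-v<\ell<0$. On its interior the turning points satisfy $0<w_-<w_+<1$ and are distinct. Hence $k$ and $n$ in~\eqref{eq.rotmodulus} vary continuously, with $k<1$ and $n$ finite since $e_1=w_+<1$. So $\Delta\varphi$ is continuous there, as in the first step of Theorem~\ref{theo.realise}. By Theorem~\ref{theo.mono2} it is strictly decreasing. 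Once the two limits are identified, the intermediate value theorem and strict monotonicity give each value of $\mathcal{J}_\ell$ exactly once. Theorem~\ref{theo.closure} then gives the closed trajectory, unique up to the starting point.

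For the left endpoint $\alpha\to(v-\ell)^+$, we use Proposition~\ref{prop.separatrix}. On the case $\ell<v/2$ (which covers both cases here) the upper turning point $w_+$ tends to $\Sigma$'s zero, i.e.\ the equator $x=0$ approached from the hemisphere side. In the variable $y=1-w$ it is a double root of $Q$ in the limit, and $T_\theta$ diverges logarithmically. We use the splitting~\eqref{eq.dphisplit}, $\Delta\varphi=(\ell+\alpha)T_\theta+\ell\oint\frac{1-y}{y}\,\d\tau$. The second integral stays bounded because $(1-y)$ cancels the double-root singularity, exactly as before, while $\ell+\alpha\to v>0$. Hence $\Delta\varphi\to+\infty$. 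The one point to recheck is that the lower turning point stays away from the pole, $w_-\to$ a limit $<1$. This follows from $y_\ast=\ell^2/(v-\ell)^2<1$ when $\ell<v/2$.

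At the right endpoint for $0<\ell<v/2$, the turning points merge at $\alpha=v^2/4\ell$ on the circle $\sin\theta_c=2\ell/v$. We linearise as in the equatorial-collapse argument of Theorem~\ref{theo.realise}. Near $w_c=\cos^2\theta_c$, the motion in $x$ about $x_c=\sqrt{w_c}$ is harmonic, $\dot x^2\approx -\tfrac12P''(x_c)(x-x_c)^2+\text{const}$. With $\mu=1$ its period tends to $2\pi/\sqrt{-P''(x_c)/2}$. The azimuthal velocity tends uniformly to $\omega_c=\ell/\sin^2\theta_c+\alpha$. Since $P=\alpha^2(w_+-w)(w-w_-)$ with a double root at $w_c$, we have $P''(x_c)=-8\alpha^2 w_c$. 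The period is therefore $\pi/(\alpha\sqrt{w_c})$, and $\Delta\varphi\to\pi\omega_c/(\alpha\cos\theta_c)$. Substituting $\alpha=v^2/4\ell$ and $\sin^2\theta_c=4\ell^2/v^2$ gives $\omega_c=v^2/4\ell+v^2/4\ell=v^2/2\ell$. So $\Delta\varphi\to 2\pi/\cos\theta_c$ and $\rho\to\sec\theta_c$. One could equally read this off Theorem~\ref{theo.rotation}: as $k\to0$, $K\to\pi/2$ and $\Pi(n,0)=\pi/(2\sqrt{1+n})$.

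For $-v<\ell<0$ and $\alpha\to\infty$, the roots satisfy $s_1=\beta/\alpha^2\to2$ and $s_2=-\Sigma/\alpha^2\to1$, so the band shrinks toward a pole. I expect this step to be the main obstacle. I would write $w=1-y$ and work with~\eqref{eq.ymotion}. The roots of $-\alpha^2y^2+Ey-\ell^2$ are of order $y\sim |\ell|/\alpha$, since $E\approx 2\alpha|\ell|$ dominates. Rescaling $y=|\ell|u/\alpha$ gives $\dot u^2\approx 4\alpha|\ell|\,(-u^2+2u-1+O(v^2/\alpha|\ell|))$. The band has width $O(\alpha^{-1/2})$ in $u$ about $u=1$, and $T_\theta\sim\pi/\sqrt{\alpha|\ell|}\cdot O(1)$. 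Meanwhile $\dot\varphi=\ell/y+\alpha=\alpha(1-1/u)$, which vanishes at leading order on the band. A careful expansion should then show $\Delta\varphi=O(\alpha\cdot\alpha^{-1/2}\cdot T_\theta)\to0$, possibly after using~\eqref{eq.twist2} to confirm the sign. Failing a clean expansion, I would use the Legendre form~\eqref{eq.rotlegendre} with $e_1\to1$, $n\to-\infty$, and the asymptotics of $\Pi$ for large negative characteristic.
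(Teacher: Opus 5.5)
Your architecture matches the paper's. You use continuity on the open branch, strict decrease from Theorem~\ref{theo.mono2}, the endpoint limits, and then the intermediate value theorem with Theorem~\ref{theo.closure}. Your circular-orbit limit, a harmonic period $\pi/(\alpha\sqrt{w_c})$ times the azimuthal rate $2\alpha$, is exactly the paper's computation.

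At the equatorial end you use the paper's splitting argument, but your turning-point labels are swapped. Since $w_-w_+ = -\Sigma/\alpha^2 \to 0$ while $w_-+w_+ \to v(v-2\ell)/(v-\ell)^2 > 0$, it is $w_-$ that reaches the equator, and $w_+ \to 1-y_\ast$. That point stays off the pole because $y_\ast > 0$, i.e.\ because $\ell \neq 0$. The inequality $y_\ast < 1$ that you cite instead keeps $w_+$ off the equator. The substance survives this slip.

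The genuine gap is the limit $\alpha \to \infty$ for $-v < \ell < 0$, which is what puts $0$ at the bottom of $\mathcal{J}_\ell$. Your sketch does not establish it, because the rescaling is miscomputed. With $y = (\vert\ell\vert/\alpha)u$ and $\epsilon = v^2/(\alpha\vert\ell\vert)$, the equation is $\dot u^2 = 4\alpha^2(1-y)\left[\epsilon u - (u-1)^2\right]$. The prefactor is $4\alpha^2$, not $4\alpha\vert\ell\vert$. So the polar period is $T_\theta \approx \pi/\alpha$; rigorously $\pi/\alpha \le T_\theta \le \pi/(\alpha\sqrt{1-y_+})$. It is not of order $(\alpha\vert\ell\vert)^{-1/2}$.

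This decides the step. On the band $\max\vert\dot\varphi\vert \approx \alpha\sqrt{\epsilon} = v\sqrt{\alpha/\vert\ell\vert}$. With your period, the bound $T_\theta \max\vert\dot\varphi\vert$ comes out near $\pi v/\vert\ell\vert$, which is bounded but does not vanish. So your own estimate, as written, gives $O(1)$ rather than $o(1)$. With the corrected period the bound is of order $\pi v/\sqrt{\alpha\vert\ell\vert} \to 0$. That is in substance the paper's estimate~\eqref{eq.tailbound}, obtained there directly from $\dot\varphi = \alpha\left(1-\sqrt{y_-y_+}/y\right)$ and $r-1 \le (r^2-1)/2$.

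Your fallback rests on a false asymptotic. Here $e_1 - e_2 = v\sqrt{D}/\alpha^2 = O(\alpha^{-3/2})$, while $1 - e_1 = y_- \approx \vert\ell\vert/\alpha$. Hence $n = O(\alpha^{-1/2}) \to 0$ and $k \to 0$; the regime $n \to -\infty$ belongs to $\ell \to 0$, not to $\alpha \to \infty$. The two terms of~\eqref{eq.rotlegendre} then tend to $+\pi$ and $-\pi$ and cancel at leading order. So the Legendre route would need the next order in both $k$ and $n$, which is the cancellation Remark~\ref{rem.larmor} describes, and not a large-characteristic expansion of $\Pi$.
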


\begin{proof} Continuity and the divergence at the equatorial value are those of Theorem~\ref{theo.realise}, the splitting~\eqref{eq.dphisplit} applying verbatim on this side, where the upper turning point in $y$ is $1-w_-$ and again tends to the equator while the lower one $1-w_+$ stays away from the poles.

At the circular orbit let $\alpha \to (v^2/4\ell)^-$, so that both turning points tend to $w_c = 1-4\ell^2/v^2$ by Proposition~\ref{prop.hemisphere}. The polar motion in $w$ reads $\dot w^2 = 4\alpha^2 R(w)$ with $R$ as in Lemma~\ref{lem.involution}, and near the merging turning points $R(w) = w_c(w_+-w)(w-w_-)$ up to terms vanishing with $w_+-w_-$, so the motion is harmonic of frequency $2\alpha\sqrt{w_c}$ and the period tends to $\pi/(\alpha\sqrt{w_c})$, while the azimuthal velocity tends to $\ell/(1-w_c) + \alpha = 2\alpha$. Multiplying the two limits leaves \beq \label{eq.circwinding} \Delta\varphi \longrightarrow \frac{2\pi}{\sqrt{w_c}} = \frac{2\pi v}{\sqrt{v^2-4\ell^2}} = 2\pi\sec\theta_c . \eeq

For $\ell < 0$ and $\alpha \to \infty$, let us write the azimuthal velocity as $\dot\varphi = \alpha\left( 1 - \sqrt{y_-y_+}/y \right)$, using $y_-y_+ = \ell^2/\alpha^2$ from~\eqref{eq.ymotion}. It vanishes at the geometric mean of the turning points, which lies inside the band, so the azimuth reverses within each polar period and $\vert\dot\varphi\vert$ is at most $\alpha(r-1)$ with $r = \sqrt{y_+/y_-}$. Now $r-1 \leq (r^2-1)/2 = (y_+-y_-)/2y_-$, with $y_+-y_- = v\sqrt{D}/\alpha^2$, $y_- = \ell^2/\alpha^2y_+$ and $y_+ \leq E/\alpha^2$, so that $\vert\dot\varphi\vert \leq v\sqrt{D}\,E/(2\ell^2\alpha)$. The period obeys $T_\theta \leq \pi/(\alpha\sqrt{1-y_+})$, which is at most $2\pi/\alpha$ once $y_+ \leq 3/4$, and therefore \beq \label{eq.tailbound} \left\vert \Delta\varphi \right\vert \ \leq\ T_\theta \max \left\vert \dot\varphi \right\vert \ \leq\ \frac{\pi v \sqrt{D}\,E}{\ell^2\alpha^2} \ \longrightarrow\ 0 , \eeq since $\sqrt{D}\,E$ grows as $\alpha^{3/2}$ while the denominator grows as $\alpha^2$.

A continuous and strictly decreasing function attains each value of its range exactly once. Theorem~\ref{theo.mono2} supplies the monotonicity, so each rational number of~\eqref{eq.range2} comes from a single value of $\alpha$. Theorem~\ref{theo.closure} then supplies its closed trajectory. \end{proof}

The two limits of Theorem~\ref{theo.realise2} name the orbits on which the branch ends. The circular orbit winds $\sec\theta_c$ times about the polar axis for each oscillation of its neighbours, so on a level set with $0 < \ell < v/2$ every closed trajectory above the equatorial value winds more than that. A level set with $\ell < 0$ ends instead on a loop that shrinks to a point, and the winding it carries vanishes with it.

\begin{remark}[The large-field tail and the Larmor disc] \label{rem.larmor} The bound~\eqref{eq.tailbound} is generous, and the true rate carries a geometric reading that we record here. On a level set with $\ell < 0$ the substitution $y = (\vert\ell\vert/\alpha)\tau$ sends the band to $[\tau_-,\tau_+]$ with $\tau_- \tau_+ = 1$ and $\tau_-+\tau_+ = 2 + v^2/\vert\ell\vert\alpha$, and the two leading integrals cancel exactly because that product is unity. The first surviving term comes from expanding $(1-y)^{-1/2}$, which leaves $\alpha^2 \Delta\varphi \to \pi v^2/4$, so that \beq \label{eq.larmor} \Delta\varphi \ \sim\ \pi \left( \frac{v}{2\alpha} \right)^2 . \eeq The trajectory is a small loop of geodesic curvature $2\alpha\cos\theta/v$ near a pole, so $v/2\alpha$ is its radius of curvature and the right-hand side of~\eqref{eq.larmor} is the area it encloses. The azimuthal advance over one period is therefore the solid angle of the Larmor disc, which is the holonomy of the round sphere around it. \end{remark}

Collecting the two branches answers the question of Theorem~\ref{theo.realise} in full.

\begin{corollary} \label{cor.count} Fix $v > 0$ and a level set with $0 < \vert\ell\vert < v$, and let $\rho$ be rational. A level set with $0 < \ell < v/2$ carries two values of $\alpha$ at which the rotation number is $\rho$ when $\rho > \sec\theta_c$ and exactly one when $1 < \rho \leq \sec\theta_c$. A level set with $v/2 \leq \ell < v$ carries exactly one when $1 < \rho < \sqrt{v/(2\ell-v)}$. A level set with $-v < \ell < 0$ carries two when $\rho > 0$ and exactly one when $-1 < \rho \leq 0$. Every other rational number is the rotation number of no value of $\alpha$ on that level set. \end{corollary}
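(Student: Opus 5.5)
The count follows by taking the union of the two branches into which Proposition~\ref{prop.separatrix} divides the admissible $\alpha$ on a fixed level set. On each branch the rotation number is injective, so the two branches can be counted separately. We may take $\alpha > 0$ by the reflection argument at the head of Section~\ref{sec.closed}. The admissible set is then described as follows.
\begin{itemize}
\item The lower branch $(0, v-\ell)$ is always present.
\item The equatorial value $\alpha = v-\ell$ itself carries no finite rotation number when $\ell < v/2$, since $T_\theta$ diverges there. For $v/2 \leq \ell < v$ it carries only the equatorial circle, on which the turning points are not distinct.
\item The upper branch is as described in Proposition~\ref{prop.hemisphere}: it is $(v-\ell, v^2/4\ell)$ for $0<\ell<v/2$, $(v-\ell,\infty)$ for $\ell<0$, and empty for $\ell \geq v/2$.
\item Beyond $v^2/4\ell$ we have $D<0$, so no trajectory exists there.
\end{itemize}
The first step is to record this partition. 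We regard the endpoint values, namely the equatorial value and the circular orbit at $\alpha = v^2/4\ell$, as carrying no rotation number in the sense of Theorem~\ref{theo.rotation}, because the turning points are not distinct there. Each rotation number that appears below is therefore an open-ended limit rather than an attained value.

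The second step invokes the range results on each branch. On the lower branch, Theorem~\ref{theo.realise} and Corollary~\ref{cor.twist} show that each rational in $\mathcal{I}_\ell$ occurs exactly once, and nothing outside $\mathcal{I}_\ell$ occurs. Nothing outside occurs because a strictly increasing continuous function on an open interval takes values strictly between its limits. The limits are $\operatorname{sign}(\ell)$ at $\alpha \to 0^+$, and $+\infty$ or $\sqrt{v/(2\ell-v)}$ at the upper end. At $\ell = v/2$ the two descriptions agree, since $\sqrt{v/(2\ell - v)} = \infty$. On the upper branch, Theorem~\ref{theo.realise2} gives exactly one value for each rational in $\mathcal{J}_\ell$ and none outside it.

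The third step intersects the ranges case by case.
\begin{itemize}
\item For $0<\ell<v/2$: $\mathcal{I}_\ell = (1,\infty)$ and $\mathcal{J}_\ell = (\sec\theta_c,\infty)$. Since $\sec\theta_c = v/\sqrt{v^2-4\ell^2} > 1$, the winding $\rho$ has two values when $\rho>\sec\theta_c$ and one value when $1<\rho\leq \sec\theta_c$.
\item For $v/2\leq \ell<v$: only $\mathcal{I}_\ell$ is present, so $\rho$ has one value on $(1,\sqrt{v/(2\ell-v)})$.
\item For $-v<\ell<0$: $\mathcal{I}_\ell = (-1,\infty)$ and $\mathcal{J}_\ell = (0,\infty)$. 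Hence $\rho$ has two values for $\rho>0$ and one value for $-1<\rho\leq 0$.
\end{itemize}
In every case the rationals outside $\mathcal{I}_\ell\cup\mathcal{J}_\ell$ are attained nowhere.

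The main obstacle is the exclusion of boundary points. Two cases need care. First, the value $\rho = \sec\theta_c$ is approached only at the circular orbit. It must not be counted from the upper branch, but it is counted once from the lower branch. Second, the equatorial value for $\ell \geq v/2$ yields only the degenerate circle. I would handle both by appealing to the hypothesis of distinct turning points in Theorem~\ref{theo.rotation}. I would also check that the open limits in Theorems~\ref{theo.realise} and~\ref{theo.realise2} are never attained, using their strict monotonicity from Theorems~\ref{theo.mono} and~\ref{theo.mono2}.
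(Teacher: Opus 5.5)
Your proposal is correct and follows the paper's own argument. You split the admissible $\alpha$ at the equatorial value and use strict monotonicity (Theorems~\ref{theo.mono} and~\ref{theo.mono2}), so that each branch contributes exactly one $\alpha$ for each rational in its open range, $\mathcal{I}_\ell$ or $\mathcal{J}_\ell$, and none outside it; you then count how many of the two ranges contain $\rho$, with the upper branch empty for $v/2 \leq \ell < v$ by Proposition~\ref{prop.hemisphere}.

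Your extra handling of the endpoints is sound; in particular, $\rho = \sec\theta_c$ is counted only from the lower branch. One justification is slightly off. At the equatorial value with $\ell < v/2$ the turning points are in fact distinct, since $D = (v-2\ell)^2 > 0$ there. That value is excluded because of the divergent period, as your first bullet correctly says, not because the turning points coincide.
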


\begin{proof} The admissible $\alpha$ split at the equatorial value, by Proposition~\ref{prop.separatrix}. Below it the rotation number increases strictly, by Theorem~\ref{theo.mono}, with range $\mathcal{I}_\ell$ of~\eqref{eq.range}; above it the rotation number decreases strictly, by Theorem~\ref{theo.mono2}, with range $\mathcal{J}_\ell$ of~\eqref{eq.range2}. Each branch therefore carries one value of $\alpha$ for each rational number of its own range and none for any other, so the count is the number of those two sets containing $\rho$. For $0 < \ell < v/2$ they are $(1,\infty)$ and $(\sec\theta_c,\infty)$ with $\sec\theta_c > 1$; for $v/2 \leq \ell < v$ the second branch is empty by Proposition~\ref{prop.hemisphere}; for $-v < \ell < 0$ they are $(-1,\infty)$ and $(0,\infty)$. \end{proof}

Thus a rational winding is carried by at most two values of $\alpha$ on any level set, and the second occurs exactly when the level set admits a band within one hemisphere. The two trajectories are geometrically distinct, one crossing the equator, one confined to a hemisphere, and they close after the same number of polar periods.

Figures~\ref{fig.classI},~\ref{fig.classII} and~\ref{fig.classIII} draw one level set of each class, with $\alpha$ swept across the branches the classification names.

\begin{figure}[!tbp]
\centering
\includegraphics[width=\linewidth]{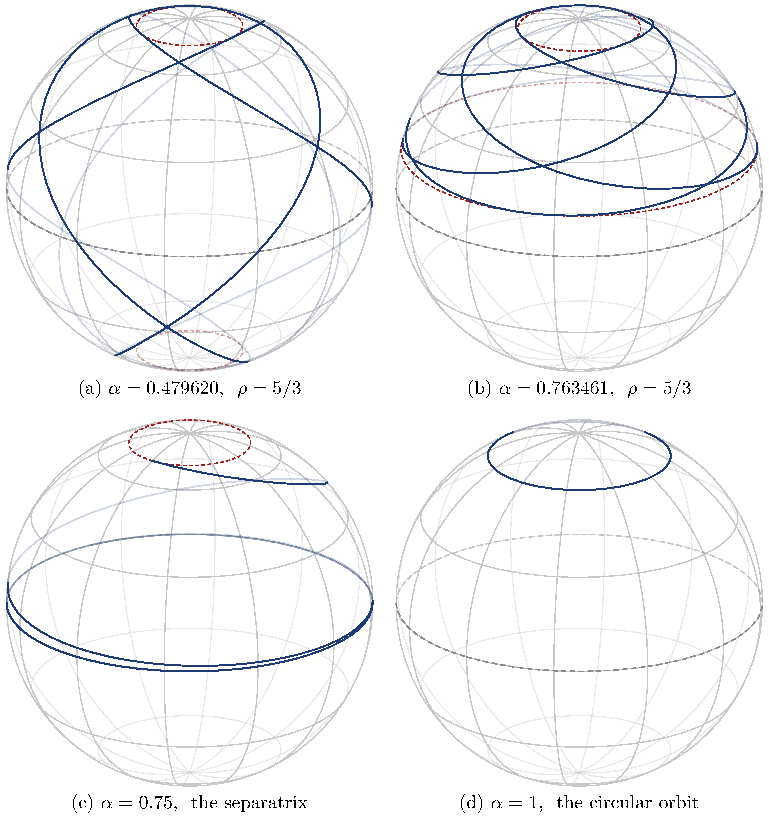}
\caption{A level set with $0 < \ell < v/2$, drawn at $\ell = 0.25$ and $v = 1$, where both branches are present. Panels (a) and (b) carry the same winding $\rho = 5/3$ at two different values of $\alpha$, which makes the count of Corollary~\ref{cor.count} visible. In (a) $\alpha$ sits below the equatorial value $v - \ell = 0.75$ and the band crosses the equator, while in (b) it sits above it and the band lies within the northern hemisphere; both trajectories return to their starting point after three polar periods and five turns about the axis. Panel (c) sets $\alpha$ exactly at the equatorial value, releases the trajectory at the parallel $\cos\theta = 0.9428$, and follows it as it spirals onto the equator, which it approaches asymptotically and reaches at colatitude $89.9993^\circ$ after the time drawn (Proposition~\ref{prop.separatrix}). Panel (d) sets $\alpha = v^2/4\ell = 1$, the right endpoint of the upper branch, where the two turning points merge and the orbit is the circle of colatitude $\theta_c = 30^\circ$, the neighbours of which wind $\sec\theta_c = 1.1547$ times for each oscillation (Theorem~\ref{theo.realise2}). The bounding parallels of each band are dashed in red.} \label{fig.classI} \end{figure}

\begin{figure}[!tbp]
\centering
\includegraphics[width=\linewidth]{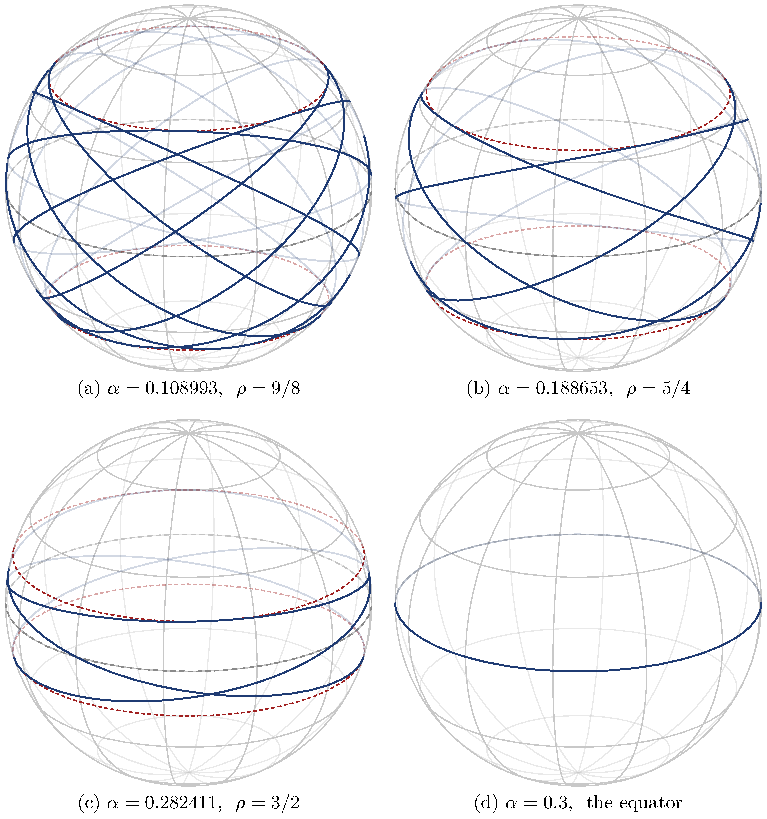}
\caption{A level set with $v/2 \leq \ell < v$, drawn at $\ell = 0.7$ and $v = 1$, where the equatorial value ends the family. Panels (a), (b) and (c) show closed orbits at $\rho = 9/8$, $\rho = 5/4$ and $\rho = 3/2$, and the band tightens about the equator as the winding grows. Panel (d) sets $\alpha = v - \ell = 0.3$, where the band has closed onto the equatorial great circle and that circle is the entire motion of the level set (Proposition~\ref{prop.separatrix}). Its winding $\sqrt{v/(2\ell-v)} = 1.5811$ is the largest the level set admits, so the windings available here fill the interval $(1, 1.5811)$ and stop there (Theorem~\ref{theo.realise}). Every winding this level set admits belongs to a single $\alpha$, since the branch above the equatorial value is empty (Proposition~\ref{prop.hemisphere}).} \label{fig.classII} \end{figure}

\begin{figure}[!tbp]
\centering
\includegraphics[width=\linewidth]{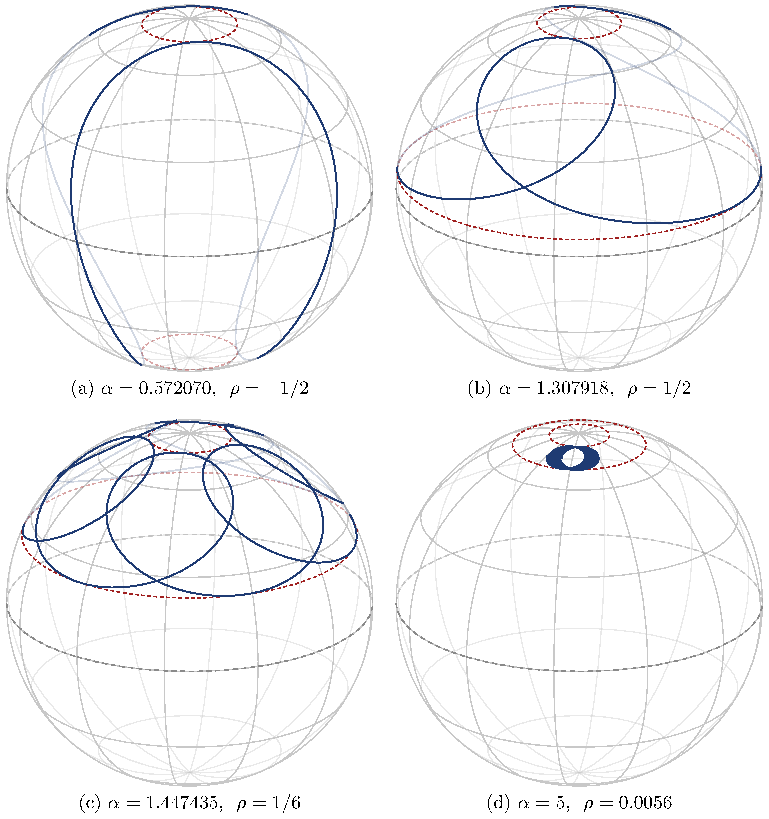}
\caption{A level set with $-v < \ell < 0$, drawn at $\ell = -0.3$ and $v = 1$, where the branch above the equatorial value runs to arbitrarily large $\alpha$. Panel (a) sits below the equatorial value $v - \ell = 1.3$ at the winding $\rho = -1/2$, so the azimuth runs backwards; this level set admits it because the rotation number tends to $-1$ as the field switches off (Theorem~\ref{theo.realise}), the great circle being traversed in the retrograde sense. Panels (b) and (c) sit above the equatorial value at $\rho = 1/2$ and $\rho = 1/6$, where the band lies within the northern hemisphere and withdraws toward the pole as $\alpha$ grows. Panel (d) sets $\alpha = 5$ and draws nine polar periods, by which point the trajectory is a small loop of geodesic curvature close to $2\alpha/v$ that leaves the pole outside itself and precesses through $0.0056$ of a turn in each period. The winding falls to zero along that branch at the rate $\Delta\varphi \sim \pi(v/2\alpha)^2$ of Remark~\ref{rem.larmor}, which is the area the loop encloses.} \label{fig.classIII} \end{figure}

Let us now take the criterion to the boundary level set, where Theorem~\ref{theo.ell0} has already supplied the motion in closed form.

\begin{corollary} \label{cor.ell0} On the level set $\ell = 0$ the azimuthal advance over one polar period is $\Delta\varphi = \alpha T_\theta$ with $T_\theta = 4K(k)/\max(\alpha,v)$, so that the closure criterion~\eqref{eq.resonance} reads \beq \label{eq.resonance0} 4\,k\,K(k) = 2\pi\,\frac{p}{q} \qquad \text{in the rotation regime} \quad (v > \alpha), \eeq \beq \label{eq.resonancelib} 4\,K(k) = 2\pi\,\frac{p}{q} \qquad \text{in the libration regime} \quad (v < \alpha). \eeq The map $k \mapsto 4kK(k)$ is strictly increasing on $(0,1)$ and carries it onto $(0,\infty)$, so every positive rational rotation number is attained by exactly one value of $k$, and hence by exactly one value of $\alpha$. The map $k \mapsto 4K(k)$ carries $(0,1)$ onto $(2\pi,\infty)$, so a trapped trajectory always winds more than once about the polar axis in one polar oscillation. \end{corollary}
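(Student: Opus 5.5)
Setting $\ell = 0$ in the azimuthal velocity~\eqref{eq.phidot} gives $\dot\varphi = \alpha$, as Theorem~\ref{theo.ell0} already records. The advance over one polar period is therefore $\Delta\varphi = \alpha T_\theta$, and everything reduces to computing the period of the explicit solutions~\eqref{eq.rotation} and~\eqref{eq.libration}.

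In the rotation regime $\sin\theta = \sn(v\tau,k)$ and $\cos\theta = \cn(v\tau,k)$. The point on the embedded sphere, read through Remark~\ref{rem.chart}, returns to the same latitude and the same sense of polar motion after $\cn$ completes a period. That happens when $v\tau$ advances by $4K(k)$, so $T_\theta = 4K/v$ and $\Delta\varphi = 4(\alpha/v)K = 4kK(k)$.

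The libration regime needs more care. There $\cos\theta = \dn(\alpha\tau,k)$ has period $2K$, but $\sin\theta = k\sn$ changes sign. By Remark~\ref{rem.chart}, that sign change corresponds to crossing the pole with $\varphi \mapsto \varphi+\pi$. The genuine return of the Cartesian point $(\sin\theta\cos\varphi,\sin\theta\sin\varphi,\cos\theta)$ is governed by the full period $4K$ of $\sn$. I would take $T_\theta = 4K/\alpha$, which gives $\Delta\varphi = 4K(k)$. This choice must be checked against the pole passages: over $4K$ the pair $(\sin\theta,\varphi)$ returns with the two sign flips cancelling. The bookkeeping of which period counts as ``one polar period'' is the main delicate step. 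I will settle it by insisting on return of the ambient point, consistently with the remark.

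Given these two formulas, the closure criterion~\eqref{eq.resonance} becomes~\eqref{eq.resonance0} or~\eqref{eq.resonancelib}. The argument is that of Theorem~\ref{theo.closure}: the azimuthal velocity is constant and the polar motion is periodic, so the trajectory returns after $q$ periods exactly when $q\Delta\varphi \in 2\pi\mathbb{Z}$.

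For the monotonicity claims I use $K(k) = \int_0^{\pi/2}(1-k^2\sin^2\psi)^{-1/2}\,\d\psi$, which is strictly increasing in $k$, with $K(0) = \pi/2$ and $K(k) \to \infty$ logarithmically as $k \to 1^-$.

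In the rotation case, $kK(k)$ is a product of two positive increasing functions on $(0,1)$, hence strictly increasing. It tends to $0$ as $k\to 0$ and to $\infty$ as $k\to 1$, so it maps $(0,1)$ onto $(0,\infty)$. Because $k = \alpha/v$ with $v$ fixed, each positive rational fixes a unique $k$ and therefore a unique $\alpha$.

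In the libration case, $4K$ maps $(0,1)$ strictly increasingly onto $(2\pi,\infty)$. Hence $\rho = 2K/\pi > 1$, so a trapped trajectory always winds more than once per polar oscillation.
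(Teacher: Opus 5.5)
Your proposal is correct and is in substance the paper's proof. The paper obtains $T_\theta$ by integrating the pendulum quadrature directly (over a full turn of $\theta$ in the rotation regime, and as four quarter oscillations via $\sin\theta = k\sin\chi$ in the libration regime), where you read it off the $4K$ periods of $\cn$ and $\sn$ in Theorem~\ref{theo.ell0}, and the monotonicity and range arguments are the same.

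One caution on your libration step: return of the ambient point does not single out $4K$. Because $\sn(u+2K) = -\sn(u)$ and $\dn(u+2K) = \dn(u)$, the motion at $\alpha\tau + 2K$ is the motion at $\alpha\tau$ rotated by $2K+\pi$ about the axis. So the test you used in the rotation regime (same latitude and same sense of polar motion) would give $2K/\alpha$. The value $4K/\alpha$ is the period of the signed angle of Remark~\ref{rem.chart}, which is the convention the statement itself fixes. The closure criterion is unaffected by this choice, since $(2K+\pi)/2\pi \in \mathbb{Q}$ exactly when $4K/2\pi \in \mathbb{Q}$.
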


\begin{proof} Setting $\ell = 0$ in the rotation number~\eqref{eq.rot} leaves $\Delta\varphi = \alpha T_\theta$, since the term carrying the poles at $x = \pm 1$ vanishes with $\ell$. In the rotation regime the polar angle increases monotonically and \beq \label{eq.Trot} T_\theta = \int_0^{2\pi} \frac{\d\theta}{\sqrt{v^2 - \alpha^2\sin^2\theta}} = \frac{4}{v}\int_0^{\pi/2} \frac{\d\theta}{\sqrt{1-k^2\sin^2\theta}} = \frac{4K(k)}{v}, \eeq with $k = \alpha/v$, by the symmetry of the integrand about $\theta = \pi/2$ and its periodicity. In the libration regime the polar angle turns where $\sin\theta = k$, with $k = v/\alpha$, and the period is four times the quarter oscillation, \beq \label{eq.Tlib} T_\theta = 4\int_0^{\arcsin k} \frac{\d\theta}{\sqrt{v^2-\alpha^2\sin^2\theta}} = \frac{4}{\alpha}\int_0^{\pi/2} \frac{\d \chi}{\sqrt{1-k^2\sin^2 \chi}} = \frac{4K(k)}{\alpha}, \eeq where we use the substitution $\sin\theta = k \sin\chi$ in the second equality. Multiplying by $\alpha$ gives $4kK(k)$ and $4K(k)$ respectively. Theorem~\ref{theo.closure} then supplies the criterion. For the monotonicity, $K$ is strictly increasing on $(0,1)$ and so is the identity, so their product is. As $k \to 0^+$ we have $K(k) \to \pi/2$ and $4kK(k) \to 0$, while as $k \to 1^-$ we have $K(k) \to \infty$ and $4kK(k) \to \infty$; the map is continuous, so it is onto $(0,\infty)$. The same limits give $4K(k) \to 2\pi$ and $4K(k) \to \infty$ for the libration map, whose range is therefore $(2\pi,\infty)$. \end{proof}

The boundary case is a limit of the general one. That limit carries a jump of exactly one turn. Letting $\ell \to 0$ in the rotation regime sends $e_1$ to unity and the characteristic $n$ of~\eqref{eq.rotmodulus} to $-\infty$, so that $\Pi(n,k)$ vanishes like $\tfrac{1}{2}\pi\left(-n\right)^{-1/2}$, while $1-e_1$ vanishes like $\ell^2/v^2$ by the value~\eqref{eq.Ppole} of the quartic at the pole. The two rates conspire. The term carrying $\Pi$ in~\eqref{eq.rotlegendre} therefore tends to $2\pi\operatorname{sign}(\ell)$ rather than to zero, so that \beq \label{eq.boundaryjump} \Delta\varphi \longrightarrow 4kK(k) + 2\pi \operatorname{sign}(\ell) \qquad \text{as} \quad \ell \to 0 . \eeq That extra turn is geometry rather than bookkeeping. Proposition~\ref{prop.poles} keeps a trajectory of small azimuthal integral off the poles, so it rounds each of them in the sense of $\ell$. The two passages of a polar period contribute a full turn between them, which the pole-crossing trajectory of Theorem~\ref{theo.ell0} never makes. The closure criterion~\eqref{eq.resonance} is untouched by the jump, a rotation number changing by an integer.

Taken together, Theorems~\ref{theo.closure} and~\ref{theo.realise} are a selection rule. The closed orbits occupy a set of $\alpha$ that is dense in the admissible interval and of Lebesgue measure zero within it, and Corollary~\ref{cor.count} counts those that carry a given winding, while Corollary~\ref{cor.ell0} does the same on the boundary set $\ell = 0$, where the two branches are the libration and the rotation regime. Figure~\ref{fig.rho} draws the rotation number against $\alpha$ on three level sets, with both branches of each, so the whole selection rule fits in a single picture. Moreover, the analogy with the spacing of Landau levels is one of structure alone --- the discreteness here belongs to the set of fields admitting a closed classical orbit, and we perform no quantisation.

\begin{figure}[!tbp]
\centering
\includegraphics[width=\linewidth]{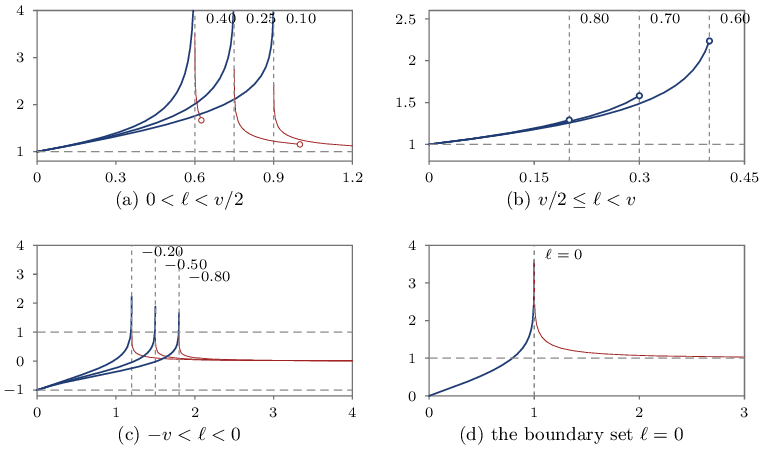}
\caption{The rotation number as a function of $\alpha$, one panel for each class of level set, with the speed fixed at $v = 1$ and the numbers along the top naming the level sets by their value of $\ell$. In every panel the fine vertical dash marks the equatorial value $\alpha = v-\ell$, the branch below it is drawn in blue and the branch above it in red, and the long horizontal dash marks $\rho = \pm 1$, the winding of a great circle. Panel (a) carries both branches, the blue one rising from $1$ without bound and the red one falling to the open circle at the circular orbit, whose winding $\sec\theta_c$ is the least the hemisphere band attains. In panel (b) the branch above the equatorial value is empty by Proposition~\ref{prop.hemisphere}, so the blue branch is the whole family and it ends at the open circle on the equatorial circle, of winding $\sqrt{v/(2\ell-v)}$, the largest such a level set admits. Panel (c) starts at $\rho = -1$, the retrograde great circle, and its red branch runs to arbitrarily large $\alpha$ with the winding decaying to zero at the rate $\pi(v/2\alpha)^2$ of Remark~\ref{rem.larmor}; its three level sets carry separatrices at $\alpha = 1.2$, $1.5$ and $1.8$, in that order from the left. Panel (d) is the boundary set, where Corollary~\ref{cor.ell0} gives $\rho$ in closed form as $4kK(k)/2\pi$ on the rotation branch and $4K(k)/2\pi$ on the libration branch, so that the same two-branch shape appears with the rotation and libration regimes in the roles the two bands play elsewhere. Each blue branch increases strictly by Theorem~\ref{theo.mono} and each red branch decreases strictly by Theorem~\ref{theo.mono2}, so each meets every rational number of its own range exactly once, which is the count of Corollary~\ref{cor.count}.} \label{fig.rho} \end{figure}

\begin{remark}[The simplest resonance on the boundary set] \label{rem.resonant} On the level set $\ell = 0$ the simplest resonance is $p/q = 1$, where the trajectory closes after a single polar period and a single turn about the axis. Solving $4kK(k) = 2\pi$ gives \beq \label{eq.k1} k = 0.792726\ldots , \qquad \text{that is} \qquad \frac{v}{\alpha} = 1.261470\ldots , \eeq so the closure occurs at the field strength $B = 2mvk/q$ with that value of $k$. \end{remark}

\begin{remark}[Charge asymmetry] \label{rem.charge} The reduction depends on the field through $\alpha^2$, so the regimes and the resonance conditions~\eqref{eq.resonance} and~\eqref{eq.resonancelib} are unchanged under $q \to -q$. The azimuthal drift~\eqref{eq.phifrozen} reverses. The closed orbits of the two charges at a common speed are therefore mirror images in the azimuth rather than distinct families, which is a sharper statement than the irreversibility noted in general in Section~\ref{sec.problem}, and it holds only on the level set $\ell = 0$. On the level sets $\ell \neq 0$ the quartic~\eqref{eq.quartic} is not invariant under $\alpha \to -\alpha$ at fixed $\ell$, and the two charges do have distinct families. \end{remark}

%%============================================================
\section{Exactness and the contact threshold} \label{sec.contact}
%%============================================================

\emph{What does the vanishing of the flux buy us in phase space?} Let us take up that question, for the answer separates this field from the monopole one at a stroke.

\begin{prop}[Exactness] \label{prop.exact} The restricted uniform field~\eqref{eq.field} has vanishing total flux through the sphere, its cohomology class in $H^2(\Sph^2;\Reals)$ is therefore zero, and it is globally exact, \beq \label{eq.potential} F = \d A , \qquad A = \frac{B}{2}\sin^2\theta\ \d\varphi , \eeq where $A$ is a smooth $1$-form on the whole of $\Sph^2$, including both poles, whose pointwise magnitude is \beq \label{eq.Anorm} \left\vert A \right\vert_g = \frac{B}{2}\, \sin\theta \ \leq\ \frac{B}{2} , \eeq with equality on the equator alone. \end{prop}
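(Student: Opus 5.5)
The plan is a direct computation in three steps, with one point needing care: smoothness of $A$ at the poles.

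\emph{Flux.} Integrate the field~\eqref{eq.field} over the sphere:
\[
\int_{\Sph^2} F = B\int_0^{2\pi}\!\!\int_0^\pi \sin\theta\cos\theta\ \d\theta\,\d\varphi = 2\pi B\left[\tfrac12\sin^2\theta\right]_0^\pi = 0 .
\]
Integration gives an isomorphism $H^2(\Sph^2;\Reals)\cong\Reals$, so the class of $F$ vanishes. De Rham's theorem then gives a global primitive. We will not need this abstract statement, because we exhibit the primitive explicitly.

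\emph{The primitive.} Differentiating gives
\[
\d A = \tfrac{B}{2}\cdot 2\sin\theta\cos\theta\ \d\theta\wedge\d\varphi ,
\]
which is~\eqref{eq.field}. This agrees with the potential $\psi_{\partial_\varphi} = -B\sin^2\theta/2$ of Proposition~\ref{prop.first}, since $\iiota_{\partial_\varphi}F = -\iiota_{\partial_\varphi}\d A + \ldots$ is consistent with $A(\partial_\varphi) = -\psi_{\partial_\varphi}$.

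\emph{Smoothness at the poles.} This is the only step that needs an argument beyond the chart, because $\d\varphi$ is singular on the polar axis. We avoid the chart entirely by writing $A$ as the pullback of an ambient form. From $X = \sin\theta\cos\varphi$ and $Y=\sin\theta\sin\varphi$ we get
\[
X\,\d Y - Y\,\d X = \sin^2\theta\ \d\varphi .
\]
Hence $A = \iota^*\!\left(\tfrac{B}{2}(X\,\d Y - Y\,\d X)\right)$. This is the pullback of a polynomial $1$-form on $\Reals^3$, so it is smooth on all of $\Sph^2$, including both poles. This representation also gives $\d A = \iota^*(B\,\d X\wedge\d Y) = F$ directly, recovering Definition~\ref{def.restricted}.

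\emph{The norm.} The inverse metric of~\eqref{eq.metric} gives $|\d\varphi|_g^2 = 1/\sin^2\theta$ away from the poles, so
\[
|A|_g = \tfrac{B}{2}\sin^2\theta\cdot\frac{1}{\sin\theta} = \tfrac{B}{2}\sin\theta .
\]
At the poles, continuity gives $|A|_g = 0$; equivalently, the ambient form $X\,\d Y - Y\,\d X$ vanishes on the axis. The bound $\sin\theta\le 1$ gives~\eqref{eq.Anorm}, with equality exactly when $\theta=\pi/2$. Here we take $B>0$; otherwise $B$ should be read as $|B|$.
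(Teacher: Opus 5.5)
Your proof is correct and follows the paper's outline: compute the flux, use the integration isomorphism $H^2(\Sph^2;\Reals)\cong\Reals$, check $\d A = F$ directly, and read the norm off the inverse metric.

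The one step you handle differently is smoothness at the poles. The paper argues in the chart that $\sin^2\theta$ vanishes to second order while $\d\varphi$ blows up only to first order. You instead write $A = \iota^*\bigl(\tfrac{B}{2}(X\,\d Y - Y\,\d X)\bigr)$ as the pullback of a polynomial form on $\Reals^3$. That turns the paper's order-counting heuristic into an actual proof, gives $\d A = F$ in one line, and shows directly that $A$ vanishes on the axis.

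Your parenthetical linking $A$ to $\psi_{\partial_\varphi}$ is garbled as written. The correct chain is $\iiota_{\partial_\varphi}F = \iiota_{\partial_\varphi}\d A = -\d\bigl(A(\partial_\varphi)\bigr)$, by Cartan's formula and the $\varphi$-invariance of $A$. Nothing in the proof depends on it, so it costs you nothing.

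Your caveat that $B$ must be read as $\vert B\vert$ when $B<0$ is a fair correction to the statement as printed.
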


\begin{proof} Indeed, the flux is \beq \label{eq.flux} \int_{\Sph^2} F = \int_0^{2\pi}\!\!\int_0^\pi B \sin\theta\cos\theta\ \d\theta\ \d\varphi = 2\pi B \int_0^\pi \sin\theta\cos\theta\ \d\theta = 0, \eeq the last integrand being antisymmetric about the equator. Since $H^2(\Sph^2; \Reals) \cong \Reals$ is detected by the flux, the class of $F$ vanishes and $F$ is exact. We verify that the form $A$ of~\eqref{eq.potential} is a potential by the computation $\d A = B\sin\theta\cos\theta\ \d\theta\wedge\d\varphi = F$. Moreover, smoothness at the poles is the assertion that $\sin^2\theta\ \d\varphi$ extends across $\theta = 0$ and $\theta = \pi$, which holds because $\sin^2\theta$ vanishes there to second order while $\d\varphi$ blows up at first order in the distance to the pole. The magnitude~\eqref{eq.Anorm} follows from the inverse metric, since $\vert A\vert^2_g = g^{\varphi\varphi}A_\varphi^2 = \sin^{-2}\theta \cdot B^2\sin^4\theta/4$. \end{proof}

It is worth setting the monopole beside this field, for the two differ exactly here. There the field is closed and carries flux, the magnetic charge, and the obstruction to a global potential is the whole content of the Dirac quantisation~\cite{lopez2025noether}. Here the flux vanishes because the ambient field is uniform and the sphere closes upon itself --- the northern and southern hemispheres see opposite fluxes. \emph{In this sense we trade a topological obstruction for a threshold.}

\begin{prop}[The Ma\~n\'e critical value] \label{prop.mane} The exact magnetic flow of the restricted uniform field~\eqref{eq.field} has Ma\~n\'e strict critical value \beq \label{eq.mane} c_0 = \inf_{\nu \in C^\infty(\Sph^2)}\ \sup_{x \in \Sph^2}\ \frac{1}{2m}\left\vert \d_x \nu - qA \right\vert^2_g = \frac{1}{2}m\alpha^2 , \eeq where $\nu$ ranges over the smooth functions on the sphere, and the infimum is attained at the constant ones. \end{prop}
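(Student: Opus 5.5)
We have to show two inequalities: an upper bound from a particular choice of $\nu$, and a lower bound that holds for every $\nu$.

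For the upper bound we take $\nu$ constant, so $\d\nu = 0$. Proposition~\ref{prop.exact} then gives $\sup_x \frac{1}{2m}|qA|_g^2 = \frac{q^2B^2}{8m} = \frac{1}{2}m\alpha^2$, using $|A|_g = \frac{B}{2}\sin\theta$ and $\alpha = qB/2m$. The supremum is attained on the equator.

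For the lower bound we show that no smooth $\nu$ does better, i.e. that $\sup_x |\d\nu - qA|_g \geq q B/2$ for every $\nu$. The natural tool is the equator $\gamma(s) = (\pi/2, s)$, $s \in [0,2\pi]$, which is a closed unit-speed loop. Along it, Cauchy--Schwarz pointwise gives
\[
|\d\nu - qA|_g \geq \left| (\d\nu - qA)(\dot\gamma) \right|
= \left| \tfrac{\d}{\d s}\nu(\gamma(s)) - \tfrac{qB}{2} \right| .
\]
Integrating over the loop, the exact term vanishes because $\oint \d\nu = 0$. Hence
\[
2\pi \sup |\d\nu - qA|_g \;\geq\; \left| \oint_\gamma (\d\nu - qA) \right| = q \left| \oint_\gamma A \right| = \pi q B ,
\]
the last value following from $A = \frac{B}{2}\sin^2\theta\,\d\varphi$ on the equator. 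Equivalently, $\oint_\gamma A$ is the flux of $F$ through the northern hemisphere, $\int_{\mathrm{north}} B\sin\theta\cos\theta\,\d\theta\,\d\varphi = \pi B$. Dividing by $2\pi$ gives $\sup|\d\nu - qA|_g \geq qB/2$, which is the matching lower bound. The infimum is therefore attained at constant $\nu$, as claimed.

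The step needing care is choosing the test loop. The equator works because it is the curve on which $|A|_g$ attains its maximum. It is also the curve along which the ratio of enclosed flux to length is largest: for a parallel at colatitude $\theta$ that ratio is $B\sin^2\theta/(2\sin\theta) = \frac{B}{2}\sin\theta$, maximal at $\theta = \pi/2$. Because this isoperimetric-type bound coincides with the supremum of $|qA|_g$, the two estimates are sharp together.

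One further point: the definition of $c_0$ in the statement carries the charge inside the norm, so I will carry $q$ consistently through both bounds, and the factor $q^2B^2/(8m) = \frac{1}{2}m\alpha^2$ then follows from the abbreviation~\eqref{eq.alpha}.
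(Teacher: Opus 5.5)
Your proof is correct. The upper bound is the paper's, but your lower bound takes a genuinely different route.

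\textbf{The paper's route.} The paper symmetrises. It averages an arbitrary $\nu$ over the rotations about the polar axis, which preserve $g$ and $A$, and uses the fibrewise convexity of $p \mapsto \vert p - qA\vert_g^2$ (Jensen) to show that the averaged function has no larger supremum. For an axially symmetric $\nu = \nu(\theta)$ the differential $\nu'(\theta)\,\d\theta$ is $g$-orthogonal to $A$, so $\vert \d\nu - qA\vert_g^2 = \nu'(\theta)^2 + \tfrac{q^2B^2}{4}\sin^2\theta$. Evaluating on the equator gives the bound.

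\textbf{Your route.} You integrate the $1$-form $\d\nu - qA$ along the unit-speed equator. Exactness kills $\oint \d\nu$, and $\oint_\gamma A = \pi B$ is the flux through a hemisphere. So the mean of $\vert \d\nu - qA\vert_g$ over the loop, and hence its supremum, is at least $\vert q\vert B/2$. You should write $\vert q\vert$ there, since $q$ may be negative; the squared bound is unaffected.

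\textbf{What each buys.} Your argument needs no averaging or convexity, and it applies verbatim to merely Lipschitz $\nu$. It is the flux-over-length (closed-curve action) lower bound, which works for any exact field, symmetric or not, once a good loop is found. The symmetry enters only in telling you which loop to take. Your remark that the equator maximises flux over length among parallels explains why this bound meets the constant-$\nu$ upper bound exactly.

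The paper's route gives a pointwise statement instead. After symmetrisation, every competitor satisfies $\vert \d\nu - qA\vert_g \geq \vert qA\vert_g$ at every point of the sphere, not merely in supremum. That makes the optimality of the constant functions visible without identifying a calibrating curve, but it depends on the axial symmetry in an essential way.
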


\begin{proof} Taking $\nu$ constant leaves $\vert qA \vert^2_g /2m$, which is $q^2B^2\sin^2\theta /8m$ by the magnitude~\eqref{eq.Anorm}, and its supremum over the sphere is $q^2B^2/8m$, attained on the equator. That number is $\tfrac{1}{2}m\alpha^2$ by the abbreviation~\eqref{eq.alpha}, so the infimum is at most $\tfrac{1}{2}m\alpha^2$.

For the reverse inequality, we note first that the rotations about the polar axis preserve both $g$ and $A$, and that $p \mapsto \vert p - qA\vert^2_g/2m$ is convex on each fibre. Averaging a given $\nu$ over that circle action therefore produces an axially symmetric $\nu$ whose supremum is no larger, by Jensen's inequality and the invariance, so we may seek the infimum among functions of $\theta$ alone. For such a $\nu$ the differential is $\nu'(\theta)\,\d\theta$, which is $g$-orthogonal to $A$, and hence \beq \label{eq.manebound} \frac{1}{2m}\left\vert \d \nu - qA \right\vert^2_g = \frac{1}{2m}\left[ \nu'(\theta)^2 + \frac{q^2B^2}{4}\sin^2\theta \right] \ \geq\ \frac{q^2B^2}{8m}\sin^2\theta . \eeq Evaluating the bound~\eqref{eq.manebound} on the equator gives $\sup_x \geq q^2B^2/8m = \tfrac{1}{2}m\alpha^2$ for every such $\nu$, which is the reverse inequality. \end{proof}

Thus the threshold we compute is the one the general theory names. With the value~\eqref{eq.mane} in hand, the criterion recalled in Section~\ref{sec.intro}~\cite{contreras2004periodic} becomes explicit for this field, and we exhibit the Liouville vector field that realises it.

\begin{theo}[The contact threshold] \label{theo.contact} The twisted symplectic form~\eqref{eq.twisted} is exact on $T^*\Sph^2$, \beq \label{eq.twistedexact} \omega_F = -\d\lambda' , \qquad \lambda' = \lcan - q\, \pi^*A , \eeq and the vector field \beq \label{eq.liouville} \tilde Y = \left( p_\theta - q A_\theta \right)\basis[p_\theta] + \left( p_\varphi - q A_\varphi \right)\basis[p_\varphi] , \eeq is a Liouville vector field for $\omega_F$. On the energy level $\Sigma_v = \left\{ h = \tfrac{1}{2}mv^2 \right\}$ of the kinetic Hamiltonian~\eqref{eq.ham}, \beq \label{eq.transverse} \d h \left( \tilde Y \right) = m v^2 - \frac{q}{m}\, g^{-1}(p,A) , \qquad \min_{\Sigma_v} \d h\left(\tilde Y\right) = m v \left( v - \left\vert\alpha\right\vert \right), \eeq so $\tilde Y$ is transverse to $\Sigma_v$, and $\Sigma_v$ is of contact type for $\lambda'$, precisely when \beq \label{eq.threshold} v > \left\vert \alpha \right\vert = \frac{\left\vert qB \right\vert}{2m} . \eeq Each $\Sigma_v$ is diffeomorphic to the unit tangent bundle $T^1\Sph^2 \cong SO(3) \cong \Reals P^3$. \end{theo}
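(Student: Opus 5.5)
The plan is to take the four assertions in order: exactness, the Liouville property, the transversality computation with its minimum, and finally the topology of the energy level.

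\emph{Exactness.} With $\omega_0 = -\d\lcan$ and $F = \d A$ from Proposition~\ref{prop.exact}, we have $\omega_F = -\d\lcan + q\,\pi^*\d A = -\d(\lcan - q\,\pi^*A)$. This gives~\eqref{eq.twistedexact} at once, and the smoothness of $A$ at the poles makes $\lambda'$ global on $T^*\Sph^2$.

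\emph{Liouville property.} In coordinates, $\lambda' = (p_\theta - qA_\theta)\,\d\theta + (p_\varphi - qA_\varphi)\,\d\varphi$, where $A_\theta = 0$ and $A_\varphi = \tfrac{B}{2}\sin^2\theta$. The vector field $\tilde Y$ is vertical, so $\iiota_{\tilde Y}(\d\theta\wedge\d p_\theta)$ picks up only the $\d p$ slot. A direct contraction gives $\iiota_{\tilde Y}\omega_F = -\lambda'$; the $\pi^*F$ term drops out because $\tilde Y$ is vertical. Cartan's formula then yields $\pounds_{\tilde Y}\omega_F = \d\iiota_{\tilde Y}\omega_F = -\d\lambda' = \omega_F$, which is the Liouville condition. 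I expect the sign conventions to be the only care needed here, and I would fix them against Remark~\ref{rem.sign}.

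\emph{Transversality.} Since $h = g^{-1}(p,p)/2m$ is fibrewise quadratic and $\tilde Y$ is the vertical field $p - qA$, we get $\d h(\tilde Y) = g^{-1}(p,p-qA)/m$. On $\Sigma_v$ we have $g^{-1}(p,p) = m^2v^2$, which produces~\eqref{eq.transverse}. For the minimum, Cauchy--Schwarz gives $g^{-1}(p,A) \le |p|\,|A|_g = mv\cdot\tfrac{B}{2}\sin\theta$, with equality when $p$ is parallel to $A$, that is, when the momentum points along $\d\varphi$. Maximising over $\theta$ puts us on the equator by~\eqref{eq.Anorm}, so $\min = mv^2 - \tfrac{q}{m}\,mv\,\tfrac{|B|}{2} = mv(v-|\alpha|)$; the sign of $qB$ is absorbed by choosing the orientation of $p$. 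Hence $\d h(\tilde Y) > 0$ everywhere on $\Sigma_v$ exactly when $v > |\alpha|$, which is~\eqref{eq.threshold}.

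\emph{Contact type.} A hypersurface transverse to a Liouville field is of contact type with contact form $\lambda'|_{\Sigma_v}$, because $\lambda'\wedge(\d\lambda')$ is nonvanishing there precisely when $\tilde Y$ is transverse. The word ``precisely'' deserves a remark. With this particular $\lambda'$, failure at $v \le |\alpha|$ follows from the vanishing of $\d h(\tilde Y)$ at an equatorial covector. For the statement that no primitive works, I would cite~\cite{contreras2004periodic} together with Proposition~\ref{prop.mane}: the energy $\tfrac{1}{2}mv^2$ exceeds $c_0 = \tfrac{1}{2}m\alpha^2$ if and only if $v > |\alpha|$. I expect this ``precisely'' direction to be the main obstacle, since the general theorem covers non-contact-type below $c_0$ but the boundary case $v = |\alpha|$ needs the explicit tangency we exhibited.

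\emph{Topology.} The energy level $\Sigma_v$ is the sphere bundle of radius $mv$ in $T^*\Sph^2$. Via $g^\sharp$ and rescaling it is diffeomorphic to $T^1\Sph^2$. The map sending a point $x$ and a unit tangent vector $u$ to the frame $(x,u,x\times u)$ identifies $T^1\Sph^2$ with $SO(3)\cong\Reals P^3$.
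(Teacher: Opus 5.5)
Your proposal is correct and follows the paper's proof essentially step for step. It has the same four steps: exactness from $F = \d A$; the contraction $\iiota_{\tilde Y}\omega_F = -\lambda'$, with the $\pi^*F$ term dropping out on the vertical field, followed by Cartan's formula; the fibrewise-quadratic evaluation of $\d h(\tilde Y)$, minimised at an azimuthal covector on the equator, where your Cauchy--Schwarz bound does the work of the paper's parametrisation $p = mv(\cos s\,\d\theta + \sin\theta\sin s\,\d\varphi)$; and the identification of $\Sigma_v$ with the sphere bundle. One small point: your discussion of other primitives lies outside the statement, which concerns $\lambda'$ alone, and the tangency you exhibit at $v = \vert\alpha\vert$ rules out only $\lambda'$, so any claim about other primitives must come from~\cite{contreras2004periodic}, as in Remark~\ref{rem.threshold}.
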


\begin{proof} Now, the canonical form is exact by construction and the twist is exact by Proposition~\ref{prop.exact}, so that $\omega_F = -\d\lcan + q\, \pi^*(\d A) = -\d\left( \lcan - q\, \pi^* A \right)$, which is the assertion~\eqref{eq.twistedexact}. For the Liouville property, let us contract the field~\eqref{eq.liouville} with the twisted form. The twist is pulled back from the base and $\tilde Y$ is vertical, so $\iiota_{\tilde Y}\pi^*(F) = 0$, while the canonical part gives $\iiota_{\tilde Y}\omega_0 = -\left(\lcan - q\pi^*A\right) = -\lambda'$. Hence \beq \label{eq.liouvilleproof} \pounds_{\tilde Y}\, \omega_F = \d\, \iiota_{\tilde Y}\, \omega_F = -\d \lambda' = \omega_F , \eeq by Cartan's formula and the closedness of $\omega_F$. Note that the fibrewise dilation about the zero section, obtained from~\eqref{eq.liouville} by deleting the potential, satisfies instead $\pounds_Y \omega_F = \omega_F - q\,\pi^*(F)$, and is therefore \emph{not} a Liouville field for the twisted form; the displacement of the centre of dilation from the zero section to the potential is what the twist demands.

For the transversality, the Hamiltonian~\eqref{eq.ham} is fibrewise quadratic, so Euler's identity gives $\d h(Y) = 2h = mv^2$ on $\Sigma_v$ for the undisplaced field, and the displacement contributes the term linear in the potential in~\eqref{eq.transverse}. Writing a covector of $\Sigma_v$ as $p = mv\left(\cos s\ \d\theta + \sin\theta \sin s\ \d\varphi\right)$, which is the general element of length $mv$, the potential~\eqref{eq.potential} gives \beq \label{eq.dhY} \d h\left(\tilde Y\right) = m v^2 - \frac{q v B}{2}\, \sin\theta\, \sin s . \eeq The second term of~\eqref{eq.dhY} attains $\mp \vert q\vert vB/2$ as $\sin\theta\sin s$ ranges over $[-1,1]$, and it does so on the equator with $p$ azimuthal, so the minimum over $\Sigma_v$ is $mv^2 - \vert q\vert vB/2 = mv(v-\vert\alpha\vert)$ by the abbreviation~\eqref{eq.alpha}, and it is attained. That minimum is positive exactly under the condition~\eqref{eq.threshold}, and $\d h(\tilde Y)$ then has no zero on $\Sigma_v$. Finally, $h$ is the fibrewise squared norm up to a constant, so $\Sigma_v$ is the bundle of covectors of fixed length, diffeomorphic to $T^1\Sph^2$ by the metric, and the unit tangent bundle of the round sphere is $SO(3)\cong \Reals P^3$. \end{proof}

\begin{remark}[The two thresholds coincide] \label{rem.threshold} The condition~\eqref{eq.threshold} is the condition $v > \vert\alpha\vert$ that Theorem~\ref{theo.ell0} identified as the boundary between the rotation and the libration regimes, and we arranged that agreement nowhere in the derivation of either. Indeed, the dynamical threshold arises from the turning points of the pendulum, and the contact threshold from the largest value the potential attains on the sphere. Both answer to the same ratio $\vert\alpha\vert/v$, the one Proposition~\ref{prop.scaling} isolates, because the potential attains its maximum exactly on the equator --- the very circle the trapped trajectories leave alone. The condition~\eqref{eq.threshold} is stated in $\vert\alpha\vert$, so the convention of Section~\ref{sec.closed} plays no part in it. Below the threshold the Liouville field~\eqref{eq.liouville} runs tangent to $\Sigma_v$ somewhere on that circle. No other Liouville field succeeds there either, since that criterion is an equivalence~\cite{contreras2004periodic} and Proposition~\ref{prop.mane} places those levels below $c_0$. \end{remark}

\begin{corollary} \label{cor.reeb} For $v > \vert\alpha\vert$ the Reeb vector field of the contact form $\lambda'\vert_{\Sigma_v}$ generates the electromagnetic flow of Definition~\ref{def.emcurve} reparametrised by arc length, and the closed trajectories of Theorem~\ref{theo.closure} --- those at the values of $\alpha$ solving the resonance~\eqref{eq.resonance} --- are closed orbits of that Reeb field, one on each level set whose rotation number is rational, up to the choice of starting point. \end{corollary}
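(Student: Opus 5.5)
The argument rests on the standard fact that on a regular energy level where a primitive $\lambda'$ of $\omega_F$ restricts to a contact form, the Reeb field of $\lambda'\vert_{\Sigma_v}$ spans the characteristic line field $\ker(\omega_F\vert_{\Sigma_v})$. The Hamiltonian field $X_h$ of~\eqref{eq.ham} for $\omega_F$ also spans that line field, because $\iiota_{X_h}\omega_F = \d h$ vanishes on $T\Sigma_v$ and $\omega_F$ has rank two less on the hypersurface. Hence, for $v > \vert\alpha\vert$, which is where Theorem~\ref{theo.contact} makes $\lambda'\vert_{\Sigma_v}$ contact, the Reeb field $R$ is a pointwise multiple $R = f\,X_h$ with $f$ nowhere zero. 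The plan is to compute $f$ explicitly, show it is a positive constant on $\Sigma_v$, and then transport the closed trajectories of Theorem~\ref{theo.closure}.

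For the normalisation, impose $\lambda'(R) = 1$, so that $f = 1/\lambda'(X_h)$. Since $\lambda' = -\iiota_{\tilde Y}\omega_F$ by the computation in the proof of Theorem~\ref{theo.contact}, we get $\lambda'(X_h) = \omega_F(X_h,\tilde Y) = \d h(\tilde Y)$ (signs to be checked against the convention $\iiota_{X_h}\omega_F = \d h$). This quantity is positive on $\Sigma_v$ by~\eqref{eq.transverse} exactly above the threshold.

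The expected obstacle is that $\d h(\tilde Y) = mv^2 - \tfrac{q}{m}g^{-1}(p,A)$ is not constant on $\Sigma_v$. The Reeb flow is therefore a reparametrisation of the electromagnetic flow with time change $\d t_R = \d h(\tilde Y)\,\d\tau$, rather than a constant rescaling. I would first verify this directly from~\eqref{eq.dhY}. The corollary's phrase ``reparametrised by arc length'' then has to be read as ``the same oriented unparametrised curves, each of which the electromagnetic flow already traverses at constant speed $v$''. Since $f > 0$ and is bounded away from zero and infinity on the compact $\Sigma_v$, the Reeb field and $X_h$ have the same oriented orbits. Their projections to $\Sph^2$ are therefore the electromagnetic curves of Definition~\ref{def.emcurve}, by Sternberg's lift as recalled at~\eqref{eq.twisted}, with the sign fixed as in Remark~\ref{rem.sign}.

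Finally, a closed orbit of $X_h$ on $\Sigma_v$ is a closed Reeb orbit, because a positive time change of a periodic orbit is periodic with period $\oint \d h(\tilde Y)\,\d\tau$. By Theorem~\ref{theo.closure}, at values of $\alpha$ satisfying~\eqref{eq.resonance} the trajectory closes on the sphere after $q$ polar periods. Its lift $(\gamma,m\,g^\flat\dot\gamma)$ closes as well, since the velocity is a function of $(\theta,\dot\theta,\varphi)$ and returns with the point. Because $\rho$ depends only on the level set by Theorem~\ref{theo.rotation}, every trajectory of a level set with rational $\rho$ closes. The level set is a torus, namely the $S^1$-orbit of one trajectory, so we get one orbit up to the choice of starting point, which is the stated count.
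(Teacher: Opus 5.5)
Your proposal is correct and follows the paper's route. $R$ and $X_h$ both span $\ker(\omega_F\vert_{\Sigma_v})$, so they are proportional, and a positive time change carries closed orbits to closed orbits.

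The one point where you depart from the paper is the proportionality factor, and there you are right and the paper is not. The paper asserts that the reparametrisation is constant on $\Sigma_v$ ``because $h$ is fibrewise quadratic''. Fibrewise quadraticity only makes $\lcan(X_h) = 2h = mv^2$ constant. The factor $\lambda'(X_h) = mv^2 - q\,A(\pi_*X_h) = \d h(\tilde Y)$ also carries the potential term. Along a trajectory, using $\sin^2\theta\,\dot\varphi = \ell + \alpha\sin^2\theta$ from~\eqref{eq.ell}, it reads $\lambda'(X_h) = m\left(v^2 - \alpha\ell - \alpha^2\sin^2\theta\right)$. This varies with latitude on a single orbit, and with $\ell$ across $\Sigma_v$.

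Three things follow:
\begin{itemize}
\item The Reeb flow traces the electromagnetic curves with non-constant speed.
\item ``Reparametrised by arc length'' can only mean equality of oriented unparametrised curves, which is how you read it.
\item The Reeb period of a closed orbit is $\oint \lambda'(X_h)\,\d\tau$, not a fixed multiple of $q\,T_\theta$.
\end{itemize}
Your version of the argument is the sound one, and the closed-orbit conclusion survives for exactly the reason you give.

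There is a minor slip in your last step: the level set $\{\ell, v\}$ is not always a single torus. On the branch above the equatorial value, $P \geq 0$ on two disjoint bands $w_- \leq x^2 \leq w_+$, one in each hemisphere. These are exchanged by $\theta \mapsto \pi - \theta$, which preserves $g$, $F$ and $\ell$. Each component carries closed orbits with the same $\rho$. The correct count is therefore one per component, or one up to starting point and that reflection. The statement itself shares this imprecision.
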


\begin{proof} The Hamiltonian vector field $X_h$ of the kinetic Hamiltonian~\eqref{eq.ham} with respect to $\omega_F$ spans the characteristic line field of $\Sigma_v$, and so does the Reeb field $R$ of $\lambda'\vert_{\Sigma_v}$, since both are annihilated by $\omega_F\vert_{\Sigma_v}$. They are therefore proportional, and the two flows differ by a reparametrisation which is constant on $\Sigma_v$ because $h$ is fibrewise quadratic. Closed orbits correspond under a reparametrisation, and Theorem~\ref{theo.closure} supplies them for every $v > \vert\alpha\vert$, which is the very range on which Theorem~\ref{theo.contact} furnishes the contact form. \end{proof}

The existence results we cited in Section~\ref{sec.intro} are of Weinstein-conjecture flavour, and establish closed orbits on hypersurfaces of this kind without producing one. Here Theorem~\ref{theo.closure} produces them on $T^1\Sph^2 \cong \Reals P^3$ and names the values of $\alpha$ at which they occur, on the whole energy range where the contact structure is available. Therein lies what this example contributes.

%%============================================================
\section{Closing remarks} \label{sec.closing}
%%============================================================

We asked when a charge confined to a sphere retraces its path in a uniform ambient field. It closes when a resonance holds between the period of its polar oscillation and the rate of its azimuthal drift.

Let us set out what we have proved. The equations of motion follow from the Levi-Civita connection and the Lorentz endomorphism (Proposition~\ref{prop.eom}). The azimuthal Killing vector generates the Noether--Ikawa first integral $\ell$ (Proposition~\ref{prop.first}) which, with the speed, reduces the problem to a quadrature (Proposition~\ref{prop.quadrature}). A trajectory reaches a pole exactly on the level set $\ell = 0$ and is otherwise confined to a band (Proposition~\ref{prop.poles}). Over one polar period the azimuth advances by a complete elliptic integral of the third kind, which we reduce to Legendre form (Theorem~\ref{theo.rotation}). A trajectory closes exactly when that advance is a rational multiple of $2\pi$, after $q$ polar periods and $p$ azimuthal turns (Theorem~\ref{theo.closure}). That criterion is the manuscript's principal result, and it constrains one dimensionless ratio, since the rotation number depends on the half-cyclotron frequency, the azimuthal integral and the speed through $\alpha/v$ and $\ell/v$ alone (Proposition~\ref{prop.scaling}). The equator is a double root of the polar motion exactly when $(\ell+\alpha)^2 = v^2$, that is at the equatorial value $\alpha = v-\ell$, which carries a separatrix generalising the one of the boundary case when $\ell < v/2$ and the equatorial great circle otherwise (Proposition~\ref{prop.separatrix}). Below that value the rotation number rises strictly from the winding of a great circle (Theorems~\ref{theo.realise} and~\ref{theo.mono}), the derivative being the closed-form combination~\eqref{eq.twist} of the polar period and a second moment (Theorem~\ref{theo.twist}). Above it the band lies within one hemisphere, where an involution exchanging the turning points leaves the quadrature invariant (Lemma~\ref{lem.involution}) and the rotation number falls strictly (Theorem~\ref{theo.mono2}) to the winding $\sec\theta_c$ of a circular orbit or to zero (Theorem~\ref{theo.realise2}). Each rational winding is therefore carried by two values of $\alpha$, by one, or by none, and we say which (Corollary~\ref{cor.count}).

Moreover, on the boundary set $\ell = 0$ the motion is a Jacobi elliptic function of a single modulus and splits into rotation, libration and separatrix (Theorem~\ref{theo.ell0}), where the criterion becomes a resonance governed by a function strictly increasing onto the positive reals (Corollary~\ref{cor.ell0}). The field carries no flux and is globally exact (Proposition~\ref{prop.exact}), its Ma\~n\'e strict critical value is $\tfrac{1}{2}m\alpha^2$ (Proposition~\ref{prop.mane}), and above that value the energy levels are of contact type, with the Liouville field the fibrewise dilation displaced to the potential (Theorem~\ref{theo.contact}). The closed trajectories there are explicit closed Reeb orbits (Corollary~\ref{cor.reeb}).

The constitutive input is short. We declare each piece where we use it. We postulate the Lorentz force law in the form~\eqref{eq.eom}, with the sign convention of Definition~\ref{def.endo} and the twist~\eqref{eq.twisted} tied to it (Remark~\ref{rem.sign}). The field configuration is an input, the restriction of a uniform ambient field to a rigid sphere (Definition~\ref{def.restricted}). We model neither back-reaction nor radiation.

The selection rule is explicit. It carries a scale. Indeed, Remark~\ref{rem.resonant} names the ratio at which the simplest resonance closes, and with it the field a laboratory would need at a given charge, mass and speed, while the condition~\eqref{eq.resonance} fixes the higher ones. In this sense each rational winding selects one or two values of $\alpha$ out of a continuum, so a closed orbit stands isolated among its precessing neighbours, while the closed orbits together occupy a set of $\alpha$ that is dense and of Lebesgue measure zero.

We compute the rotation number of this problem in closed form, we prove the closure criterion it yields, we count the values of $\alpha$ that meet it, and we show that the contact threshold coincides with the dynamical one (Remark~\ref{rem.threshold}). Section~\ref{sec.intro} names what we build on --- the integrability of this very flow~\cite{dragovic2023gyroscopic,bolsinov2025integrability}, Sternberg's lift~\cite{sternberg1977minimal,guillemin1990symplectic}, Ikawa's extension of Noether's theorem~\cite{ikawa2003hamiltonian,ikawa2007motion,ikawa2010motion}, the definition of the Lorentz endomorphism and its application to a monopole~\cite{lopez2025noether}, the characterisation of Gauss--Landau--Hall curves as those of constant geodesic curvature~\cite{barros2005gauss}, and the existence of closed magnetic geodesics on the two-sphere~\cite{ginzburg2007periodic,usher2008floer,schneider2009alexandrov,asselle2016existence}.

One matter remains open. It concerns the trajectories below the threshold~\eqref{eq.threshold}. Those levels fail to be of contact type, by the equivalence of Contreras, Macarini and Paternain~\cite{contreras2004periodic}, yet Theorem~\ref{theo.closure} still supplies closed orbits among them. What those orbits are, absent a Reeb field to generate them, is the question about sub-critical dynamics at which the construction stops.

Three directions lead out of those boundaries, and we name them without pursuing them. The comparison with the monopole can be made structural rather than methodological, by asking which features of a first integral survive the passage from an exact to a non-exact field. The construction is not tied to the sphere. The other constant-curvature surfaces carry restricted ambient fields of their own. Finally, the premetric formulation of electromagnetism~\cite{hehl2003foundations}, in which the field is prior to the metric, would let the constitutive relation vary while the topology stays fixed. Each would be a manuscript of its own.

\bibliographystyle{plain} \bibliography{em_curves_sphere}

\end{document}